\numberwithin{equation}{section}
\newtheorem{theorem}{Theorem}[section]
\newtheorem{lemma}[theorem]{Lemma}
\newtheorem{proposition}[theorem]{Proposition}
\newtheorem{cor}[theorem]{Corollary}
\newtheorem{rem}[theorem]{Remark}
\newcommand{\ind}{\mathbf{1}}
\newcommand{\R}{\mathbb{R}}
\newcommand{\Z}{\mathbb{Z}}
\newcommand{\N}{\mathbb{N}}
\renewcommand{\tilde}{\widetilde}
\renewcommand{\hat}{\widehat}
\newcommand{\cG}{{\ensuremath{\mathcal G}} }
\newcommand{\cN}{{\ensuremath{\mathcal N}} }
\newcommand{\cI}{{\ensuremath{\mathcal I}} }
\newcommand{\bP}{{\ensuremath{\mathbf P}} }
\newcommand{\bE}{{\ensuremath{\mathbf E}} }
\newcommand{\bL}{{\ensuremath{\mathbf L}} }
\DeclareMathSymbol{\leqslant}{\mathalpha}{AMSa}{"36} 
\DeclareMathSymbol{\geqslant}{\mathalpha}{AMSa}{"3E} 
\DeclareMathSymbol{\eset}{\mathalpha}{AMSb}{"3F}     
\newcommand{\dd}{\,\text{\rm d}}             
\newcommand{\sumtwo}[2]{\sum_{\substack{#1 \\ #2}}} 
\newcommand{\bbE}{{\ensuremath{\mathbb E}} }
\newcommand{\bbP}{{\ensuremath{\mathbb P}} }
\newcommand{\bbZ}{{\ensuremath{\mathbb Z}} }
\newcommand{\ga}{\alpha}
\newcommand{\gb}{\beta}
\newcommand{\gd}{\delta}
\newcommand{\gep}{\varepsilon}       
\newcommand{\gD}{\Delta}
\newcommand{\go}{\omega}
\newcommand{\gl}{\lambda}
\def\captionfont@{\footnotesize}
\def\captionheadfont@{\scshape}
\long\def\@makecaption#1#2{%
  \vspace{2mm}
  \setbox\@tempboxa\vbox{\color@setgroup
    \advance\hsize-6pc\noindent
    \captionfont@\captionheadfont@#1\@xp\@ifnotempty\@xp
        {\@cdr#2\@nil}{.\captionfont@\upshape\enspace#2}%
    \unskip\kern-6pc\par
    \global\setbox\@ne\lastbox\color@endgroup}%
  \ifhbox\@ne 
    \setbox\@ne\hbox{\unhbox\@ne\unskip\unskip\unpenalty\unkern}%
  \fi
  \ifdim\wd\@tempboxa=\z@ 
    \setbox\@ne\hbox to\columnwidth{\hss\kern-6pc\box\@ne\hss}%
  \else 
    \setbox\@ne\vbox{\unvbox\@tempboxa\parskip\z@skip
        \noindent\unhbox\@ne\advance\hsize-6pc\par}%
\fi
  \ifnum\@tempcnta<64 
    \addvspace\abovecaptionskip
    \moveright 3pc\box\@ne
  \else 
    \moveright 3pc\box\@ne
    \nobreak
    \vskip\belowcaptionskip
  \fi
\relax
}
\def\writefig#1 #2 #3 {\rlap{\kern #1 truecm
\raise #2 truecm \hbox{#3}}}
\newcommand{\tf}{\textsc{f}}
\newcommand{\M}{\textsc{M}}
\newcommand{\ui}{\underline{i}}
\newcommand{\uj}{\underline{j}}
\newcommand{\um}{\underline{m}}
\newcommand{\sort}{\mathtt{s}}
\newcommand{\mbeta}{{\mathtt m}_\gb}
\newcommand{\const}{{\mathtt c}}
\newcommand{\consta}{{\mathtt a}}
\newcommand{\constb}{{{\mathtt b}}}
\begin{document}
\title[Disorder relevance at marginality]{Disorder relevance at marginality \\
and critical point shift}

\author{Giambattista Giacomin}
\address{
  Universit{\'e} Paris Diderot (Paris 7) and Laboratoire de Probabilit{\'e}s et Mod\`eles Al\'eatoires (CNRS),
U.F.R.                Math\'ematiques, Case 7012 (site Chevaleret)
             75205 Paris cedex 13, France
}
\email{giacomin\@@math.jussieu.fr}
\author{Hubert Lacoin}
\address{
  Universit{\'e} Paris Diderot (Paris 7) and Laboratoire de Probabilit{\'e}s et Mod\`eles Al\'eatoires (CNRS),
U.F.R.                Math\'ematiques, Case 7012 (site Chevaleret)
             75205 Paris cedex 13, France
}
\email{lacoin\@@math.jussieu.fr}
\author{Fabio Lucio Toninelli}
\address{CNRS and 
Laboratoire de Physique, ENS Lyon, 46 All\'ee d'Italie, 
69364 Lyon, France
}
\email{fabio-lucio.toninelli@ens-lyon.fr}
\date{\today}

\begin{abstract}
  Recently the renormalization group predictions on the effect of
  disorder on pinning models have been put on mathematical grounds.
  The picture is particularly complete if the disorder is {\sl
    relevant} or {\sl irrelevant} in the Harris criterion sense: the
  question addressed is whether quenched disorder leads to a critical
  behavior which is different from the one observed in the pure, {\sl
    i.e.} annealed, system. The Harris criterion prediction is based
  on the sign of the specific heat exponent of the pure system, but it
  yields no prediction in the case of vanishing exponent. This case is
  called {\sl marginal}, and the physical literature is divided on
  what one should observe for marginal disorder, notably there is no
  agreement on whether a small amount of disorder leads or not to a
  difference between the critical point of the quenched system and the
  one for the pure system.  In \cite{cf:GLTmarg} we have proven that
  the two critical points differ at marginality of at least
  $\exp(-c/\gb^4)$, where $c>0$ and $\gb^2$ is the disorder variance,
  for $\gb \in (0,1)$ and Gaussian IID disorder.  The purpose of this
  paper is to improve such a result: we establish in particular that
  the $\exp(-c/\gb^4)$ lower bound on the shift can be replaced by
  $\exp(-c(b)/\gb^b)$, $c(b)>0$ for  $b>2$ ($b=2$ is the known upper bound and
  it is the result claimed in \cite{cf:DHV}), and we deal with very
  general distribution of the IID disorder variables.  The proof
  relies on coarse graining estimates and on a fractional
  moment--change of measure argument based on multi-body potential
  modifications of the law of the disorder. 
  \\
  \\
  2000 \textit{Mathematics Subject Classification: 82B44, 60K35, 82B27, 60K37
  }
  \\
  \\
  \textit{Keywords: Disordered Pinning Models, Harris Criterion,
    Marginal Disorder, Many-body interactions }
\end{abstract}

\maketitle

\section{introduction}

\subsection{Relevant, irrelevant and marginal disorder}
\label{sec:RIM}
The renormalization group approach to disordered statistical mechanics
systems introduces a very interesting viewpoint on the role of
disorder and on whether or not the critical behavior of a quenched
system coincides with the critical behavior of the corresponding
{\sl pure} system.  The Harris criterion \cite{cf:Harris} is based on such an
approach and it may be summarized in the following way: if the
specific heat exponent of the pure system is negative, then a {\sl
  small} amount of disorder does not modify the critical properties of
the pure system ({\sl irrelevant disorder regime}), but if the
specific heat exponent of the pure system is positive then even an
arbitrarily small amount of disorder may lead to a quenched critical
behavior different from the critical behavior of the pure system.
 
A class of disordered models on which such ideas have been applied by
several authors is the one of pinning models (see {\sl e.g.}
\cite{cf:FLNO,cf:DHV} and the extensive bibliography in
\cite{cf:Book,cf:GLTmarg}).  The reason is in part due to the
remarkable fact that pure pinning models are 
exactly solvable models for which, by tuning a parameter, one can
explore all possible values of the specific heat exponent
\cite{cf:Fisher}. As a matter of fact, the validity of Harris
criterion for pinning models in the physical literature finds a rather
general agreement. Moreover, for the pinning models the
renormalization group approach goes beyond the critical properties and
yields a prediction also on the location of the critical point.
 
Recently, the Harris criterion predictions for pinning models have
been put on firm grounds in a series of papers \cite{cf:Ken,cf:T_cmp,cf:DGLT,cf:AZ2} and some of these
rigorous results go even beyond the predictions. Notably in
\cite{cf:GT_cmp} it has been shown that disorder has a  smoothing effect in 
this class of models (a fact that is not a consequence of the Harris
criterion and that does not find unanimous agreement in the physical literature).
 
However, a substantial amount of the literature on disordered pinning
and Harris criterion revolves around a specific issue: what happens if
the specific heat exponent is zero ({\sl i.e.} at {\sl marginality})?
This is really a controversial issue in the physical literature,
started by the disagreement in the conclusions of \cite{cf:FLNO} and
\cite{cf:DHV}.  In a nutshell, the disagreement lies on the fact that
the authors of \cite{cf:FLNO} predict that disorder is irrelevant at
marginality and, notably, that quenched and annealed critical points
coincide at small disorder, while the authors of \cite{cf:DHV} claim
that disorder is relevant for arbitrarily small disorder, leading to a
critical point shift of the order of $\exp( -c \gb^{-2})$ ($c>0$) for
$\gb \searrow 0$ ($\gb^2$ is the disorder variance).

Recently we have been able to prove that, at marginality, there is a
shift of the critical point induced by the presence of disorder
\cite{cf:GLTmarg}, at least for
Gaussian disorder. We have actually proven that the shift is at least
 $\exp( -c \gb^{-4})$. The purpose of the present work is to go
beyond \cite{cf:GLTmarg} in three aspects:
\begin{enumerate}
\item We want to deal with rather general disorder variables: we are going to
assume only that the exponential moments are finite.
\item We are going to improve the bound $\exp( -c \gb^{-b})$, $b=4$,
  on the critical point shift, to $b=2+\epsilon$ ($\epsilon>0$
  arbitrarily small, and $c=c(b)$).
\item We will prove our results for a generalized class of pinning
  models. Pinning models are based on discrete renewal processes,
  characterized by an inter-arrival distribution which has power-law
  decay (the exponent in the power law parametrizes the model and
  varying such parameter one explores the different types of critical
  behaviors we mentioned before). The generalized pinning model is
  obtained by relaxing the power law decay to  regularly varying decay, that is (in
  particular) we allow {\sl logarithmic correction} to power-law
  decay.  This, in a sense, allows zooming into the marginal case and
  makes clearer the interplay between the underlying renewal and the
  disorder variables.
\end{enumerate}

\subsection{The framework and some basic facts}
In mathematical terms, disordered pinning models are one-dimensional
Gibbs measures with random one-body potentials and reference measure
given by the law of a renewal process.  Namely, pinning models are
built starting from a (non-delayed, discrete) renewal process $\tau=
\{ \tau_n \}_{n=0,1, \ldots}$, that is a sequence of random variables
such that $\tau_0=0$ and $\{ \tau_{j+1}-\tau_j\}_{j=0,1, \ldots}$ are
independent and identically distributed with common law (called {\sl
  inter-arrival distribution}) concentrated on $\N:=\{1,2, \ldots\}$
(the law of $\tau$ is denoted by $\bP$): we will actually assume that
such a distribution is regularly varying of exponent $1+\ga$, {\sl
  i.e.}
\begin{equation}
\label{eq:K}
K(n) \, :=\, \bP(\tau_1=n)=\frac{L(n)}{n^{1+\ga}}, \ \ \text{ for } \ n=1,2, \ldots,
\end{equation}
where $\ga \ge 0$ and $L(\cdot)$ is a slowly varying function, that is $L: (0 , \infty) \to (0, \infty)$
is measurable and it satisfies $\lim_{x \to \infty}L(c x)/L(x)=1$ for every $c>0$. 
There is actually no loss of generality in assuming $L(\cdot)$ smooth
and we will do so  (we refer to 
 \cite{cf:RegVar} for properties of slowly varying functions).

\smallskip
\begin{rem}
\label{rem:SVF}\rm
Examples of slowly varying functions include {\sl logarithmic slowly
  varying functions} (this is probably not a standard terminology, but
it will come handy), that is the positive measurable functions that
behave like $\consta (\log(x))^{\constb}$ as $x\to \infty$, with
$\consta>0$ and $\constb \in \R$. These functions are just a
particular class of slowly varying functions, but it is already rich
enough to appreciate the results we are going to present.  Moreover we
will say that $L(\cdot)$ is trivial if $\lim_{x \to \infty} L(x)=c \in
(0, \infty) $.  The general statements about slowly varying function
 that we are going to use can be verified in an elementary
way for logarithmic slowly varying functions; readers who feel
uneasy with the general theory may safely focus on this restricted
class.
\end{rem}
\smallskip

Without loss of generality we assume that $\sum_{n \in \N} K(n)=1$
(actually, we have implicitly done so when we have introduced $\tau$).
This does not look at all like an innocuous assumption at first,
because it means that $\tau$ is {\sl persistent}, namely $\tau_j<
\infty$ for every $j$, while if $\sum_n K(n) < 1$ then $\tau$ is {\sl
  terminating}, that is $\vert \{j : \, \tau_j < \infty\}\vert <
\infty$ a.s.. It is however really a harmless assumption, as explained
in detail in \cite[Ch.~1]{cf:Book} and recalled in the caption of
Figure~\ref{fig:RW}.

The disordered potentials  are introduced by means
of the IID sequence $\{ \go_n\}_{n=1,2, \ldots}$ of random variables
(the {\sl charges}) such that $\M( t) := \bbE[ \exp(t\go_1)]< \infty$
for every $t$. Without loss of generality we may and do assume 
that $\bbE[\go_1]=0$ and $\text{var}_\bbP(\go_1)=1$.

The model we are going to focus on 
is defined by the sequence of  probability measures
$\bP_{N,\go, \gb, h}= \bP_{N,\go}$, indexed by $N \in \N$, defined by
\begin{equation}
\label{eq:Gibbs}
\frac{\dd \bP_{N, \go}}{\dd \bP} (\tau) \, :=\, 
\frac1{Z_{N, \go}} \exp \left(\sum_{n=1}^N 
\left( \gb \go_n +h - \log \M(\gb) \right) \gd_n 
\right)  \gd_N \, ,
\end{equation}
where $\gb \ge 0$, $h \in \R$, $\gd_n $ is the indicator 
function that $n= \tau_j$ for some $j$ and 
$Z_{N , \go}$ is the partition function, that is the normalization constant.
It is practical to look at $\tau$ as a random subset of $\{0\} \cup \N$,
so that, for example, $\gd_n = \ind_{n \in \tau}$.

\smallskip

\begin{rem}
\rm
We have chosen $\M(t)< \infty$ for every $t$
only for ease of exposition. The results we present directly
generalize to the case in which $\M(t_0)+\M(-t_0)< \infty$ for
a $t_0>0$. In this case it suffices to look at the system
only for $\gb \in [0, t_0)$.  
\end{rem}
\smallskip

Three comments on \eqref{eq:Gibbs} are in order:
\smallskip

\begin{enumerate}
\item we have introduced the model in a very general set-up which is, possibly,
 not too  intuitive, but it  allows
a unified approach to a large class of models \cite{cf:Fisher,cf:Book}. It may
be useful at this stage to look at Figure~\ref{fig:RW} that
illustrates the random walk pinning model; 
\item the presence of $-\log \M (\gb)$ in the exponent is just 
a parametrization of the problem that comes particularly handy and it
can be absorbed by redefining $h$;
\item the presence of $\gd_N$ in the right-hand side means that we are looking
only at trajectories that are {\sl pinned} at the endpoint of the system. 
This is just a boundary condition and we may as well remove 
$\gd_N$ for the purpose of the results that we are going to state,
since it is well known for example that the free energy of this system
is independent of the boundary condition ({\sl e.g.} \cite[Ch.~4]{cf:Book}). Nonetheless, at a technical level
it is more practical to work with the system pinned at the endpoint.
\end{enumerate}

\smallskip

\begin{figure}[hlt]
\begin{center}
\leavevmode
\epsfxsize =14.5 cm
\psfragscanon
\psfrag{0}[c][l]{\small $0$}
\psfrag{Sn}[c][l]{\small $S_n$}
\psfrag{n}[c][l]{\small $n$}
\psfrag{t0}[c][l]{\small $\tau_0$}
\psfrag{t1}[c][l]{\small $\tau_1$}
\psfrag{t2}[c][l]{\small $\tau_2$}
\psfrag{t3}[c][l]{\small $\tau_3$}
\psfrag{t4}[c][l]{\small $\tau_4$}
\psfrag{o1}[c][l]{\small $\go_1$}
\psfrag{o2}[c][l]{\small $\go_2$}
\psfrag{o3}[c][l]{\small $\go_3$}
\psfrag{o4}[c][l]{\small $\go_4$}
\psfrag{o5}[c][l]{\small $\go_5$}
\psfrag{o6}[c][l]{\small $\go_6$}
\psfrag{o7}[c][l]{\small $\go_7$}
\psfrag{o8}[c][l]{\small $\go_8$}
\psfrag{o9}[c][l]{\small $\go_9$}
\psfrag{oa}[c][l]{\small $\go_{10}$}
\psfrag{ob}[c][l]{\small $\go_{11}$}
\psfrag{oc}[c][l]{\small $\go_{12}$}
\psfrag{od}[c][l]{\small $\go_{13}$}
\psfrag{oe}[c][l]{\small $\go_{14}$}
\psfrag{of}[c][l]{\small $\go_{15}$}
\epsfbox{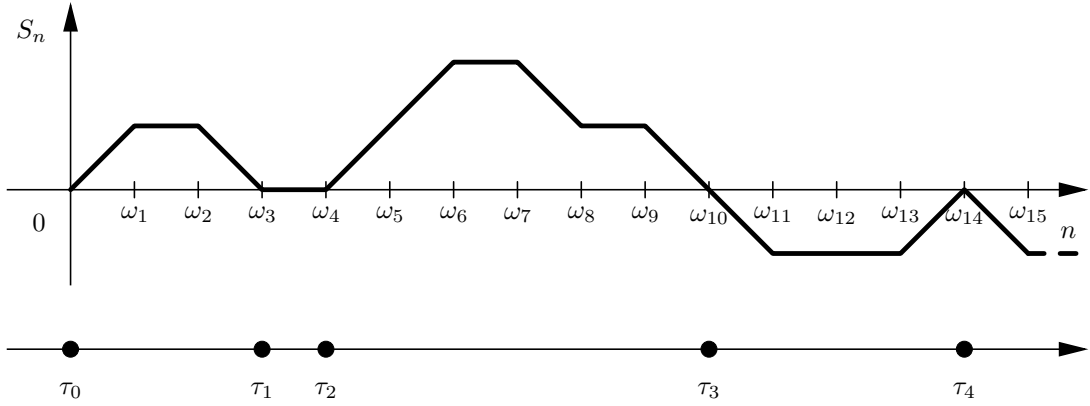}
\caption{\label{fig:RW} A symmetric random walk trajectory with
  increments taking values in $\{-1,0,+1\}$ is represented as a
  directed random walk. On the $x$-axis, the {\sl defect line}, there
  are quenched charges $\go$ that are collected by the walk when it
  hits the charge location.  The energy of a trajectory just depends
  on the underlying renewal process $\tau$. For the case in the
  figure, $K(n):=\bP(\tau_1=n)\sim const. n^{-3/2}$ for $n \to \infty$
  ({\sl e.g.} \cite[App.~A.6]{cf:Book}).  Moreover the walk is
  recurrent, so $\sum_n K(n)=1$.  There is however another
  interpretation of the model: the charges may be thought of as sticking
  to $S$, not viewed this time as a directed walk. If the walk hits
  the origin at time $n$, the energy is incremented by $(\beta\go_n+h-
\log \M(\gb))$.  This
  interpretation is particularly interesting for a three-dimensional
  symmetric walk in $\Z^3$: the walk may be interpreted as a polymer
  in $d=3$, carrying charges on each monomer, and the monomers
  interact with a point in space (the origin) via a charge-dependent
  potential.  Also in this case $K(n)\sim const. n^{-3/2}$, but the
  walk is transient so that $\sum_n K(n) <1$ ({\sl e.g.} 
  \cite[App.~A.6]{cf:Book}).  It is rather easy to see that any model based on a
  terminating renewal with inter-arrival distribution $K(\cdot)$ can
  be mapped to a model based on the persistent renewal with
  inter-arrival distribution $K(\cdot)/\sum_n K(n)$ at the expense of
  changing $h$ to $h+\log \sum_n K(n)$.  For much more detailed
  accounts on the (very many!) models that can be directly mapped to
  pinning models we refer to \cite{cf:Fisher,cf:Book}.  }
\end{center}
\end{figure}

The (Laplace) asymptotic behavior of $Z_{N , \go}$ shows a phase transition.
In fact, if we define the free energy as
\begin{equation}
\label{eq:fe}
\tf (\gb, h) \, :=\, \lim_{N \to \infty} \frac 1N \bbE \log Z_{N, \go},
\end{equation}
where the limit exists since the sequence $\{ \bbE \log Z_{N, \go} \}_N$
is super-additive (see {\sl e.g.} \cite[Ch.~4]{cf:Book}, where it is also
proven that $\tf (\gb, h)$ coincides with the $\bbP(\dd \go)$-almost sure
limit of $(1/N)\log Z_{N, \go}$, so that $\tf(\gb, h)$ is effectively
the {\sl quenched} free energy), then it is easy to see that 
$\tf(\gb, h)\ge 0$: in fact,
\begin{multline}
\tf(\gb, h)\, \ge \, \limsup_{N \to \infty}
\frac 1N \bbE \log 
\bE\left[
\exp \left(\sum_{n=1}^N 
\left( \gb \go_n +h - \log \M(\gb) \right) \gd_n 
\right)   \ind_{\tau_1=N}\right] 
\\ =\, \lim_{N \to \infty} \frac 1N \left( (h -\log \M(\gb))+ \log \bP(\tau_1=N)
\right)\, =\, 0.
\end{multline}
The transition we are after is captured by
setting
\begin{equation}
h_c(\gb) \, :=\, \sup \{ h:\, \tf(\gb,h)=0\} \, =\, \inf \{ h:\, \tf(\gb,h)>0\} ,
\end{equation}
where the equality is a direct consequence of the fact that
$\tf(\gb, \cdot)$ is non-decreasing (let us point out also that 
the free energy is  a continuous function of both arguments, as it
follows from standard convexity arguments). 
We have the bounds (see point (2) just below for the proof)
\begin{equation}
\label{eq:febounds}
\tf(0,h-\log \M (\gb)) \, \le \, \tf(\gb, h) \, \le \, \tf(0, h)\, , 
\end{equation}
which directly imply
\begin{equation}
\label{eq:hcbounds}
h_c(0) \, \le \, h_c (\gb) \, \le \, h_c(0)+\log \M(\gb)\, .
\end{equation}
Two important observations are:
\begin{enumerate}
\item the bounds in \eqref{eq:febounds} are given in terms of
$\tf(0, \cdot)$, that is the free energy of the non-disordered system, which 
can be solved analytically ({\sl e.g.} \cite{cf:Fisher,cf:Book}).
In particular $h_c(0)=0$
for every $\ga$ and every choice of $L(\cdot)$ (in fact
$h_c(0)= -\log \sum_n K(n)$ and we are assuming that $\tau$ is 
persistent).
We will keep in our formulae $h_c(0)$ both because
we think that it makes them more readable and because
they happen to be true also if $\tau$ were a terminating renewal).
\item The upper bound in \eqref{eq:febounds}, that entails the lower bound
in \eqref{eq:hcbounds},
follows directly from the standard {\sl annealed bound}, that is 
$\bbE \log Z_{N ,\go}\le  \log \bbE Z_{N ,\go}$, and by observing that
 the {\sl annealed partition function} $\bbE Z_{N ,\go}$ coincides with
 the partition function of the quenched model with $\gb=0$, that is simply the
 non-disordered case
 (of course, the presence of the term
 $-\log \M (\gb)$ in \eqref{eq:Gibbs} finds here its motivation).
 The lower bound in \eqref{eq:febounds}, entailing the upper bound
in \eqref{eq:hcbounds}, follows by a convexity argument too
(see \cite[Ch.~5]{cf:Book}).
\end{enumerate}
\smallskip

\begin{rem}\rm
It is rather easy (just take the derivative of the free energy
with respect to $h$) to realize that the phase transition we have
outlined in this model is a localization transition: when $h< h_c(\gb)$,
for $N$ large, the random set $\tau$ is {\sl almost empty},
while when $h> h_c(\gb)$ it is of size $const. N$ (in fact $const.= \partial_h\tf (\gb ,h)$).
Very sharp  results have been obtained on this issue: we refer to 
\cite[Ch.s~7 and 8]{cf:Book} and references therein. 
\end{rem}

\subsection{The Harris criterion}
\label{sec:Harris}
We can now make precise the Harris criterion predictions
mentioned in \S~\ref{sec:RIM}. As we have seen, in our case
 the pure (or {\sl annealed}) model is just
the non-disordered model, and the latter is exactly solvable, so that
the critical behavior is fully understood, notably \cite[Ch.~2]{cf:Book}
\begin{equation}
\label{eq:feexp}
\lim_{a \searrow 0}\frac{\log \tf(0, h_c(0)+a)}{\log a}\,=\,
\max \left( 1, \frac 1{\ga} \right)\, =:\, \nu_{\text{pure}}. 
\end{equation}  
The specific heat exponent of the pure model (that is the critical
exponent associated to 
 $1/\partial_h^2 \tf (0, h)$) is
computed analogously and it is equal to $2-\nu_{\text{pure}}$.
Therefore the Harris criterion predicts {\sl disorder relevance} for
$\ga>1/2$ ($2-\nu_{\text{pure}}>0$) and {\sl disorder irrelevance} for
$\ga<1/2$ ($2-\nu_{\text{pure}}<0$) at least for $\gb$ below a
threshold, with $\ga=1/2$ as marginal case.  So, what one expects is
that $\nu_{\text{pure}}=\nu_{\text{quenched}}$ (with obvious
definition of the latter) if $\ga<1/2$ for $\gb$ {\sl not too large}
and $\nu_{\text{pure}}\neq\nu_{\text{quenched}}$ if $\ga>1/2$ (for
every $\gb>0$).

While a priori the Harris criterion attacks the issue of critical behavior,
it turns out that  a Harris-like approach
in the pinning context  \cite{cf:FLNO,cf:DHV} yields information also 
on $h_c(\gb)$, namely that $h_c(\gb)=h_c(0)$
if $\ga<1/2$ and $\gb$ again not too large, while 
$h_c(\gb)>h_c(0)$ as soon as $\gb>0$. For the sequel it is 
 important to recall some aspects of the approaches in  \cite{cf:FLNO,cf:DHV}.

\smallskip

The main focus of \cite{cf:FLNO,cf:DHV},
 is on the case $\ga=1/2$ and trivial $L(\cdot)$.
In fact they focus on
 the interface wetting problem in two dimensions, that boils down to
 directed random walk pinning in $(1+1)$-dimensions. In this framework
 the conclusions of the two papers differ:  \cite{cf:FLNO}
 stands for $h_c(\gb)=h_c(0)$ for $\gb $ small, while  in \cite{cf:DHV}
one finds an argument in favor of 
\begin{equation}
\label{eq:DHV}
h_c(\gb)-h_c(0) \approx \exp( -c \gb^{-2}),
\end{equation}
as $\gb \searrow 0$ (with $c>0$ an explicit constant).

We will not go into the details 
of these arguments, but we wish to point out why, in these arguments,  $\ga=1/2$
plays such a singular role. 
\smallskip

\begin{enumerate}
\item In the approach of \cite{cf:FLNO} an expansion of the free
  energy to all orders in the variance of $\exp(\gb\go_1 -\log
  \M(\gb))$, that is $(\M(2\gb)/\M^2(\gb))-1 \stackrel{\gb \searrow
    0}\sim \gb^2$, is performed. In particular (in the Gaussian case)
\begin{equation}
\label{eq:FLNO}
\tf(\gb, h_c(0) + a)\, =\ \tf(0, h_c(0) + a) - \frac 12 \left( \exp(\gb^2)-1
\right) \left( \partial_a \tf(0, h_c(0) + a) \right)^2 + \ldots
\end{equation}
and, when $L(\cdot)$ is trivial, 
$\partial_a \tf(0, h_c(0) + a) $ behaves like (a constant times)
$ a^{(1-\ga)/\ga}$ for $\ga \in (0,1)$ (this is detailed
for example in \cite{cf:GBreview}) 
and like a constant 
for $\ga\ge1$.  This suggests that the
expansion \eqref{eq:FLNO} cannot work for $\ga>1/2$, because the
second-order term, for $a \searrow 0$, becomes larger than the first
order term ($a^{\max(1/\ga, 1)}$).  The borderline case is $\ga=1/2$,
and trust in such an expansion for $\ga =1/2$ may follow from the fact
that $\gb$ can be chosen small. 
In conclusion, an argument along the lines of \cite{cf:FLNO} predicts disorder
relevance if and only if $\ga>1/2$ (if $L(\cdot)$ is trivial).
\item The approach of \cite{cf:DHV} instead is based on the analysis
  of $\text{var}_{\bbP} (Z_{N, \go})$ at the pure critical point
  $h_c(0)$.  This directly leads to studying the random set $\tilde
  \tau:= \tau \cap \tau^\prime$ (it appears in the computation in a
  very natural way, we call it {\sl intersection renewal}), with
  $\tau^\prime$ an independent copy of $\tau$ (note that $ \tilde \tau
  $ is still a renewal process): in physical terms, one is looking at
  the {\sl two-replica system}.  It turns out that, even if we have
  assumed $\tau$ persistent, $ \tilde \tau$ may not be: in fact, if
  $L(\cdot)$ is trivial, then $\tilde \tau$ is persistent if and only
  if $\ga \ge 1/2$ (see just below for a proof of this fact). And
  \cite{cf:DHV} predicts disorder relevance if and only if $\ga\ge
  1/2$.
\end{enumerate}
\smallskip
Some aspects of these two approaches were made rigorous mathematically: The expansion of the free energy \eqref{eq:FLNO} was proved to hold for $\alpha<1/2$ in \cite{cf:GTirrel}, and the second moment analysis of \cite{cf:DHV} was used to prove {\sl disorder irrelevance} in \cite{cf:Ken,cf:T_cmp}, making it difficult to choose between the predictions.
\smallskip

We can actually find in the physical literature a number of authors standing for one or the other
of the two predictions in the marginal case $\ga=1/2$ (the reader can
find a detailed review of the literature in  \cite{cf:GLTmarg}).
 But we would like to go a step 
farther and we point out that,
by generalizing naively the  approach in  \cite{cf:DHV}, one is  tempted 
to conjecture disorder relevance (at arbitrarily small $\gb$)
if and only if the intersection renewal is recurrent. Let us 
 make this condition explicit: while one does not have direct access
to the inter-arrival distribution of $\tilde \tau$, it is straightforward, by independence, 
to write the renewal function of $\tilde \tau$:
\begin{equation}
\bP(n \in \tilde \tau) \, =\, \bP( n \in \tau)^2.
\end{equation}
It is then sufficient to use the basic (and general) renewal process
formula $\sum_n \bP( n \in \tilde \tau) = (1- \sum_n \bP( \tilde
\tau_1=n))^{-1}$ to realize that $\tilde \tau$ is persistent if and
only if $\sum_n \bP( n \in \tilde \tau)=\infty$.  Since under our
assumptions for 
$\ga\in(0,1)$ \cite[Th.~B]{cf:Doney}
\begin{equation}
\label{eq:Doney}
\bP(n \in \tau) \stackrel{n \to \infty} \sim \frac{\ga \sin (\pi \ga)}{\pi} \frac1{n^{1-\ga}L(n)},
\end{equation} 
we easily see that the intersection renewal $\tilde \tau$ is persistent
for $\ga >1/2$ and terminating if $\ga<1/2$ (the case $\ga=0$ can be treated too \cite{cf:RegVar},
and $\tilde \tau$ is terminating).
In  the $\ga=1/2$ case the argument we have
just outlined yields
\begin{equation}
\label{eq:persist}
\tau\cap \tau^\prime \ \text{ is persistent } \ \Longleftrightarrow  \ 
\sum_n \frac 1{n \, L(n)^2} \, = \, \infty.
\end{equation}
Roughly, this is telling us that the intersection renewal $\tilde
\tau$ is persistent up to a slowly varying function $L(x)$ diverging
{\sl slightly less} than $(\log x)^{1/2}$.  In particular, as we have
already pointed out, if $L(\cdot)$ is trivial, $\tilde \tau$ is
persistent.

Let us remark that
the expansion
\eqref{eq:FLNO} has been actually made rigorous in \cite{cf:GTirrel}, but 
only under the assumption that the intersection renewal $\tilde \tau$
is terminating (that is,  $\constb>1/2$ for logarithmic slowly varying functions).

\smallskip

\begin{rem}
\label{rem:Ltilde}
\rm
In view of the argument we have just outlined, we introduce 
the increasing function  $\tilde L : (0, \infty) \to (0, \infty)$ defined as
\begin{equation}
\label{eq:Ltilde}
\tilde L (x) \, :=\, \int_0^x \frac1{(1+y)L(y)^2}\dd y,
\end{equation}
that is going to play a central role from now on.
Let us point out that, by \cite[Th.~1.5.9a]{cf:RegVar}, $\tilde L(\cdot)$ is a slowly varying function 
which has the property
\begin{equation}
\label{eq:Ltilde-prop}
\lim_{x \to \infty}\tilde L(x) L(x)^2\, =\, +\infty,
\end{equation}
which is a non-trivial statement when $L(\cdot)$ does not diverge at infinity. 
Of course we are most interested in the fact that, 
when $\ga=1/2$,  $\tilde L(x)$ diverges as $x \to \infty$
 if and only if the intersection renewal $\tilde \tau$ is recurrent ({\sl cf.}
 \eqref{eq:persist}).
 For completeness we point out  that 
$\tilde L(\cdot)$ is  a special type of slowly varying function
(a {\sl den Haan function} \cite[Ch.~3]{cf:RegVar}), but we will not exploit
the further regularity properties stemming out of this observation. 
\end{rem}

\smallskip


\subsection{Review of the rigorous  results}
\label{sec:review-rs}
Much mathematical work has been done on disordered pinning models
recently. Let us start with a quick review of the $\ga \neq 1/2$ case:
\begin{itemize}
\item If $\ga>1/2$ disorder relevance is established. The positivity of 
$h_c(\gb)-h_c(0)$ (with precise asymptotic estimates as $\gb \searrow 0$)
 is proven \cite{cf:DGLT,cf:AZ}. It has been also shown that disorder has a smoothing
 effect on the transition and the quenched free energy critical exponent differs
 from the annealed one
  \cite{cf:GT_cmp}.
\item If $\ga<1/2$ disorder irrelevance is established, along with 
a number of sharp results saying in particular that, if $\gb$ is not too large, 
$h_c(\gb)=h_c(0)$
and that the free energy critical behavior coincides in the quenched and annealed
framework \cite{cf:Ken,cf:T_cmp,cf:GTirrel,cf:AZ2}.
\end{itemize}
\medskip

In the case $\ga=1/2$ results are less complete. 
Particularly relevant for the sequel are the next two results that we state as theorems.
The first one is taken from \cite{cf:Ken} (see also \cite{cf:GT_cmp})
and uses the auxiliary function $a_0(\cdot)$ defined by
\begin{equation}
a_0(\gb)\, :=\, C_1 L\left( \tilde L^{-1}\left(C_2/\gb^2\right) \right) 
\Big / \left( \tilde L^{-1}\left(C_2/\gb^2\right) \right) ^{1/2}\,  \ \ \text{ with }C_1>0 \text{ and }
C_2>0,
\end{equation}
if $\lim_{x \to \infty}\tilde L(x)=\infty$, and 
$a_0(\cdot)\equiv 0$ otherwise.

\medskip

\begin{theorem}
\label{th:1/2UB}
Fix $\go_1 \sim \cN(0,1)$, $\ga=1/2$ and choose a slowly varying
function $L(\cdot)$.  Then there exists $\gb_0>0$ and $a_1>0$ such
that for every $\epsilon>0$ there exist $C_1$ and $C_2>0$ such that
\begin{equation}
1-\epsilon \, \le \, 
\frac{\tf(\gb, a) }{\tf(0, a)}\, \le \, 1 \ \ \text { for } \ a> a_0(\gb), \, a \le a_1
\text { and } \gb\le \gb_0. 
\end{equation}
This implies for $\gb \le \gb_0$
\begin{equation}
\label{eq:hcUB}
h_c(\gb)- h_c(0) \, \le \, a_0(\gb).
\end{equation}
\end{theorem}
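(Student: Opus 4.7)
The upper bound $\tf(\gb,a)/\tf(0,a)\leq 1$ is nothing but the annealed bound \eqref{eq:febounds}, so the entire work lies in the matching lower bound. The plan is a second moment method applied at a volume $N$ tuned to the pure-system correlation length $\xi(a):=1/\tf(0,a)$. Squaring $Z_{N,\go}$, integrating out $\go$ factor by factor, and using the elementary identity $\bbE[\exp(\gb\go_n(\gd_n+\gd_n'))]=\M(\gb)^{\gd_n+\gd_n'}\exp(\gl(\gb)\gd_n\gd_n')$ where $\gl(\gb):=\log \M(2\gb)-2\log \M(\gb)=\gb^2(1+o(1))$ as $\gb\searrow 0$, one arrives at the key identity
\begin{equation*}
\frac{\bbE[Z_{N,\go}^2]}{(\bbE Z_{N,\go})^2}\,=\,\bE^{\otimes 2}_{N,a}\!\left[\exp\bigl(\gl(\gb)\,\vert \tau\cap\tau'\cap[1,N]\vert\bigr)\right],
\end{equation*}
with $\tau'$ an independent copy of $\tau$ and $\bE^{\otimes 2}_{N,a}$ the pure two-replica Gibbs measure at parameter $h=a$, both replicas constrained to contain $N$. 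Note that $\bbE Z_{N,\go}$ equals the pure partition function, so its exponential growth rate is exactly $\tf(0,a)$.

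The intersection renewal $\tilde\tau:=\tau\cap\tau'$ satisfies $\bP(n\in\tilde\tau)=\bP(n\in\tau)^2$, so \eqref{eq:Doney} at $\ga=1/2$ gives $\bP(n\in\tilde\tau)\sim c/(n L(n)^2)$, and summing yields $\bE[\vert\tilde\tau\cap[1,N]\vert]\sim c'\tilde L(N)$ with $\tilde L$ the slowly varying function of \eqref{eq:Ltilde}. The core technical step is to upgrade this first-moment control to an exponential moment bound: for $N$ of order $\xi(a)$, one shows
\begin{equation*}
\bE^{\otimes 2}_{N,a}\!\left[\exp\bigl(\gl(\gb)\,\vert\tilde\tau\cap[1,N]\vert\bigr)\right]\,\leq\, 2
\end{equation*}
whenever $\gb^2\tilde L(N)\leq c_0$ for a small absolute constant $c_0$; the tilt by $h=a$ contributes only a bounded multiplicative factor, since $a\cdot N\sim a\xi(a)=O(1)$ along the pure critical scaling. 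Once in place, Paley-Zygmund then yields $\bbP(Z_{N,\go}\geq \tfrac12\bbE Z_{N,\go})\geq 1/8$, and by super-additivity of $\bbE\log Z_{N,\go}$ together with concentration this upgrades to $\tf(\gb,a)\geq \tf(0,a)-C/N$.

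It remains to match the scales. The pure-model Tauberian asymptotics applied to the characteristic equation $\sum_n K(n)(1-e^{-\tf n})=1-e^{-h}$ at $\ga=1/2$ give $\tf(0,a)\asymp a^2/L(1/a)^2$, hence $\xi(a)\asymp L(1/a)^2/a^2$. Choosing $N=\tilde L^{-1}(C_2/\gb^2)$ saturates the second-moment constraint, and equating $N\sim\xi(a)$ (using slow variation to identify $L(1/a)$ with $L(\tilde L^{-1}(C_2/\gb^2))$ up to constants) produces exactly the critical amplitude $a_0(\gb)=C_1 L(\tilde L^{-1}(C_2/\gb^2))/\sqrt{\tilde L^{-1}(C_2/\gb^2)}$ of the statement; the error $C/N$ is then $o(\tf(0,a))$, giving $\tf(\gb,a)/\tf(0,a)\geq 1-\epsilon$ for $\gb$ small and $a\in(a_0(\gb),a_1]$. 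The principal obstacle is the exponential moment bound of the middle step: one has to control the upper tail of $\vert\tilde\tau\cap[1,N]\vert$ under the two-replica pure measure, and it is precisely the slowly varying function $\tilde L$ (not $L$ itself) that appears, because $\tilde L$ encodes the borderline persistence \eqref{eq:persist} of the intersection renewal which is the mechanism responsible for marginality.
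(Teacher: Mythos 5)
This theorem is not proved in the paper: it is imported verbatim from Alexander \cite{cf:Ken} and Toninelli \cite{cf:T_cmp}, so your proposal can only be measured against those proofs. Your skeleton is the right one and coincides with Alexander's starting point: the exact two-replica identity $\bbE[Z_{N,\go}^2]/(\bbE Z_{N,\go})^2=\bE^{\otimes 2}_{N,a}[\exp(\gl(\gb)\vert\tau\cap\tau'\cap[1,N]\vert)]$ with $\gl(\gb)=\log\M(2\gb)-2\log\M(\gb)$ is correct, the appearance of $\tilde L$ through $\bbE\vert\tilde\tau\cap[1,N]\vert\sim c'\tilde L(N)$ is the right mechanism, and the matching of the second-moment constraint $\gb^2\tilde L(N)\lesssim 1$ with the pure correlation length $\xi(a)\asymp L(1/a)^2/a^2$ does reproduce exactly the amplitude $a_0(\gb)$.

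The genuine gap is the step ``Paley--Zygmund, then super-additivity together with concentration upgrades this to $\tf(\gb,a)\ge\tf(0,a)-C/N$.'' This quantitatively fails. Gaussian concentration for the $\gb\sqrt N$-Lipschitz function $\go\mapsto\log Z_{N,\go}$ gives fluctuations of order $\gb\sqrt N$, so combining it with $\bbP(Z_N\ge\tfrac12\bbE Z_N)\ge 1/8$ only yields $\bbE\log Z_N\ge\log\bbE Z_N-c\gb\sqrt N$, i.e. an error $c\gb/\sqrt N$ per unit length. But the second-moment bound confines you to $N\lesssim\xi(a)$ (beyond that scale both replicas have positive contact density and the exponential moment of $\vert\tilde\tau\vert$ blows up), and there $N\tf(0,a)=O(1)$ while $\gb\sqrt N\gg 1$; e.g. for trivial $L$ one has $1/N\approx e^{-C_2/\gb^2}\lll\gb^2$. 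So the error term swamps $\tf(0,a)$ and the bound is vacuous; Paley--Zygmund alone also cannot control how negative $\log Z_N$ is on the complementary event of probability up to $7/8$. This is precisely the difficulty the cited proofs are built to overcome: Alexander does it by a coarse-graining over blocks of size $\xi(a)$ in which the second moment controls \emph{local} partition functions with pinned endpoints, extracting free energy from the positive density of good blocks; Toninelli instead bypasses Paley--Zygmund entirely via Gaussian interpolation (replica coupling), reducing $\tf(\gb,a)-\tf(0,a)$ to the very two-replica exponential moment you isolated --- which is also why the theorem is stated for $\go_1\sim\cN(0,1)$, an assumption your route would not need but his integration-by-parts argument does. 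Without one of these two mechanisms (or a substitute), your argument does not reach the conclusion $\tf(\gb,a)\ge(1-\epsilon)\tf(0,a)$; note also that the exponential-moment bound under the pinned, $h$-tilted two-replica measure, which you correctly flag as the core estimate, is itself left unproved.
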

\medskip

It is worth pointing out that Theorem~\ref{th:1/2UB} yields
an upper bound matching  \eqref{eq:DHV} when $L(\cdot)$
is trivial. 

The next result addresses instead the lower bound on 
$h_c(\gb)-h_c(0)$ and it is taken from \cite{cf:GLTmarg}:

\medskip

\begin{theorem}
\label{th:1/2LB}
Fix $\go_1 \sim \cN(0,1)$ and  $\ga=1/2$.
If $L(\cdot) $ is trivial, then 
$h_c(\gb)-h_c(0)>0$ for every $\gb>0$ and 
there exists $C>0$ such that
\begin{equation}
\label{eq:beta4}
h_c(\gb)-h_c(0)\, \ge \, \exp\left(-C/ \gb^4 \right),
\end{equation}
for $\gb \le 1$.
\end{theorem}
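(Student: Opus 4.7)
The plan is the fractional moment--change of measure strategy. By Jensen's inequality one has $\bbE \log Z_{N, \go} \leq \eta^{-1} \log \bbE[Z_{N,\go}^\eta]$ for any $\eta \in (0,1)$, so it suffices to exhibit $\eta$ such that $\limsup_N \bbE[Z_{N,\go,\gb, h}^\eta] < \infty$ at $h = h_c(0) + \exp(-C/\gb^4)$ with $C$ large; this will force $\tf(\gb, h) = 0$ and hence $h_c(\gb) \geq h_c(0) + \exp(-C/\gb^4)$.

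To obtain such a fractional moment bound, coarse grain as follows. Choose a block length $\ell = \ell(\gb)$ with $\log \ell$ of order $1/\gb^4$ and write $N = M \ell$. Partition $\{1, \ldots, N\}$ into blocks $B_i = \{(i-1)\ell+1, \ldots, i\ell\}$, $i = 1, \ldots, M$. For $I \subseteq \{1, \ldots, M\}$ with $M \in I$, let $Z^{(I)}_{N,\go}$ denote the contribution to $Z_{N, \go}$ from trajectories $\tau$ that visit precisely the blocks $\{B_i\}_{i \in I}$, so that $Z_{N,\go} = \sum_{I \ni M} Z^{(I)}_{N,\go}$; sub-additivity of $x \mapsto x^\eta$ then gives
\begin{equation*}
\bbE[Z_{N,\go}^\eta] \leq \sum_{I \ni M} \bbE\big[(Z^{(I)}_{N,\go})^\eta\big].
\end{equation*}

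The heart of the proof is the per-term estimate via a change of measure on each block. For each $i \in I$ introduce a nonnegative density $f_i(\go_{B_i})$ with $\bbE[f_i] = 1$ and set $g_I = \prod_{i \in I} f_i$. Hölder's inequality then gives
\begin{equation*}
\bbE\big[(Z^{(I)}_{N,\go})^\eta\big] \leq \bbE\big[g_I^{-\eta/(1-\eta)}\big]^{1-\eta} \cdot \bbE\big[g_I \, Z^{(I)}_{N,\go}\big]^\eta.
\end{equation*}
One wants to choose $f_i$ so that the cost term is bounded uniformly and the gain term produces a factor strictly less than $1$ per visited block. A naive one-body Gaussian shift yields only an $\exp(-C/\gb^2)$ shift. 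To reach $\gb^{-4}$ I therefore use a two-body Gaussian tilt designed to attack the two-replica intersection renewal $\tau \cap \tau'$ that drives disorder relevance at marginality, namely
\begin{equation*}
f_i(\go_{B_i}) = \frac{1}{\cN} \exp\Big(-\gl \sum_{\substack{j,k \in B_i \\ j<k}} J_\ell(k-j)\, \go_j \go_k\Big),
\end{equation*}
with $\gl$ of order $\gb^2$ and $J_\ell(m)$ proportional to $\bP(m \in \tau)\ind_{m \leq \ell}$, normalized so that its Hilbert--Schmidt norm on $B_i$ is of order $1$. Since $\ga = 1/2$ and $L(\cdot)$ is trivial, $\bP(m \in \tau) \sim c\, m^{-1/2}$ by \eqref{eq:Doney}, so $\sum_{m \leq \ell} \bP(m \in \tau)^2 \sim c' \log \ell$; this dictates $\gl^2 \log \ell = O(1)$ and thus $\log \ell \sim 1/\gb^4$.

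The closing step is to verify both bounds. A direct Gaussian computation shows $\bbE[f_i^{-\eta/(1-\eta)}]$ is finite and uniformly bounded under the above normalization. For the gain term, Fubini exchanges the $\go$- and $\tau$-expectations in $\bbE[g_I Z^{(I)}_{N, \go}]$, and integrating out the Gaussian charges modifies the renewal weights: each pair $(j, k)$ of renewal points visited within the same block picks up a correction of the form $(1 - \gl \gb^2 J_\ell(k-j))$, summing over a visited block to a reduction of order $\exp(-c \gl \gb^2 \log \ell)$, a constant strictly less than $1$. Combined with the renewal factor $K$ governing the gaps between visited blocks and the factor $e^{h - h_c(0)}$ per visited site, the sum over $I$ becomes a geometric series that converges provided $(h - h_c(0))\,\ell$ is small, which is guaranteed by $h - h_c(0) \leq \exp(-C/\gb^4)$ for $C$ large. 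The chief obstacle is the joint control of the quadratic tilt: identifying the correct kernel $J_\ell$, verifying the Gaussian integrability of the cost, and tracking precisely how the tilt deforms the annealed renewal weights must all conspire to yield the same scale $\log \ell \sim 1/\gb^4$; the logarithmic divergence $\sum_m \bP(m \in \tau)^2 \sim \log \ell$ is the arithmetic signature of $\ga = 1/2$ and the source of the $\gb^{-4}$ (rather than $\gb^{-2}$) exponent.
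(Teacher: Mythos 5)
Your strategy is essentially the paper's: Theorem~\ref{th:1/2LB} is imported from \cite{cf:GLTmarg}, and the coarse graining at scale $\log \ell \sim \gb^{-4}$, the fractional-moment/H\"older step, and the change of measure by a two-body Gaussian tilt with kernel built from the renewal function $\bP(m\in\tau)$ (normalized so that the tilting exponent has variance $\gl^2\log\ell = O(1)$) is exactly the $q=2$, Gaussian specialization of the machinery of Sections~\ref{sec:CGfmmc}--5, with the logarithmic divergence $\sum_{m\le \ell}\bP(m\in\tau)^2\sim \log\ell$ correctly identified as the source of the exponent $4$. The only caveat worth recording is that the per-block gain is not a deterministic factor $\exp(-c\gl\gb^2\log\ell)$ but the renewal average of $\exp\bigl(-c\gl\gb^2\sum_{j<k}J_\ell(k-j)\gd_j\gd_k\bigr)$, whose exponent fluctuates on the scale of its mean (the normalized local time converges in law to $|Z|$, $Z\sim\cN(0,1)$, which has positive density at the origin), so turning it into a small constant $\eta$ requires the convergence-in-law and second-moment arguments of Lemmas~\ref{th:fromCE}--\ref{th:square} (and the separate treatment of short excursions as in Proposition~\ref{th:eta}) rather than a pointwise estimate --- precisely the content of the quoted proof.
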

\medskip

It should be pointed out that \cite{cf:GLTmarg} has been worked out
for trivial $L(\cdot)$, addressing thus precisely the controversial
issue in the physical literature. 
The case of $\lim_{x \to
  \infty}L(x)=0$ has been treated \cite{cf:AZ} (see \cite{cf:DGLT} for
a weaker result) where $h_c(\gb)-h_c(0)>0$ has been established with
an explicit but not optimal bound.
We point out also that a result analogous to Theorem~\ref{th:1/2LB}
has been proven  for a hierarchical version
of the pinning model (see
\cite{cf:GLTmarg} for the case of the
  hierarchical model proposed in \cite{cf:DHV}). 
\medskip

The understanding of the marginal case is therefore 
still partial and the following problems are clearly open:
\smallskip

\begin{enumerate}
\item What is really the behavior of $h_c(\gb)-h_c(0)$ in the marginal case? In particular,
for $L(\cdot)$ trivial, is \eqref{eq:DHV} correct?
\item Going beyond the case of $L(\cdot)$ trivial: is the 
 two-replica condition \eqref{eq:persist}
 equivalent to disorder relevance for small $\gb$?
\item What about non-Gaussian disorder? It should be pointed out that
  a part of the literature focuses on Gaussian disorder, notably
  Theorem~\ref{th:1/2UB}, but this choice appears to have been made in
  order to have more concise proofs (for example, the results in
  \cite{cf:DGLT} are given for very general disorder distribution).
  Theorem~\ref{th:1/2LB} instead exploits a technique that is more
  inherently Gaussian and generalizing the approach in
  \cite{cf:GLTmarg} to non-Gaussian disorder is not straightforward.
\end{enumerate}

\smallskip

As we explain in the next subsection, in this paper 
we will give {\sl almost} complete answers to 
questions (1), (2) and (3). In addition we will prove a monotonicity result for the phase diagram of pinning model which holds in great generality.

\subsection{The main result}
Our main result requires the existence of $\epsilon\in (0, 1/2]$ such that
\begin{equation}
\label{eq:Lassumption}
L(x) \, =\, o\left( (\log (x))^{(1/2)-\epsilon}\right) \ \ \text{ as } \ x \to \infty,
\end{equation}
that is $ \lim_{x \to \infty} L(x) (\log (x))^{-(1/2)+\epsilon}=0$. Of
course, if $L(\cdot)$ vanishes at infinity, \eqref{eq:Lassumption}
holds with $\epsilon=1/2$.  Going back to the slowly varying function
$\tilde L(\cdot)$, {\sl cf.} Remark~\ref{rem:Ltilde}, we note that,
under assumption \eqref{eq:Lassumption}, we have
\begin{equation}
\label{eq:Lmore}
\tilde L(x) \stackrel{x \to \infty} {\gg} \int_2^x \frac 1{y ( \log y)^{1-2\epsilon}} \dd y
\, =\, \frac 1{2 \epsilon } (\log x)^{2 \epsilon} - \frac 1{2 \epsilon } (\log 2)^{2 \epsilon}.
\end{equation} 
Therefore, under assumption \eqref{eq:Lassumption},
we have that if $q>(2\epsilon)^{-1}$ then
\begin{equation}
\lim_{x \to \infty} \frac{\tilde L(x)}{L(x)^{2/(q-1)}}\, =\, \infty,
\end{equation}
which guarantees that given $q>(2\epsilon)^{-1}$ (actually, in the sequel
$q \in \N$) and $A>0$,
\begin{equation}
\label{eq:Delta}
\gD (\gb;q,A)\, :=\, \left( \inf\left\{ n\in \N :\, {\tilde L(n)}/{L(n)^{2/(q-1)}}\ge A \gb^{-2q/(q-1)}
  \right\} \right)^{-1}\,  
\end{equation}
is greater than $0$ for every $\gb>0$.
\smallskip

Our main result is

\medskip

\begin{theorem}
\label{th:main}
Let us assume that $\ga=1/2$ and that \eqref{eq:Lassumption}
holds for some $\epsilon\in (0, 1/2]$. 
For every $\gb_0$ and every integer $q>(2\epsilon)^{-1}$
there exists $A>0$ such that
\begin{equation}
h_c(\gb) -h_c(0) \, \ge \, \gD (\gb;q,A)\, >\, 0, 
\end{equation}
for every $\gb\le \gb_0$.
\end{theorem}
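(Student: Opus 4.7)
The plan is to combine the fractional moment method of \cite{cf:DGLT,cf:GLTmarg} with coarse graining on blocks of size $\ell\asymp\gD(\gb;q,A)^{-1}$ and a $q$-body change of measure on the disorder, generalizing the Gaussian two-body tilt introduced in \cite{cf:GLTmarg}. Set $\gga:=(q-1)/q\in(0,1)$, so that the Hölder dual exponent is exactly $q$. The goal is to show that at $h:=h_c(0)+\gD(\gb;q,A)$ the fractional moment $\bbE[Z_{N,\go}^\gga]$ stays bounded uniformly in $N$. By concavity of the logarithm $\bbE\log Z_{N,\go}\le \gga^{-1}\log \bbE[Z_{N,\go}^\gga]$, this forces $\tf(\gb,h)=0$ and hence $h_c(\gb)\ge h$, which is the claim.

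First, partition $\{1,\ldots,N\}$ into adjacent blocks $B_1,\ldots,B_{N/\ell}$ and rewrite $Z_{N,\go}$ as a sum over the set $I$ of blocks that meet $\tau$. This isolates the contribution of each visited block $B_i$ into a block partition function $\hat Z_{B_i,\go}$ (accounting for the first and last visit of $\tau$ to $B_i$) times renewal-type jump weights between consecutive blocks in $I$. The inequality $(\sum a_I)^\gga\le \sum a_I^\gga$ yields
\begin{equation*}
\bbE[Z_{N,\go}^\gga]\, \le\, \sum_{I}\prod_{i\in I}\bbE\bigl[\hat Z_{B_i,\go}^\gga\bigr]\cdot W_I^\gga,
\end{equation*}
where $W_I$ collects the renewal jump weights. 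A standard renewal-type comparison (as in \cite{cf:DGLT}) then reduces the proof to showing $\bbE[\hat Z_{B,\go}^\gga]\le \gd$ for a small $\gd=\gd(\gb,q)$, uniformly over all blocks $B$ of length $\ell$.

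Let $B^{(q)}$ denote the set of $q$-tuples $\ui=(i_1,\ldots,i_q)$ with $i_1<\cdots<i_q$ in $B$. Introduce a tilted law $\tilde\bbP_B$ on the block disorder $(\go_n)_{n\in B}$ via
\begin{equation*}
\frac{\dd\tilde\bbP_B}{\dd\bbP}(\go)\, =\, \frac1{\gL}\,\exp\biggl(-\gl \sum_{\ui\in B^{(q)}} J_{\ui}\,\go_{i_1}\cdots\go_{i_q}\biggr),
\end{equation*}
with symmetric kernel $J_{\ui}\propto \bP(i_1,\ldots,i_q\in\tau)$, small parameter $\gl>0$, and $\gL$ the normalization. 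Hölder's inequality with exponents $q$ and $q/(q-1)$ yields
\begin{equation*}
\bbE[\hat Z_{B,\go}^\gga]\, \le\, \bigl(\bbE[(\dd\bbP/\dd\tilde\bbP_B)^{q-1}]\bigr)^{1/q}\,\bigl(\tilde\bbE\,\hat Z_{B,\go}\bigr)^\gga.
\end{equation*}
A cumulant expansion, after truncating $|\go_n|\le M(\ell)$ so that all cumulants are controlled using $\M(t)<\infty$, bounds the first factor by $\exp(c\gl^2\|J\|_2^2)$; a first-order expansion of the tilt inside $\bE[e^{\gb\sum_{\tau\cap B}\go_n}]$ gives
\begin{equation*}
\tilde\bbE\,\hat Z_{B,\go}\, \le\, \bE\biggl[\hat Z_{B,0}\,\exp\biggl(-c'\gl\gb^q\sum_{\ui\in B^{(q)}}J_{\ui}\ind_{\ui\subset\tau}\biggr)\biggr].
\end{equation*}
Applying Jensen to the $\bE$-expectation and using \eqref{eq:Doney} together with the marginal structure $\ga=1/2$, one computes $\sum_{\ui\in B^{(q)}}\bP(\ui\subset\tau)^2\sim C\,\ell\,\tilde L(\ell)/L(\ell)^{2/(q-1)}$. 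Optimizing $\gl$ balances the gain $\gl\gb^q$ times this quantity against the cost $\gl^2\|J\|_2^2$ and produces a net small $\gd$ precisely when $\tilde L(\ell)/L(\ell)^{2/(q-1)}\ge A\gb^{-2q/(q-1)}$, i.e.\ when $\ell\ge\gD(\gb;q,A)^{-1}$ by \eqref{eq:Delta}.

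The main obstacle I foresee is the cumulant expansion for non-Gaussian $\go$ when $q\ge 3$. In the Gaussian $q=2$ case of \cite{cf:GLTmarg} the tilt is equivalent to a covariance change and higher-body effects vanish identically. For $q\ge 3$ and general $\go$ one must (i) truncate the disorder at a carefully chosen level $M(\ell)$ so that the Taylor expansion of $\log\bbE\exp(\cdots)$ converges term-by-term; (ii) handle cross-terms arising from products $J_{\ui}J_{\uj}\go_{\ui}\go_{\uj}$ with $\ui\cap\uj\ne\eset$, which after tilting generate spurious lower-body potentials in $\tau$-space; and (iii) verify that these spurious contributions remain subdominant with respect to the leading $q$-body gain supported on $q$-replica intersections. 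It is precisely step (iii) that forces the hypothesis $q>(2\epsilon)^{-1}$ under \eqref{eq:Lassumption}.
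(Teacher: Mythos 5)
Your overall architecture (coarse graining on blocks of length $\asymp 1/\gD$, fractional moment bound, $q$-body change of measure on the disorder in each visited block) is the same as the paper's. But the specific change of measure you propose --- an exponential tilt $\dd\tilde\bbP_B/\dd\bbP \propto \exp(-\gl\sum_{\ui}J_{\ui}\go_{i_1}\cdots\go_{i_q})$ --- is precisely the ``straightforward generalization'' that fails for $q\ge 3$, and this is not a technicality you can patch with a cumulant expansion. For unbounded disorder (already in the Gaussian case) the normalization $\gL$ is infinite: the exponential of a degree-$q$ polynomial in the $\go$'s is not integrable for $q\ge 3$. Truncating $|\go_n|\le M(\ell)$ does not rescue the scheme, because the factor you must control after H\"older, $\bbE[(\dd\bbP/\dd\tilde\bbP_B)^{q-1}]$, is an exponential moment of the same degree-$q$ polynomial; its cumulants of order $\ge 3$ are not dominated by $\gl^2\|J\|_2^2$ at the value of $\gl$ needed for the gain, unless $M(\ell)$ is taken so small that the gain itself is destroyed. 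This is exactly the obstruction the authors identify, and their resolution changes the \emph{form} of the density rather than the potential: they take $g=\exp(f_K(X))$ with $X=\sum_{\ui}V_k(\ui)\go_{\ui}$ and $f_K(x)=-K\ind_{\{x\ge e^{K^2}\}}$, i.e.\ a \emph{bounded, cut-off} function of the $q$-body functional. Then the conjugate H\"older factor is trivially $\le(1+e^{K\gga/(1-\gga)}\bbP(X\ge e^{K^2}))^{|\cI|}$ and is controlled by Chebyshev using only $\sum_{\ui}V_k(\ui)^2=O(1)$; the entire difficulty migrates to showing that, under the disorder law tilted by the renewal (their $\hat\bbP_\tau$), $X$ concentrates \emph{above} the cut-off level $e^{K^2}\asymp A^{(q-1)/2}$. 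That step requires a first-moment lower bound plus a second-moment (variance) bound for $X$ under $\hat\bbP_\tau$, proved via a diagrammatic expansion and a convergence-in-law statement for the normalized $q$-fold local-time functional (their Lemmas 2.5, 3.3, 4.2, 5.1). None of this machinery appears in your sketch, and it is where the hypothesis $q>(2\epsilon)^{-1}$ actually enters.

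A secondary quantitative error: with $J_{\ui}\propto\bP(\ui\subset\tau)$ one has $\sum_{\ui\in B^{(q)}}\bP(\ui\subset\tau)^2\sim c\,\tilde L(\ell)^q$ by \eqref{eq:Doney} (each of the $q$ gap-sums contributes one factor $\tilde L(\ell)$), not $\ell\,\tilde L(\ell)/L(\ell)^{2/(q-1)}$ as you claim; balancing gain against cost with this kernel does not reproduce the threshold \eqref{eq:Delta}. The paper's potential $V_k(\ui)$ deliberately \emph{omits} the renewal factor attached to the smallest index $i_1$ (it is a product of $R_{1/2}$ over the $q-1$ gaps only, normalized by $k^{1/2}\tilde\bL(k)^{(q-1)/2}$), and the correct bookkeeping of the gain --- a factor $\sqrt{k}/L(k)$ from the local time at the free index times $\tilde L(k)^{(q-1)/2}$ from the gaps, multiplied by $\mbeta^q\asymp\gb^q$ --- is what yields exactly the condition $\tilde L(k)/L(k)^{2/(q-1)}\ge A\gb^{-2q/(q-1)}$ defining $k=1/\gD$.
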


\medskip

The result may be more directly appreciated in the particular case of
$L(\cdot)$ of logarithmic type, {\sl cf.} Remark~\ref{rem:SVF}, with
$\constb<1/2$, so that \eqref{eq:Lassumption} holds with
$\epsilon<\min((1/2)-\constb,1/2)$.  By explicit integration we see
that $\tilde L(x) \sim (\consta^2(1-2\constb))^{-1} (\log
(x))^{1-2\constb}$ so that
\begin{equation}
 \frac{\tilde L(x)}{L(x)^{2/(q-1)}} \, \sim \, \frac {\consta^{-2q/(q-1)}}{(1-2\constb)}
 (\log (x))^{1-2\constb q(q-1)^{-1}}
\end{equation}
and in this case
\begin{equation}
\gD (\gb; q, A) \stackrel{\gb\searrow 0}\sim
\exp\left(-c(\constb,A,q) \gb^{-b}\right),
\end{equation}
where 
$c(\constb,A,q):=((1-2\constb)\consta^{2q/(q-1)}A)^{1/C}$
and $b:= 2q/((q-1)C)$ with 
$C:=1-2\constb q(q-1)^{-1}$.
In short, by choosing $q$ large the exponent
$b>2/(1-2\constb)$ becomes arbitrarily close to $2/(1-2\constb)$,
at the expense of course of a large constant
$c(\constb,A,q)$, 
since $A$ will have to be chosen sufficiently large.

We sum up these steps into the following simplified version
of Theorem~\ref{th:main}
\medskip

\begin{cor}
If $\ga=1/2$ and $L(\cdot) $ is of logarithmic type with $\constb\in (-\infty, 1/2)$
({\sl cf.} Remark~\ref{rem:SVF}) 
then $h_c(\gb)>h_c(0)$ for every $\gb>0$ and for every
$b>2/(1-2\constb)$
there exists
$c>0$ such that, for $\gb$ sufficiently small
\begin{equation}
h_c(\gb)-h_c(0)\, \ge \, \exp\left(-c \gb^{-b}\right).
\end{equation}
\end{cor}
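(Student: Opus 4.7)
The plan is to deduce the Corollary as a direct specialization of Theorem~\ref{th:main} applied to the explicit slowly varying function $L(x) = \consta (\log x)^{\constb}$. The strategy is simply to compute $\tilde L$ and the threshold quantity $\gD(\gb;q,A)$ defined in \eqref{eq:Delta}, and then optimize over the integer parameter $q$.

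First I would check that Theorem~\ref{th:main} applies: if $\constb<1/2$ then for any $\epsilon\in (0, \min((1/2)-\constb, 1/2))$ we have $L(x) = o((\log x)^{(1/2)-\epsilon})$, so \eqref{eq:Lassumption} holds. In particular the set of admissible integers $q$ in Theorem~\ref{th:main} is $q>(2\epsilon)^{-1}$, and taking $\epsilon$ close enough to $\min((1/2)-\constb, 1/2)$ we can use any integer $q$ larger than $\max(1, 1/(1-2\constb))$.

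Next I would compute $\tilde L$ explicitly. By definition \eqref{eq:Ltilde} and an elementary integration (splitting off the integral near the origin, which is bounded),
\begin{equation}
\tilde L(x)\, \sim\, \int_2^x \frac{\dd y}{y\, \consta^2 (\log y)^{2\constb}}\, \sim\, \frac{1}{\consta^2 (1-2\constb)} (\log x)^{1-2\constb},
\end{equation}
as $x\to\infty$. Combining with $L(x)^{2/(q-1)} = \consta^{2/(q-1)}(\log x)^{2\constb/(q-1)}$ we obtain
\begin{equation}
\frac{\tilde L(x)}{L(x)^{2/(q-1)}}\, \sim\, \frac{\consta^{-2q/(q-1)}}{1-2\constb}\, (\log x)^{C}, \qquad C\, :=\, 1-\frac{2\constb q}{q-1}.
\end{equation}
Provided $q$ was chosen so that $C>0$, i.e.\ $q(1-2\constb)>1$ (which is possible exactly because $\constb<1/2$), the map $x\mapsto \tilde L(x)/L(x)^{2/(q-1)}$ is eventually monotone and tends to $+\infty$.

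I would then invert this asymptotic relation. The smallest $n$ for which $\tilde L(n)/L(n)^{2/(q-1)}\geq A\gb^{-2q/(q-1)}$ satisfies $\log n \sim ((1-2\constb)\consta^{2q/(q-1)}A)^{1/C}\, \gb^{-2q/((q-1)C)}$, so that
\begin{equation}
\gD(\gb;q,A)\, \stackrel{\gb\searrow 0}\sim\, \exp\bigl(-c(\constb,A,q)\, \gb^{-b}\bigr), \qquad b\, :=\, \frac{2q}{(q-1)C},
\end{equation}
with $c(\constb,A,q)=((1-2\constb)\consta^{2q/(q-1)}A)^{1/C}$. Theorem~\ref{th:main} applied with this $q$ then gives $h_c(\gb)-h_c(0)\geq \gD(\gb;q,A)$ for all $\gb\leq \gb_0$.

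Finally, the optimization: as $q\to\infty$, $C\to 1-2\constb$ and $b\to 2/(1-2\constb)$. Given $b>2/(1-2\constb)$, choose an integer $q$ large enough (and with $q>(2\epsilon)^{-1}$) that $2q/((q-1)(1-2\constb q/(q-1)))<b$; then Theorem~\ref{th:main} furnishes an $A$ (hence a $c=c(\constb,A,q)$) such that $h_c(\gb)-h_c(0)\geq \exp(-c\gb^{-b})$ for small $\gb$. Strictly speaking there is no genuine obstacle here beyond a careful bookkeeping of asymptotics: all of the work is in Theorem~\ref{th:main} itself, and the only delicate point is to make sure that the parameter $q$ can be chosen both integer-valued and large enough to push the exponent $b$ arbitrarily close to $2/(1-2\constb)$, which is exactly what the condition $\constb<1/2$ allows.
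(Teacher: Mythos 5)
Your proposal is correct and follows essentially the same route as the paper: the corollary is obtained by specializing Theorem~\ref{th:main} to $L$ of logarithmic type, explicitly integrating to get $\tilde L(x)\sim(\consta^2(1-2\constb))^{-1}(\log x)^{1-2\constb}$, inverting to find $\gD(\gb;q,A)\sim\exp(-c(\constb,A,q)\gb^{-b})$ with $b=2q/((q-1)C)$, and letting $q$ grow to push $b$ down to any value above $2/(1-2\constb)$. The bookkeeping of the admissibility condition $q>(2\epsilon)^{-1}$ and the positivity of $C$ matches the paper's computation.
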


\medskip

This result of course has to be compared with the 
 upper bound in Theorem~\ref{th:1/2UB} that for
 $L(\cdot)$ of logarithmic type yields for $\constb< 1/2$
\begin{equation}
h_c(\gb)-h_c(0)\, \le \, 
\tilde C_1 \gb^{-2 \constb/(1-2\constb)}\exp\left(-\tilde C_2 \gb^{-2 /(1-2\constb)}\right),
\end{equation}
where $\tilde C_1$ and $\tilde C_2$ are positive constants that depend
(explicitly) on $\consta$, $\constb$ and on the two constants $C_1$
and $C_2$ of Theorem~\ref{th:1/2UB} (we stress that $\tilde C_1>0$ and
$\tilde C_2>0$ for every $\consta>0$ and $\constb< 1/2$).  \smallskip

\smallskip

The main body of the proof of Theorem~\ref{th:main} is given in the
next section.  In the subsequent sections a number of technical
results are proven.  In the last section
(Section~\ref{sec:monotonicity}) we prove a general result (Proposition~\ref{th:monotonicity})
for the
models we are considering: the monotonicity of the free energy with
respect to $\gb$. This result, proven for other disordered models,
appears not to have been pointed out up to now for the pinning model.
We stress that Proposition~\ref{th:monotonicity}  
 is  not used  in the rest of the
paper, but, as  discussed in Section~\ref{sec:monotonicity},
one can find a link of some interest with our main results.  

\section{Coarse graining, fractional moment and measure change arguments}
\label{sec:CGfmmc}

The purpose of this section is to reduce  the proof to 
a number of technical statements, that are going to be proven
in the next sections. In doing so, we are going to introduce the quantities and 
notations used in the technical statements and, at the same time,
we will stress the main ideas and the novelties with respect to earlier
approaches (notably, with respect to \cite{cf:GLTmarg}).

We anticipate that the main ingredients of the proof are (like in
\cite{cf:GLTmarg}) a coarse graining procedure and a fractional moment
estimate on the partition function combined with a change of measure.
However:
\begin{enumerate}
\item In \cite{cf:GLTmarg} we have exploited the Gaussian character of
  the disorder to introduce {\sl weak, long-range correlations} while
  keeping the Gaussian character of the random variables. In fact, the
  change of measure is given by a density that is just the exponential
  of a quadratic functional of $\go$, that is a measure change via a
  {\sl 2-body potential}.  In order to lower the exponent $4$ in the
  right-hand side of \eqref{eq:beta4} we will use $q$-body potentials
  $q=3,4,  \ldots$ (this is the $q$ appearing in
  Theorem~\ref{th:main}). Such potentials carry with themselves a
  number of difficulties: for example, when the law of the disorder is Gaussian, the
  modified measure is not. As a matter of fact, there are even
  problems in defining the modified disorder variables if one modifies
  in a straightforward way the procedure in \cite{cf:GLTmarg} to use
  $q$-body potentials, due to integrability issues: such problems may
  look absent if one deals with bounded $\go$ variables, but they
  actually reappear when taking limits. The change-of-measure
  procedure is therefore performed by introducing $q$-body potentials
  {\sl and} suitable cut-offs.  Estimating the effect of
  such {\sl $q$-body potential with cut-off} change of measure is at
  the heart of our technical estimates.
\item The coarse-graining procedure is different from the one used in
  \cite{cf:T_cg,cf:GLTmarg}, since we have to adapt it to the new
  change of measure procedure. However, unlike point (1), 
  the difference between the previous coarse graining procedure and the
  one we are employing now is more technical than conceptual. 
\end{enumerate}

\subsection{The coarse graining length}
Recall the definition \eqref{eq:Ltilde} of $\tilde L(\cdot)$.
We are assuming  \eqref{eq:Lassumption}, therefore $\lim_{x \to \infty}\tilde
L(x)=+\infty$.
Chosen a value of  $q \in\{ 2,3, \ldots\}$ ($q$ is kept fixed throughout the proof) and a positive constant $A$ (that is going to be chosen large) we define
\begin{equation}
\label{eq:k}
k\, =\, k(\gb; q,A)\, :=\, 
\inf \left\{ n \in \N :\, \tilde L(n)/ L(n)^{2/(q-1)} \ge A \gb^{-2q/(q-1)}\right\}.
\end{equation}
Since we are interested also in cases in which  $L(\cdot)$ diverges (and possibly faster than
$\tilde L(\cdot)$) it is in general false  that $k< \infty$. However, the 
assumption \eqref{eq:Lassumption} guarantees that, for $q>(2\epsilon)^{-1}$,
 $L(x)/ L(x)^{2/(q-1)}\to \infty$ for $x\to \infty$ and therefore $k< \infty$.
 
Moreover, if $L(\cdot)$ is of logarithmic type (Remark~\ref{rem:SVF}) with
$\constb<1/2$,
then   for $q>1/(1-2\constb)$ the function $\tilde L(\cdot)/ L(\cdot)^{2/(q-1)}$
is (eventually) increasing.

\smallskip

Of course $k(\gb; q,A)$ is just $1/\gD (\gb;q,A)$,
{\sl cf.}
\eqref{eq:Delta}, and the reason for such a link 
is explained in Remark~\ref{rem:kgD}.
Note by now  that $k$ is monotonic in both $\gb$ and $A$.
Since $\gb$ is chosen smaller than an arbitrary fixed quantity
$\gb_0$, in order to guarantee that $k$ is large we will rather play on 
choosing $A$ large.

\smallskip

\begin{rem}
\label{rem:L}
\rm
For the proof  certain monotonicity properties will be important. Notably, 
we know \cite[\S~1.5.2]{cf:RegVar} that $1/(\sqrt{x} L(x))$ is asymptotic 
to a monotonic (decreasing) function and this directly implies that we can
find a slowly varying function $\bL(\cdot)$ and a constant $\const _L \in (0,1]$ such that
\begin{equation}
\label{eq:Lb}
x \mapsto \frac1{\sqrt{x}\bL(x)} \text{ is decreasing  and } 
 \const _L  \bL(x) \, \le \, L(x) \, \le \, \bL(x) \text{ for every } x\in (0, \infty).
\end{equation}
Given the asymptotic behavior of the renewal function of $\tau$ (a special case
of \eqref{eq:Doney})
\begin{equation}
\label{eq:Doney1/2}
\bP\left(n \in \tau\right)
\stackrel{n \to \infty}\sim \frac 1{2\pi \sqrt{n} L(n)},
\end{equation}
and the fact that $\bP\left(n \in \tau\right)>0$ for every $n\in \N$, 
 we can choose
$\bL (\cdot)$ and 
$\const _L $ such that we have also
\begin{equation}
\label{eq:Doney-bound}
\frac{1}{\sqrt{n+1}\, \bL(n+1)}\, \le \,
\bP(n \in \tau) \, \le \, \frac{\const _L ^{-1}}{\sqrt{n+1}\, \bL(n+1)}, \ \ \ n=0,1,2, \ldots .
\end{equation}
It is natural to choose $\bL(\cdot)$ such that $\lim_{x \to \infty}\bL(x)/L(x)
\in [1, 1/\const_L)$ exists, and we will do so.
For later convenience
we set
\begin{equation}
\label{eq:R12}
R_{\frac12} (x) \, :=\,  \frac1{\sqrt{x+1}\, \bL(x+1)}.
\end{equation}
\end{rem}

\subsection{The coarse graining procedure and the fractional moment bound}
Let us start by introducing for $0\le M < N$ the notation(s)
\begin{equation}
  \label{eq:Znh}
  Z_{M, N}\, =\,
  Z_{M, N,\go}:=\bE\left[e^{\sum_{n=M+1}^N(\beta\go_n+h-\log \M(\gb))\delta_n}\delta_N
  \big \vert \gd_M=1\right]\, ,
\end{equation}
and $Z_{M,M}:=1$ (of course $Z_{N, \go}= Z_{0, N}$).
We consider without loss of generality 
a system of size proportional to $k$, that is $N=km$
with $m \in \N$. For $\mathcal I\subset
\left\{1,\dots,m\right\}$ we define
\begin{equation}
  \hat Z_{\go}^{\mathcal I}:=\bE \left[e^{\sum_{n=1}^N(\beta\go_n+h-\log \M(\gb))\delta_n} \gd_N \ind_{E_\cI}(\tau)\right],
\end{equation}
where
 $E_\cI:=\{ \tau \cap (\cup_{i \in \cI} B_i) = \tau \setminus \{0\}\}$, 
and
\begin{equation}
 B_i:=\left\{(i-1)k+1,\dots,ik\right\},
\end{equation}
that is $E_\cI$ is the event that the renewal $\tau$ intersects the
blocks $(B_i)_{i\in \mathcal I}$ and only these blocks over $\{1,
\ldots, N\}$.  It follows from this definition that
\begin{equation}\label{eq:decomp}
 Z_{N,\go}=\sum_{\mathcal I\subset \left\{1,\dots,m\right\}} \hat Z_{\go}^{\mathcal I}.
\end{equation}
Note that $\hat Z_{\go}^{\mathcal I}=0$ if $m\notin \mathcal I$. Therefore in the following we will always assume $m\in \mathcal I$.
For $\mathcal I=\{i_1,\dots, i_l\}$, ($i_1<\dots<i_l$, $i_l=m$), one can express $\hat Z_{\go}^{\mathcal I}$ in the following way:
\begin{multline}
\label{eq:decomp1}
\hat Z_{\go}^{\mathcal I}=
\sumtwo{d_1, f_1 \in B_{i_1}}{d_1\le f_1}\sumtwo{d_2, f_2\in B_{i_2}}{d_2\le f_2}\ldots \sum_{d_l\in B_{i_l}}\\
K(d_1)z_{d_1}Z_{d_1,f_1}K(d_2-f_1)Z_{d_2,f_2}\ldots K(d_l-f_{l-1})z_{d_l}
Z_{d_l,N},
 \end{multline}
 with $z_n := \exp( \gb \go_n +h - \log \M(\gb))$.  Let us fix a value of $\gamma\in (0,1)$
  (we  actually choose $\gamma=6/7$, but we will keep
  writing it as $\gamma$). Using the inequality $\left(\sum a_i\right)^{\gamma}\le
 \sum a_i^{\gamma}$ (which is valid for $a_i\ge 0$ and an arbitrary
 collection of indexes) we get
\begin{equation}
\label{eq:cg+fm}
  \bbE\left[ Z_{N,\go}^{\gamma}\right]\, \le\, \sum_{\mathcal I\subset \left\{1,\dots,m\right\}} \bbE \left[\left(\hat Z_{\go}^{\mathcal I}\right)^{\gamma}\right].
\end{equation}
An elementary, but crucial, observation is that
\begin{equation}
\label{eq:fmm}
\tf(\gb, h)\, =\, \lim_{N \to \infty} \frac1{\gamma N} \bbE
\log Z_{N, \go }^\gamma\, \le \, 
 \liminf_{N \to \infty} \frac1{\gamma N} 
\log \bbE Z_{N, \go }^\gamma,
\end{equation}
so that if we can prove that $\limsup_N\bbE Z_{N, \go }^\gamma < \infty$
for $h=h_c(0)+\Delta(\beta;q,A)$
we are done.

\begin{figure}[hlt]
\begin{center}
\leavevmode
\epsfxsize =14.5 cm
\psfragscanon
\psfrag{O}[c][l]{$0$}
\psfrag{1K}[c][l]{$k$}
\psfrag{2K}[c][l]{$2k$}
\psfrag{3K}[c][l]{$3k$}
\psfrag{4K}[c][l]{$4k$}
\psfrag{5K}[c][l]{$5k$}
\psfrag{6K}[c][l]{$6k$}
\psfrag{7K}[c][l]{$7k$}
\psfrag{8K}[c][l]{$8k=N$}
\psfrag{d1}[c][l]{\Small{$d_1$}}
\psfrag{d2}[c][l]{\Small{$d_2$}}
\psfrag{d3}[c][l]{\Small{$d_3$}}
\psfrag{d4}[c][l]{\Small{$d_4$}}
\psfrag{f1}[c][l]{\Small{$f_1$}}
\psfrag{f2}[c][l]{\Small{$f_2$}}
\psfrag{f3}[c][l]{\Small{$f_3$}}
\psfrag{f4N}[c][l]{\Small{$f_4=N$}}
\epsfbox{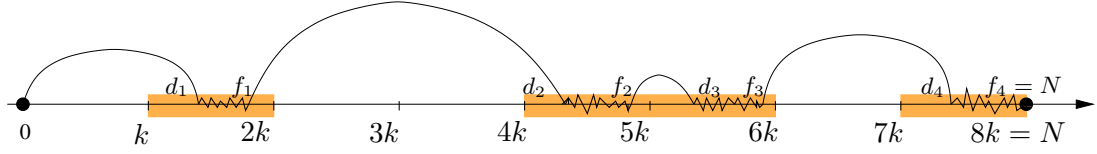}
\caption{\label{fig:cg} The figure above explains our coarse graining
  procedure. Here $N=8k$, $\mathcal I=\{2,5,6,8\}$. The drawn
  trajectory is a typical trajectory contributing to $\hat Z^{\mathcal
    I}_{N,\go}$; $d_i$ and $f_i$, $1\le i\le 4$, correspond to the
  indexes of \eqref{eq:decomp1}. The shadowed regions represent the sites
  on which the change of measure 
  procedure (presented in \S~\ref{sec:com}) acts.}
\end{center}
\end{figure}

\subsection{The change of measure}
\label{sec:com}
We introduce 
\begin{equation}
\label{eq:X}
 X_j\, :=\, \sum_{\ui \in B_j^q} V_k({\ui})\go_{\ui}, 
\end{equation}
where $ B_j^q$ is the Cartesian product of $B_j$ with itself $q$ times and 
$\go_{\ui}= \prod_{a=1}^q \go_{i_a}$. The {\sl potential}
 $V_k(\cdot)$ plays a crucial role for the sequel: we define it and discuss some
 of its
 properties  in the next remark.
 \medskip
 
 \begin{rem}
\label{rem:V}\rm
The potential $V$ is best introduced if we define the sorting operator $\sort(\cdot)$:
if $\ui  \in \R^q$ ($q =2,3, \ldots$), $\sort (\ui) \in \R^q$ is the non-decreasing rearrangement 
of the entries of $\ui$. We introduce then
\begin{equation}
\label{eq:U}
U({\ui}) \, :=\, \prod_{a=2}^{ q}
 R_{\frac12} \left( \sort(\ui)_a- \sort(\ui)_{a-1}\right),
\end{equation}
The {\sl potential} $V$ is defined  by renormalizing  $U$ and by setting to zero
the {\sl diagonal} terms: 
\begin{equation}
\label{eq:useR2}
V_k({\ui}) \, := \, \frac{1}{(q!)^{1/2}  k^{1/2}\tilde \bL(k)^{(q-1)/2}}U( \ui)
 \ind_{\{i_a\neq i_b \text{ for every } a, b\}}
,
\end{equation}
where $\tilde \bL(\cdot)$ is defined as in \eqref{eq:Ltilde}, with $L(\cdot)$
replaced by $\bL(\cdot)$. By exploiting the fact that for every $c>0$ 
we have $\sum_{i\le cN} R_{1/2}(i)^2 \stackrel{N \to \infty} \sim 
\tilde \bL (N)$
one  sees that 
\begin{equation}
\label{eq:cV-1}
\sum_{\ui \in B_1^q} V_k({\ui})^2 \, = \, \frac{1}{k\,  ( \tilde \bL(k))^{q-1}}
\sum_{0<i_1<\ldots < i_q \le k} \prod_{a=2}^q \left(R_{1/2}(i_a-i_{a-1})\right)^2
 \stackrel{k \to \infty}\sim  1.
\end{equation}
Therefore 
\begin{equation}
\label{eq:cV}
\sum_{\ui \in B_1^q} V_k({\ui})^2 \, \le \, 2,
\end{equation}
for $k$ sufficiently large.
\end{rem}

\medskip

Let us introduce, for $K>0$, also
\begin{equation}
\begin{split}
 f_K(x)\, &:=\, -K\ind_{\{x\ge \exp(K^2)\}},\\
 g_{\mathcal I}(\go)\, &:= \, \exp\left(\sum_{j\in \mathcal I}f_K(X_j)\right),\\
 \bar g(\go)\, &:= \, \exp(f_K(X_1)).
\end{split}
\end{equation}
We are now going to replace, for fixed $\cI$, the measure
$\bbP(\dd \go)$ with $ g_{\mathcal I}(\go) \bbP(\dd \go)$. The latter is not a probability measure:
we could normalize it, but this is inessential because we are directly exploiting 
 H\"older inequality to get
\begin{equation}
\label{eq:Holder}
\bbE \left[\left(\hat Z_{\go}^{\mathcal I}\right)^{\gamma}\right]
\, \le\,  
\left(\bbE\left[g_{\mathcal I}(\go)^{-\frac{\gamma}{1-\gamma}}\right]\right)^{1-\gamma}\left(\bbE\left[g_{\mathcal I}(\go)
\hat Z_{\go}^{\mathcal I}\right]\right)^{\gamma}.
\end{equation}
The first factor in the right-hand side is easily controlled, in fact
\begin{equation}
\label{eq:factor1}
\bbE\left[g_{\mathcal I}(\go)^{-\frac{\gamma}{1-\gamma}}\right]\,=
\, \bbE \left[ \bar g (\go)^{-\frac{\gamma}{1-\gamma}}\right]^{\vert \cI \vert} 
\, =\, \left[ \left(\exp\left(\frac{K\gamma}{1-\gamma}\right)-1\right)
\bbP\left(X_1\ge \exp\left(K^2\right)\right)+1 \right]^{\vert \cI \vert} ,
\end{equation}
and since $X_1$ is centered and its variance coincides 
with the left-hand side of \eqref{eq:cV}, by Chebyshev inequality
the term $ \exp\left({K\gamma}/{(1-\gamma)}\right)\bbP\left(X_1\ge \exp\left(K^2\right)\right)$
can be made arbitrarily small by choosing $K$ large. Therefore for $K$ sufficiently large
(depending only on $\gamma(=6/7)$)
\begin{equation}
\label{eq:fHolder}
\bbE \left[\left(\hat Z_{\go}^{\mathcal I}\right)^{\gamma}\right]
\, \le\,  2^{\gamma \vert \cI\vert}
\left(\bbE\left[g_{\mathcal I}(\go)
\hat Z_{\go}^{\mathcal I}\right]\right)^{\gamma}.
\end{equation}

Estimating the remaining factor is a more involved matter. We will actually use the following two statements,
that we prove in the next section. Set
$P_\cI \, :=\, 
 \bP\left( E_\cI ; \, \gd_N=1\right)$.

\medskip

\begin{proposition}
\label{th:eta}
Assume that $\alpha=1/2$ and that \eqref{eq:Lassumption} holds
for some $\epsilon\in(0,1/2]$. For every $\eta>0$ and every $q>(2\epsilon)^{-1}$
we can choose $A>0$ such that if $\gb \le\gb_0$ and $h\le \Delta(\beta;q,A)$,
 for every $\mathcal I\subset\left\{1,\dots,m\right\}$ with $m\in\mathcal I$ we have
\begin{equation}
\label{eq:eta}
 \bbE\left[g_{\mathcal I}(\go)\hat Z_{\go}^{\mathcal I}\right]
 \,\le\, 
  \eta^{\vert\mathcal I\vert }P_{\mathcal I}.
\end{equation}
\end{proposition}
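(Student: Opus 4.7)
The plan is to reduce \eqref{eq:eta} to a per-block estimate by exploiting block-wise independence of the disorder. Since $g_\cI(\go)=\prod_{j\in\cI}\bar g(\go_{B_{i_j}})$ and each summand in \eqref{eq:decomp1} factorizes across the blocks $B_{i_j}$ (the disorder contribution of each factor depends on $\go$ restricted to a single block only), bringing $\bbE$ inside gives
\begin{equation*}
\bbE[g_\cI\,\hat Z_\go^\cI]\,=\,\sum_{(d_j,f_j)_{j\in\cI}}\Bigl(\prod_{j} K(d_j-f_{j-1})\Bigr)\prod_{j\in\cI}\bbE\bigl[\bar g(\go_{B_{i_j}})\,z_{d_j}Z_{d_j,f_j}\bigr].
\end{equation*}
The same identity with $\bar g$ replaced by $1$ rebuilds $P_\cI$ up to the harmless factor $e^{h|\tau\cap\{1,\dots,N\}|}\le e^{hmk}=e^m$ coming from $hk\le 1$. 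Hence \eqref{eq:eta} will follow from the per-block ratio bound
\begin{equation*}
\bbE\bigl[\bar g(\go_{B_1})\,z_d Z_{d,f}\bigr]\,\le\,\eta'\,\bbE[z_d Z_{d,f}]
\end{equation*}
uniformly in $d\le f$ in $B_1$, with $\eta':=\eta/e$ (which can be made arbitrarily small by choosing $A$ large).

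For the per-block bound I use the decomposition $\bar g=1-(1-e^{-K})\ind_{\{X_1\ge\exp(K^2)\}}$, so, setting $W:=z_d Z_{d,f}$ and $\bbP_W(\cdot):=\bbE[\,\cdot\,W]/\bbE[W]$,
\begin{equation*}
\frac{\bbE[\bar g\,W]}{\bbE[W]}\,=\,1-(1-e^{-K})\,\bbP_W\!\bigl(X_1\ge\exp(K^2)\bigr),
\end{equation*}
and the task becomes to force $\bbP_W(X_1\ge\exp(K^2))$ close to $1$ through a Chebyshev bound, provided $m_W:=\bbE_W[X_1]\gg\exp(K^2)$ and $\mathrm{Var}_W(X_1)=o(m_W^2)$. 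Interchanging $\bbE$ and $\bE$, and using multilinearity of $X_1$ together with the elementary identity $\bbE[\go_n^r z_n]/\bbE[z_n]=\M^{(r)}(\gb)/\M(\gb)$, I get
\begin{equation*}
m_W\,=\,\mu(\gb)^q\,\frac{\bE\!\bigl[\sum_{\ui\subset\tau\cap B_1,\,\mathrm{distinct}}V_k(\ui)\,e^{h|\tau\cap[d+1,f]|}\gd_f\bigm|\gd_d=1\bigr]}{\bE\!\bigl[e^{h|\tau\cap[d+1,f]|}\gd_f\bigm|\gd_d=1\bigr]}\,,
\end{equation*}
with $\mu(\gb):=\M'(\gb)/\M(\gb)\stackrel{\gb\to 0}{\sim}\gb$, and an analogous expression for $\bbE_W[X_1^2]$ that, after decomposing pairs of $q$-tuples according to their overlap pattern, is governed by moments of the intersection renewal $\tilde\tau=\tau\cap\tau'$ restricted to $B_1$.

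The normalization of $V_k$ in \eqref{eq:useR2} is tailored so that, using Doney's estimate \eqref{eq:Doney1/2} together with $\sum_{i\le k}R_{1/2}(i)^2\sim\tilde\bL(k)$, the renewal sum yields $m_W\gtrsim\gb^q\,\tilde L(k)^{(q-1)/2}/L(k)$, while $\mathrm{Var}_W(X_1)$ stays of order $1$ (the diagonal part being already controlled by \eqref{eq:cV}). By the defining inequality \eqref{eq:k}, $\gb^{2q}\tilde L(k)^{q-1}/L(k)^2\ge A^{q-1}$, whence $m_W\ge c\,A^{(q-1)/2}$, which exceeds $\exp(K^2)$ once $A$ is large; Chebyshev then gives $\bbP_W(X_1<\exp(K^2))=O(A^{-(q-1)})$ and consequently $\bbE[\bar g\,W]/\bbE[W]\le e^{-K}+O(A^{-(q-1)})$, which is as small as wished by taking first $K$, then $A$, large. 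The main obstacle is the second-moment analysis: the off-diagonal contributions to $\bbE[X_1^2\,W]$ split according to coincidence patterns between the two $q$-tuples and, for general (non-Gaussian) $\go$, bring in all moments $\bbE[\go_1^r]$ with $r\le 2q$. It is precisely here that the cut-off $\ind_{\{X_1\ge\exp(K^2)\}}$ in $f_K$ plays its role: keeping the change-of-measure density bounded, it allows these higher-order terms to be controlled using only the exponential-moment assumption, letting the argument proceed without the Gaussian integration by parts exploited in \cite{cf:GLTmarg}.
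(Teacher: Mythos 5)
Your block factorization and the overall strategy (make $X_1$ large under the size-biased measure so that the cut-off indicator in $f_K$ fires with probability close to $1$) are the right ones, but there are two genuine gaps. First, the per-block ratio bound $\bbE[\bar g\, z_d Z_{d,f}]\le \eta'\,\bbE[z_d Z_{d,f}]$ cannot hold \emph{uniformly in $d\le f$ in $B_1$}. When $f-d$ is much smaller than $k$ (in the extreme case $f=d$, where $W=z_d$ tilts a single $\go_d$ and, since $V_k$ vanishes on the diagonal, $\bbE_W[X_1]=0$), the tilted mean $m_W$ is nowhere near $\exp(K^2)$ and the ratio is close to $1$, not to $\eta'$. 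Your own lower bound $m_W\gtrsim \gb^q\tilde L(k)^{(q-1)/2}/L(k)$ tacitly assumes the renewal occupies a positive fraction of the block, i.e.\ $f-d\ge\gep k$. The paper's proof only establishes the gain $\le\gd\,\bP(f-d\in\tau)$ for $f-d\ge\gep k$ (Lemma~\ref{th:df}) and must then show separately, by splitting each block into two halves and comparing $\sum_{f\le d+\gep k}\bP(f-d\in\tau)K(d'-f)$ with the unrestricted sum, that the short-excursion configurations contribute at most a $\gd$-fraction of $P_\cI$. This second half of the argument is entirely missing from your proposal and is not a technicality: without it the terms with $f_j-d_j$ small give back a contribution of order $P_\cI$ with no small factor.

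Second, the Chebyshev step on the joint measure $\bbP_W$ does not work, because the assertion that $\mathrm{Var}_W(X_1)$ ``stays of order $1$'' is false. Decomposing $\mathrm{Var}_W(X_1)=\bbE_W[\mathrm{Var}(X_1\mid\tau)]+\mathrm{Var}_W(\hat\bbE_\tau X_1)$, the second term is of order $m_W^2$: indeed $\hat\bbE_\tau X_1=\mbeta^q\sum_{\ui}V_k(\ui)\gd_{\ui}$ normalized by $A^{(q-1)/2}$ converges in law to a \emph{nondegenerate} positive random variable (a multiple of $|Z|^q$ via the renewal local time, cf.\ \eqref{eq:convlaw}), so its fluctuations over $\tau$ are comparable to its mean. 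Hence Chebyshev on the joint measure only yields $\bbP_W(X_1<\exp(K^2))\le C$ for a constant $C$ that does not vanish as $A\to\infty$, whereas you need this probability to be $o(1)$. The paper circumvents this by conditioning on $\tau$: it proves that $\hat\bbE_\tau X\ge aA^{(q-1)/2}$ with $\bP_{d,f}$-probability $\ge 1-\zeta$ (Lemma~\ref{th:fromCE}, which rests on the convergence-in-law statement and the a.s.\ positivity of the limit, not on a variance bound), and only then applies Chebyshev with the \emph{conditional} variance, which is shown to be $O(A^{(q-1)^2/q})=o(A^{q-1})$ by the diagrammatic estimates of Lemma~\ref{th:square}. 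Both of these conditional statements require substantial work that your proposal does not supply.
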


\medskip

 The following technical estimate controls $P_{\mathcal I}$ 
 (recall that $\cI=\{i_1, \ldots, i_{\vert \cI\vert}\}$).
\medskip

\begin{lemma} 
\label{th:P_cI}
Assume $\ga=1/2$. 
There exist $C_1=C_1(L(\cdot), k)$,
 $C_2=C_2(L(\cdot))$  and $k_0=k_0(L(\cdot))$ such that (with $i_0:=0$) 
\begin{equation}
\label{eq:P_cI}
P_\cI \, 
\le \,  C_1C_2^{\vert\cI\vert}\prod_{j=1}^{\vert\cI\vert}\frac{1}{(i_j-i_{j-1})^{7/5}},
\end{equation}
for $k \ge k_0$. 
\end{lemma}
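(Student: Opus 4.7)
The plan is to decompose $P_\cI$ by parametrizing the event $E_\cI \cap \{\delta_N=1\}$ via the first and last renewal points of $\tau$ in each block: setting $d_j := \min(\tau \cap B_{i_j})$, $f_j := \max(\tau \cap B_{i_j})$ (with $f_l = N$, where $l := |\cI|$), the constraints that $\tau$ visits exactly the blocks $B_{i_j}$, $j=1,\dots,l$, and ends at $N$, together with the Markov property, yield
\begin{equation*}
P_\cI \;=\; \sum_{(d_j,\,f_j)_j} K(d_1)\,u(f_1-d_1)\,K(d_2-f_1)\,u(f_2-d_2)\cdots K(d_l-f_{l-1})\,u(N-d_l),
\end{equation*}
where $u(n) := \bP(n \in \tau)$ is the renewal mass function and the sum is over $d_j, f_j \in B_{i_j}$ with $d_j \le f_j$ and $f_l = N$. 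The strategy is to bound each $K$ factor by a uniform estimate whenever the block gap $\ell_j := i_j - i_{j-1}$ satisfies $\ell_j \ge 2$, since then $g_j := d_j - f_{j-1} \ge (\ell_j-1)k+1 \ge k+1$ and Remark~\ref{rem:L} gives $K(g_j) \le C \bL(\ell_j k)/(\ell_j k)^{3/2}$ uniformly in the summation variables.

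The delicate case $\ell_j = 1$ (consecutive blocks of $\cI$) is handled by a coarse-graining step: partition $\cI$ into maximal runs of consecutive integers. Within each run, iterating the renewal identity $\sum_{a=0}^{n-1} u(a) K(n-a) = u(n)$ ($n \ge 1$) telescopes the sum over the intermediate $f$'s and $d$'s and bounds the within-run contribution by $u(f_\nu^{(r)} - d_\nu^{(r)})$, where $d_\nu^{(r)}$ and $f_\nu^{(r)}$ are the entry and exit points of the $\nu$-th run (lying in its first and last sub-blocks, respectively). Probabilistically, this is the bound obtained by discarding the requirement that $\tau$ visit each intermediate sub-block. After this reduction all the remaining $K$-factors connect distinct runs, hence correspond to coarse gaps $\hat\ell_\nu \ge 2$ and the uniform $K$-bound applies throughout.

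Evaluating the remaining sums over $(d_\nu^{(r)}, f_\nu^{(r)})$ via the bound $u(n) \le C/(\sqrt{n+1}\,\bL(n+1))$ from Remark~\ref{rem:L} and integral comparison (valid because $\alpha = 1/2$), each run contributes a factor of order $C\,\bL(\hat\ell_\nu k)/\bL(r_\nu k)\cdot\hat\ell_\nu^{-3/2}\,r_\nu^{-1/2}$ (with $r_\nu$ the run's size), and the last run supplies an extra $1/k$ since $f_R^{(r)} = N$ is fixed; the powers of $k$ otherwise cancel. Applying Potter's bound for slowly varying functions, $\bL(ck)/\bL(k) \le C\,c^{\epsilon}$ for $k \ge k_0(\epsilon)$ and arbitrary $\epsilon > 0$, with the specific choice $\epsilon = 1/10$, converts $\hat\ell_\nu^{-3/2+\epsilon}$ into $\hat\ell_\nu^{-7/5}$, while the harmless factor $r_\nu^{-1/2+\epsilon} \le 1$ is absorbed into $C_2^{|\cI|}$. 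Since the indices strictly inside a run have $\ell_j = 1$ and contribute $1$ to the product $\prod_j \ell_j^{-7/5}$, one has $\prod_\nu \hat\ell_\nu^{-7/5} = \prod_j \ell_j^{-7/5}$, giving the announced estimate with $C_1 = C/k$ and $C_2$ depending only on $L(\cdot)$.

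The main obstacle is the case $\ell_j = 1$: pointwise, $K(g_j)$ may be as large as $K(1) = L(1)$, so any uniform bound on $K$ loses all decay. The run decomposition combined with the convolution-telescoping identity for the renewal function circumvents this; an entirely analogous trick, $\sum_{d_1=1}^{k} K(d_1)\,u(f_1-d_1) \le u(f_1)$, deals with the parallel subtlety that occurs when $i_1 = 1$ (the first block of the system lies in $\cI$) and the initial $K(d_1)$ also lacks uniform decay. The threshold $k_0 = k_0(L(\cdot))$ in the statement simply reflects the value of $k$ beyond which Potter's bound is effective for the chosen $\epsilon$.
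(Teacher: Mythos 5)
Your proposal is correct and follows essentially the same route as the paper: reduce to the maximal runs of consecutive blocks (so only gaps $\ge 2$ carry a nontrivial factor), drop the constraint of visiting the intermediate sub-blocks of a run to replace the within-run contribution by the renewal function between entry and exit points, bound the $K$-sums and $u$-sums blockwise, and trade the exponent $3/2$ for $7/5$ via Potter's bound at the price of the $k$-dependence in $C_1$ and the threshold $k_0$. The only (harmless) cosmetic differences are that you phrase the within-run reduction as a telescoped convolution identity and absorb the initial $K(d_1)$ into $u(f_1)$ when $i_1=1$, where the paper instead bounds $\sum_{x\in B_{a_1}}K(x)$ directly with a $k$-dependent constant.
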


\medskip

Note that in this statement $k$ is just a natural number, but we will apply
it with $k$ as in \eqref{eq:k} so that $k \ge k_0$
is just a requirement on $A$. Note also that the choice of $7/5$ is arbitrary
(any number in $(1, 3/2)$ would do: the constants $C_1$ and $C_2$
depend on such a number).

\smallskip

Let us now go back to \eqref{eq:fHolder}
and let us plug it into \eqref{eq:cg+fm} and use
Proposition~\ref{th:eta} and Lemma~\ref{th:P_cI} to get:
\begin{equation}
\bbE\left[ Z_{N , \go}^\gamma\right]
\, \le\, C_1^\gamma \sumtwo{\mathcal I\subset \left\{1,\dots,m\right\}}{m\in\mathcal I}
\prod_{j=1}^{\vert\cI\vert}
\left(
\frac{(2C_2 \eta)^\gamma}{(i_j-i_{j-1})^{7\gamma/5}}\right).
\end{equation}
But $7\gamma/5=6/5 >1$, so
we can choose
\begin{equation}
 \label{eq:fmm2-1}
\eta \, :=\, \frac1{3C_2 
\left(\sum_{i=1}^\infty i^{-6/5}
\right)^{7/6}
}, 
\end{equation}
and this implies that $n \mapsto (2C_2 \eta)^\gamma n^{-7\gamma/5}$
is a sub-probability, which directly entails that
 \begin{equation}
 \label{eq:fmm2}
\bbE\left[ Z_{N , \go}^\gamma\right]
\, \le\, C_1^\gamma,  \quad \quad \quad (\gamma=6/7)
\end{equation}
for every $N$, which implies, via \eqref{eq:fmm},
that $\tf(\gb, h)=0$ and we are done.

It is important to stress that $C_1$ may depend
 on $k$ (we need \eqref{eq:fmm2}  uniform in
 $N$, not in $k$), but $C_2$ does not ($C_2$ is just a function
 of $L(\cdot)$, that is a function of the chosen renewal), so that $\eta$ may actually be chosen
 a priori as in \eqref{eq:fmm2-1}: it is a small but fixed constant that depends only on
 the underlying renewal $\tau$. 

\medskip 

\begin{rem}\rm
\label{rem:kgD}
In this section we have actually hidden the role of 
$\gD(\gb; q,A)$ in the hypotheses of Proposition~\ref{th:eta},
which are the hypotheses of Theorem~\ref{th:main}. 
Let us therefore explain informally why we can prove a  critical point shift of  $\gD=1/k$.

The coarse graining procedure reduces proving delocalization to 
Proposition~\ref{th:eta}. As it is quite intuitive from \eqref{eq:decomp}--\eqref{eq:fHolder} and 
Figure~\ref{fig:cg},
one has to estimate  the expectation, with respect to
the $\bar g (\go)$-modified measure, of the partition function 
$Z_{d_j, f_j}$ (or, equivalently, $Z_{d_j, f_j}/\bP(f_j-d_j\in \tau)$)
in each visited block
(let us assume that $f_j-d_j$ is {\sl of the order of $k$},
because if it is much smaller than $k$ one can bound this
contribution in a much more elementary way). The Boltzmann factor in $Z_{d_j, f_j}$
is $\exp( \sum_{n=d_j+1}^{f_j} (\gb \go_n -\log \M (\gb) +h) \gd_n)$
which can be bounded (in an apparently very rough way)
by $\exp( \sum_{n=d_j+1}^{f_j} (\gb \go_n -\log \M (\gb) ) \gd_n)
\exp(hk)$, since $f_j-d_j \le k$. Therefore, if $h \le \Delta(\beta;q,A)\sim 1/k$ we can drop
the dependence on $h$ at the expense of the multiplicative factor $e$
that is innocuous because we can show that
the expectation (with respect to
the $\bar g (\go)$-modified measure) of 
$Z_{d_j, f_j}/\bP(f_j-d_j\in \tau)$ when $h=0$ can be made 
arbitrarily small by choosing $A$ sufficiently large. 
\end{rem}

\medskip

\begin{rem}\rm
\label{rem:lazy}
A last observation on the proof is about $\gb_0$.
It can be chosen arbitrarily, but for the sake of simplifying
the constants appearing in the proofs we choose 
 $\gb_0\in (0, \infty)$
such that
\begin{equation}
\label{eq:gb0}
\frac 12 \, \le \, 
\frac {\dd^2}{\dd \gb^2} \log \M( \gb)\, \le \, 2 \,  ,
\end{equation}
for $\gb \in [0, \gb_0]$.
Choosing $\gb_0$ arbitrarily just boils down to changing
the constants in the right-most and left-most terms in \eqref{eq:gb0}.
\end{rem}

\section{Coarse graining estimates}
\label{sec:cg-est}

We start by proving Lemma~\ref{th:P_cI}, namely
\eqref{eq:P_cI}. The proof is however more clear if instead
of working with the exponent $7/5$ we work with
$3/2-\xi$ ($\xi \in (0,1/2)$, in the end, plug in $\xi=1/10$).
\medskip

\noindent {\it Proof of 
Lemma~\ref{th:P_cI}}.
First of all, 
  in the product on the right--hand side of \eqref{eq:P_cI} one can clearly ignore the terms
  such that
  $i_j-i_{j-1}=1$. 
We then express $\cI$ in a more practical way by observing that
we can define, in a unique way, an integer $p\le l:=
\vert \cI\vert$ and increasing sequences of integers $\{a_j\}_{ j=1, \ldots, p} $, 
$\{b_j\}_{ j=1, \ldots,  p}$ with $b_p=m$, $a_j\ge b_{j-1}+2$ 
(for $j>1$) and $b_{j}\ge a_j$ such that
\begin{equation}
 \mathcal I= \bigcup_{j=1}^p [a_j,b_j]\cap \N.
\end{equation}
With this definition, it is sufficient  to show
\begin{equation}
 P_{\mathcal I}\le  C_1 C_2^l \frac1{a_1^{3/2-\xi}}\prod_{j=1}^{p-1}\frac{1}{(a_{j+1}-b_{j})^{3/2-\xi}}.
\end{equation}
We start then by writing
\begin{multline}\label{eq:upbound}
 P_{\mathcal I}
\le
\sumtwo{d_1\in B_{a_1}}{f_1\in B_{b_1}}\ldots\sumtwo{d_{p-1}\in B_{a_{p-1}}}{f_p\in B_{b_{p-1}}}\sum_{d_p\in B_{a_p}}K(d_1)\bP(f_1-d_1\in \tau)\ldots K(d_p-f_{p-1})\bP(N-d_p\in \tau),
\end{multline}
where the inequality comes from neglecting the constraint that
$\tau$ has to intersect $B_{a_j+1},\ldots$  $B_{b_j-1}$.
Note that the meaning of the $d$ and $f$ indexes is somewhat different
with respect to \eqref{eq:decomp1} and that in the above sum we always
have
\begin{equation}\begin{split}
 d_1&\in B_{a_1},\\
 (a_j-b_{j-1}-1)k&\le d_j-f_{j-1}\le (a_j-b_{j-1}+1)k,\\
 (b_j-a_{j}-1)k \vee 0&\le f_j-d_j\le (b_j-a_j+1)k.
 \end{split}\end{equation}
In particular, $f_j\ge d_j$ is guaranteed by the fact that 
$\bP(f_j-d_j\in\tau)=0$ otherwise.

 Observe now that for $k $ sufficiently large
\begin{equation}
 \sum_{x\in B_{a_1}}K(x)\, \le\,  
\begin{cases} 
1& \text{ if } a_1=1, \\
 3 \frac{L((a_1-1) k)}{k^{1/2}(a_1-1)^{3/2}}& \text{ if } a_1=2,3, \ldots, 
 \end{cases}
 \, \le \, c_1(k) \frac{ L(a_1k)}{k^{1/2} a_1^{3/2}},
\end{equation}
where $c_1(k):= \max ( 10, k^{1/2}/L(k))$.
 Moreover
there exists a constant $c_2$ depending on $L(\cdot)$ such that 
for $j>1$
\begin{equation}\begin{split}
 \sum_{x=(a_j-b_{j-1}-1)k}^{(a_j-b_{j-1}+1)k}K(x)&\le c_2 \frac{L(k(a_j-b_{j-1}))}
 {k^{1/2}(a_j-b_{j-1})^{3/2}},\\
 \sum_{x=(b_j-a_{j}-1)k\vee 0}^{(b_j-a_j+1)k}\bP(x\in \tau)&\le c_2 \frac{k^{1/2}}{(b_j-a_j+1)^{1/2}L\left(k(b_j-a_j+1)\right)}.
\end{split}\end{equation}
The first inequality is obtained by making use of $a_j\ge b_{j-1}+2$.
Neglecting the last term which is smaller than one, we can bound the right--hand side of \eqref{eq:upbound} and get
\begin{equation}
\label{eq:truc}
 P_{\mathcal I}\le c_1(k)c_2^{2p}  \frac{ L(a_1k)}{k^{1/2} a_1^{3/2}}\prod_{j=1}^{p-1}
 \left( \frac{L(k(a_{j+1}-b_{j}))}{(a_{j+1}-b_{j})^{3/2}} \right)
 \left(  \frac{1}{(b_j-a_j+1)^{1/2}L\left(k(b_j-a_j+1)\right)}\right)
.
\end{equation}
Notice now that since $L(\cdot)$ grows slower than any power,
 $ \sup_{a_1}{ L(a_1k)}/({k^{1/2} a_1^{\xi}})$ is $o(1)$ for $k$ large.
 To control the other terms 
 we use the {\sl Potter bound} \cite[Th.~1.5.6]{cf:RegVar}: given
 a slowly varying function $L(\cdot)$ which is locally bounded away
 from zero and infinity (which we may assume in our set up without loss of generality),
 for every $a>0$ there exists $c_a>0$ such that for every $x,y >0$
 \begin{equation}
\label{eq:Potter}
\frac{L(x)}{L(y)} \, \le \,  c_a \max \left(\frac{x}{y}, \frac{y}{x} \right)^a.
 \end{equation}
This bound implies 
that for large enough $k$
\begin{equation}
\label{eq:2ineqs}
 \sup_{x\ge 1}\frac{L(k)}{\sqrt{x}L(kx)}\, \le\,  2 \ \text{ and } \
 \sup_{x\ge 1} \frac{L(kx)}{L(k)x^{\xi}}\,\le \,2.
\end{equation}
In fact consider the second bound (the argument for the first one is identical): by choosing $a=\xi/2$ we have
 $L(kx)/(L(k)x^{\xi})\le c_{\xi/2} x^{-\xi/2} \le 2$ and the second
 inequality holds for $x$ larger than a suitable constant $C_\xi$.
 For $x(\ge 1)$ smaller than $C_\xi$ instead it suffices
 to choose $k$ sufficiently large so that $L(kx)/L(k) \le 2$
 for every $x \in [1, C_\xi]$. 
Using the two bounds \eqref{eq:2ineqs} 
in \eqref{eq:truc} we complete the proof.
\qed

\medskip
The proof of 
Proposition~\ref{th:eta} depends on the following lemma that will be proven
in the next section. 
\medskip

\begin{lemma}
\label{th:df}
Set $h=0$, fix $q\in \N$, $q> (2\epsilon)^{-1}$ as in Theorem 
\ref{th:main},  and 
recall the definition of $k=k(\gb; q,A)$ \eqref{eq:k}.
For every $\gep$ and $\gd>0$ there exists $A_0>0$ such that
for $A\ge A_0$
\begin{equation}
 \bbE\left[\bar g(\go) z_d Z_{d,f}\right]\, \le\,
  \delta\,  \bP(f-d\in \tau),
 \label{eq:df}
\end{equation}
for every  $d$ and $f$ such that $0\le d\le d+ \gep k\le f\le k$
and $\gb \le \gb_0$. 
\end{lemma}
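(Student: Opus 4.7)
The strategy is a tilted-measure plus Chebyshev argument. Decompose
\begin{equation*}
\bar g(\go)\,=\,\ind_{\{X_1<e^{K^2}\}}\,+\,e^{-K}\ind_{\{X_1\ge e^{K^2}\}},
\end{equation*}
and let $\tilde\bbP$ denote the probability measure on $\go$-space with density $z_d Z_{d,f}/\bP(f-d\in\tau)$ with respect to $\bbP$. Then
\begin{equation*}
\bbE\bigl[\bar g(\go)z_d Z_{d,f}\bigr]\,=\,\bP(f-d\in\tau)\bigl[e^{-K}+(1-e^{-K})\,\tilde\bbP(X_1<e^{K^2})\bigr].
\end{equation*}
Choosing $K$ large enough so that $e^{-K}\le \gd/2$ (which only strengthens the hypotheses of the previous section, where $K$ was picked large as a function of $\gamma$), the bound $\le \gd\,\bP(f-d\in\tau)$ is reduced to showing that $\tilde\bbP(X_1<e^{K^2})\le \gd/2$ for $A$ large. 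I would obtain this by computing the first two $\tilde\bbP$-moments of $X_1$ and invoking Chebyshev.

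For the mean, use that $\bbE[\go_n e^{\gb\go_n-\log\M(\gb)}]=\mu(\gb):=\M'(\gb)/\M(\gb)\sim\gb$ and $\bbE[\go_n]=0$. Writing $z_d Z_{d,f}$ as an integral over $\tau$ (sampled under the pinned renewal law) and integrating $\go$ first, only tuples $\underline i\in B_1^q$ with distinct entries in $\{d\}\cup([d+1,f]\cap\tau)$ survive; after sorting $i_1<\ldots<i_q$ and applying the renewal identity for $\bP(i_1,\ldots,i_q\in\tau\,|\,d,f\in\tau)$, one finds
\begin{equation*}
\tilde\bbE[X_1]\,\ge\,\frac{c_q\,\gb^q}{k^{1/2}\tilde L(k)^{(q-1)/2}\,u(f-d)}\sum_{d<i_1<\ldots<i_q<f}u(i_1-d)\,u(f-i_q)\,\prod_{a=2}^q u(i_a-i_{a-1})^2,
\end{equation*}
where $u(n):=\bP(n\in\tau)\asymp R_{1/2}(n)$ by \eqref{eq:Doney-bound}. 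For $\ga=1/2$ the identity $\sum_{n=1}^k u(n)^2\sim c\tilde L(k)$ underlying Remark~\ref{rem:Ltilde}, the constraint $f-d\ge \gep k$, and Doney's asymptotic \eqref{eq:Doney1/2} evaluate the right-hand side to
\begin{equation*}
\tilde\bbE[X_1]\,\ge\,c(\gep)\,\frac{\gb^q\,\tilde L(k)^{(q-1)/2}}{L(k)}\,\ge\,c(\gep)\,A^{(q-1)/2}
\end{equation*}
by the defining relation \eqref{eq:k} of $k=k(\gb;q,A)$, so that $\tilde\bbE[X_1]$ exceeds any fixed threshold (in particular $2e^{K^2}$) once $A$ is large enough.

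For the variance, view $\tilde\bbP$ as the $\go$-marginal of the joint law on $(\go,\tau)$ in which $\tau$ is sampled from the pure renewal law conditioned on $d,f\in\tau$ and, given $\tau$, the $\go_n$'s are independent with Boltzmann weights at the sites of $\tau$. This yields
\begin{equation*}
\tilde{\mathrm{Var}}(X_1)\,=\,\tilde\bbE\bigl[\tilde{\mathrm{Var}}(X_1\,|\,\tau)\bigr]\,+\,\tilde{\mathrm{Var}}\bigl(\tilde\bbE[X_1\,|\,\tau]\bigr).
\end{equation*}
The first, ``noise'' term is at most $\sum_{\underline i} V_k(\underline i)^2\le 2$ by \eqref{eq:cV}, up to $O(\gb^2)$ corrections coming from $\M''(\gb)/\M(\gb)-1$ at the sites of $\tau$. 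The second term is the variance of the renewal $U$-statistic $Y(\tau):=\mu(\gb)^q\sum V_k(\underline i)\prod_a\ind_{i_a\in\tau}$; I would bound it by a Hoeffding-type expansion according to the overlap size $s\in\{1,\ldots,q-1\}$ between two $q$-tuples, each partial-overlap contribution reducing to a renewal-correlation sum that one shows to be of order strictly smaller than $A^{q-1}=\tilde\bbE[X_1]^2$. Chebyshev then yields $\tilde\bbP(X_1<e^{K^2})\le 4\tilde{\mathrm{Var}}(X_1)/\tilde\bbE[X_1]^2=o(1)$ as $A\to\infty$, completing the proof. The main obstacle will be this $U$-statistic variance estimate, which calls for a careful simultaneous treatment of the combinatorics of partial $q$-tuple overlaps and the slowly varying asymptotics of the intersection renewal, and which, for non-Gaussian disorder, requires control of all joint moments $\bbE\bigl[\go_n^{m_n}\exp((\gb\go_n-\log\M(\gb))\gd_n)\bigr]$ for $m_n\ge 2$.
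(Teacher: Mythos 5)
Your reduction to showing $\tilde\bbP(X_1<e^{K^2})\le\gd/2$, your identification of $\tilde\bbP$ as the $\tau$-average of the conditional tilted laws $\hat\bbP_\tau$, and your lower bound $\tilde\bbE[X_1]\ge c(\gep)A^{(q-1)/2}$ all match the paper's strategy and are correct. The gap is in the variance step, and it is fatal to the annealed Chebyshev argument. In the total-variance decomposition, the term $\mathrm{Var}\bigl(\tilde\bbE[X_1\,|\,\tau]\bigr)$ is \emph{not} $o(A^{q-1})$: it is of order $A^{q-1}$, i.e.\ of the same order as $\tilde\bbE[X_1]^2$. The reason is that at $\ga=1/2$ the renewal local time does not concentrate: $L(N)N^{-1/2}\sum_{i\le N}\gd_i$ converges in law to a constant times $\vert Z\vert$ with $Z\sim\cN(0,1)$ (this is exactly \eqref{eq:convlaw} in the paper), and more generally the normalized functional $\sum_{\ui}V_k(\ui)\gd_{\ui}$ converges in law to a non-degenerate, a.s.\ positive random variable $\eta_\infty$ whose variance is comparable to the square of its mean. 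Your Hoeffding expansion over overlap sizes $s\in\{1,\ldots,q-1\}$ implicitly discards the $s=0$ (disjoint-tuple) contributions, which would be legitimate for a $U$-statistic of independent variables but not here: the $\gd_i$'s carry long-range correlations and the $s=0$ terms dominate. Consequently Chebyshev on the annealed measure only yields $\tilde\bbP(X_1<e^{K^2})\le C$ for a constant $C$ that does not tend to $0$ as $A\to\infty$, which is not enough.

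The paper's proof avoids this by never asking the conditional mean to concentrate. It splits the problem into: (i) a one-sided statement, $\bP_{d,f}\bigl(\hat\bbE_\tau X> aA^{(q-1)/2}\bigr)\ge 1-\zeta$ (Lemma~\ref{th:fromCE}), proved via the convergence in law $\eta_N\Rightarrow\eta_\infty$ with $\eta_\infty>0$ a.s., choosing the threshold $a$ small enough that $\bP(\eta_\infty\le 3a)\le\zeta$; and (ii) a Chebyshev bound applied \emph{conditionally on $\tau$}, which only requires $\bE_{d,f}\hat\bbE_\tau[(X-\hat\bbE_\tau X)^2]=o(A^{q-1})$ (Lemma~\ref{th:square}). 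Note also that your estimate of this conditional-variance term as ``$\le 2$ up to $O(\gb^2)$ corrections'' is too optimistic: the cross terms with $\ell\ge1$ substituted deterministic factors $\mbeta\gd_n$ contribute up to order $A^{(q-1)^2/q}$, which diverges with $A$ — it just happens to still be $o(A^{q-1})$, which is all that is needed. To repair your proof you would have to replace the annealed second-moment step by the conditional decomposition together with the convergence-in-law argument for the $\tau$-functional.
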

\medskip

\begin{proof}[Proof of Proposition~\ref{th:eta}]
Recalling \eqref{eq:decomp1} and the notations for the set $\cI$ in there, we have
\begin{multline}
\label{eq:longlong}
  \bbE\left[g_{\mathcal I}(\go)\hat Z_{\go}^{\mathcal I}\right]\,=\\
  \sumtwo{d_1, f_1 \in B_{i_1}}{d_1\le f_1}
  \sumtwo{d_2, f_2\in B_{i_2}}{d_2\le f_2}\ldots \sum_{d_l\in B_{i_l}}
  K(d_1)\bbE\left[\bar g(\go) z_{d_1-k(i_1-1)}Z_{d_1-k(i_1-1),f_1-k(i_1-1)}\right] K(d_2-f_1)
  \ldots \\ \quad \quad \quad \quad \quad \quad \quad \quad \quad \quad \quad \quad 
  K(d_l-f_{l-1})\bbE\left[\bar g(\go) z_{d_l-k(m-1)} Z_{d_l-k(m-1),k}\right]\\
  \le e^l \sumtwo{d_1, f_1 \in B_{i_1}}{d_1\le f_1}
\sumtwo{d_2, f_2\in B_{i_2}}{d_2\le f_2}\ldots \sum_{d_l\in B_{i_l}}
  K(d_1)(\gd+\ind_{\{f_1-d_1\le \gep k\}})\bP(f_1-d_1\in\tau)
  K(d_2-f_1)\ldots \\
  K(d_l-f_{l-1})(\gd+\ind_{\{N-d_l\le \gep
    k\}})\bP(N-d_l\in \tau),
\end{multline}
where the factor $e^l$ in the last expression 
comes from bounding the contribution due to $h$ (recall that $h k \le 1$).
We now consider $B_{i_j}$ as the union of two sub-blocks
\begin{equation}\begin{split}
 B_{i_j}^{(1)}&:=\left\{(i_j-1)k,\dots,(i_j-1)k+\lfloor k/2 \rfloor \right\},\\
 B_{i_j}^{(2)}&:=\left\{(i_j-1)k+\lceil k/2 \rceil,\dots, i_jk \right\}.
\end{split}\end{equation}
If $d_j\in B_{i_j}^{(1)}$ then if $\gep$ is sufficiently small
($\gep \le 1/10$ suffices) we have that for $k$ sufficiently large
({\sl i.e.} $k \ge k_0(L(\cdot), \gep)$)
\begin{equation}
\sum_{f=d_j}^{d_j+\gep k}
\bP(f-d_j\in \tau )K(d_{j+1}-f)
\le 4 \left(\sum_{x=1}^{k\gep} \bP(x\in \tau)\right)
K(k(i_{j+1}-i_j)).
\end{equation}
This can be compared to
\begin{equation}
\sum_{f=d_j}^{ki_j}\bP(f-d_j\in \tau )K(d_{j+1}-f)
\, \ge\,  \frac{1}{3} \left(\sum_{x=1}^{\lfloor k/4\rfloor}\bP(x\in \tau )\right) 
K(k(i_{j+1}-i_j)),
\end{equation}
that holds once again for $k$ large. By using that
 $\sum_{x=1}^n \bP(x \in \tau)$
behaves for $n $ large like $\sqrt{n}$ times a slowly varying function 
({\sl cf.} \eqref{eq:Doney1/2}) we therefore see that
 given $\delta>0$ we can  find $\gep$ such that for any $d_j\in B_{i_j}^{(1)}$ we have
\begin{equation}
\sum_{f=d_j}^{d_j+\gep k}\bP(f-d_j\in \tau )K(d_{j+1}-f)\le \gd \sum_{f=d_j}^{ki_j}\bP(f-d_j\in \tau ) K(d_{j+1}-f).
\end{equation}
Using the same argument in the opposite way one finds that if $f_j\in B_{i_j}^{(2)}$
\begin{equation}
 \sum_{d=f_j-\gep k}^{f_j}K(d-f_{j-1})\bP(f_j-d\in \tau )\, \le\,
  \gd \sum_{d=k(i_j-1)}^{f_j}K(d-f_{j-1})\bP(f_j-d\in \tau ).
\end{equation}
Since either $d_j\in B_{i_j}^{(1)}$ or $f_j\in B_{i_j}^{(2)}$, we conclude that
\begin{multline}
 \sumtwo{d_j, f_j \in B_{i_j}}{d_j\le f_j}\ind_{\{f_j-d_j\le k\gep \}} K(d_j-f_{j-1})\bP(f_j-d_j\in \tau)K(d_{j+1}-f_j)\\
\le \gd \sumtwo{d_j, f_j \in B_{i_k}}{d_j\le f_j}  K(d_j-f_{j-1})\bP(f_j-d_j\in \tau)K(d_{j+1}-f_j).
\end{multline}
The analog estimate can be obtained for the sum over $d_l$
in \eqref{eq:longlong} (rather, it is  easier). Using this inequality $j=1\dots l$ we get our result for $\eta=2e \gd$. 

\end{proof}

\section{The $q$-body potential estimates (proof of Lemma~\ref{th:df})}

In what follows $X=X_1$ and we fix $\gd\in (0,1)$. The positive (small) number
$\gep$ is fixed too, as well as $q>(2\epsilon)^{-1}$, where 
$\epsilon$ is the same which appears in the statement 
of Theorem \ref{th:main}.


\medskip

\noindent
{\it Proof of Lemma~\ref{th:df}.}
We start by observing that, since $h=0$,
\begin{equation}
\bbE\left[ \bar g(\go) z_d Z_{d,f} \right]\, =\, 
\bE_{d,f}\left[
\bbE \left[ \bar g(\go) \exp \left( \sum_{n=d}^f
(\gb \go_n - \log \M (\gb) ) \gd_n \right) \right] \right] \bP(f-d \in \tau),
\end{equation}
where $\bP_{d,f}$ is the law of  $\tau \cap [d,f]$, conditioned to
$f,d \in \tau $. Given the random set (or renewal trajectory) $\tau$ we  introduce the probability measure
\begin{equation}\label{eq:ptau}
\hat \bbP_{\tau}(\dd\go)\, :=\,  \exp \left( \sum_{n=d}^f
(\gb \go_n - \log \M (\gb) ) \gd_n \right) \bbP(\dd \go).
\end{equation}
Note that $\go$, under $\hat \bbP_\tau$, is still a sequence
of independent random variables, but they are no longer identically distributed.
We will use that, for $d\le n\le f$,
\begin{equation}
\label{eq:2beused}
\hat \bbE_\tau \go_n \, =\, \mbeta  \gd_n \stackrel{\gb \searrow 0}\sim \gb \gd_n
\ \text{(so that }\gb/2 \le \mbeta \le 2 \gb \, \text{)} 
\ \  \text{ and } \  \
\text{var}_{\hat \bbP_\tau} \left( \go_n \right) \, \le \, 2, 
\end{equation}
where the inequalities hold for $\gb\le \gb_0$ (recall \eqref{eq:gb0}) and all relations hold uniformly in the 
renewal trajectory
$\tau$. 
On the other hand, for $n\notin \{d,\ldots,f\}$ the $\go_n$'s are IID exactly as under
$\bbP$.
We have:
\begin{multline}
\label{eq:3lhat}
\frac{
\bbE\left[ \bar g(\go) z_d Z_{d,f} \right]}
{\bP(f-d \in \tau)}
\, =\, \bE_{d,f} \hat \bbE_{\tau} \left[ \bar g(\go) \right]\, =
\\ 
\exp(-K) 
 \bE_{d,f} \hat \bbP_{\tau} \left[X \ge \exp(K^2) \right]
 +  \bE_{d,f} \hat \bbP_{\tau} \left[ X< \exp(K^2) \right]
 \, \le \\
  \exp(-K) + \bE_{d,f} \hat \bbP_{\tau} \left[ X< \exp(K^2) \right]\, \le \, 
 \frac {\gd }3 +  
  \bE_{d,f} \hat \bbP_{\tau} \left[ X< \exp(K^2) \right],
\end{multline}
where in the last step we have chosen $K $ such that 
$\exp(-K ) \le \gd /3$. 
We are now going to use the following lemma:

\medskip

\begin{lemma}
\label{th:fromCE}
If $d$ and $f$ are chosen such that 
$f-d \ge \gep k$ and 
 $X(=X_1)$ is defined as in \eqref{eq:X}, that is
$X= \sum_{{\ui}\in B_1^q} V_k({\ui})\go_{\ui}$, 
 we have  that for every $\zeta>0$ we can find $a>0$ and $A_0$ such that
\begin{equation}
\label{eq:fromCE}
 \bP_{d,f} \left( \hat \bbE_\tau X \, >\, aA^{(q-1)/2}\right)\, \ge \, 1-\zeta,
\end{equation}
for $\gb\le \gb_0$ and $A\ge A_0$.
\end{lemma}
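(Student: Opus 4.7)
The plan is a second-moment (Chebyshev) argument applied to $Y := \hat\bE_\tau X$, regarded as a random variable of $\tau$ under $\bP_{d,f}$. I will establish (a) a lower bound $\bE_{d,f}[Y] \ge c_0\, A^{(q-1)/2}$ for some $c_0>0$, and (b) a variance bound $\mathrm{Var}_{d,f}(Y) = o(\bE_{d,f}[Y]^2)$ as $A \to \infty$, and conclude via Chebyshev.

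\emph{Explicit form of $Y$.} Under $\hat\bbP_\tau$ (\eqref{eq:ptau}) the $\go_n$'s remain independent with $\hat\bE_\tau\go_n = \mbeta\,\gd_n$ for $n\in[d,f]$ and mean zero otherwise (see \eqref{eq:2beused}). Expanding $X = \sum_\ui V_k(\ui)\prod_a \go_{i_a}$ and using that $V_k$ vanishes on diagonals and is symmetric in its arguments (by the use of $\sort$ in \eqref{eq:U}), one obtains
\begin{equation*}
 Y \,=\, \sqrt{q!}\,\mbeta^{q}\,k^{-1/2}\,\tilde\bL(k)^{-(q-1)/2}\, S, \qquad S \,:=\, \sum_{d\le n_1<\cdots<n_q\le f}\ \prod_{a=2}^q R_{\frac12}(n_a-n_{a-1})\,\prod_{a=1}^q\ind_{n_a\in\tau}\, .
\end{equation*}

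\emph{Lower bound on the mean.} Writing $r(n):=\bP(n\in\tau)$, the renewal Markov property under $\bP_{d,f}$ gives $\bP_{d,f}(n_1,\dots,n_q\in\tau)=r(f-d)^{-1}r(n_1-d)\prod_{a=2}^q r(n_a-n_{a-1})\,r(f-n_q)$, so $\bE_{d,f}[S]$ equals $r(f-d)^{-1}$ times the coefficient of $s^{f-d}$ in $\hat r(s)^{2}\bigl(\widehat{R_{\frac12}\,r}(s)\bigr)^{q-1}$. By \eqref{eq:Doney1/2} and an Abelian theorem, $\hat r(s)\sim \const/(\sqrt{1-s}\,L(1/(1-s)))$; and by the definition of $\tilde L$ (which encodes $\sum_{n\le N} R_{\frac12}(n)r(n)\asymp \tilde L(N)$), $\widehat{R_{\frac12}\,r}(s)\sim \const\,\tilde L(1/(1-s))$ as $s\to 1^-$. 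Their product behaves like $\tilde L(1/(1-s))^{q-1}/((1-s)\,L(1/(1-s))^{2})$, so Karamata's Tauberian theorem (plus a standard averaging/monotonicity argument to pass from partial sums to individual coefficients of a nonnegative sequence) yields $\bE_{d,f}[S]\ge c_1 \sqrt{k}\,\tilde L(k)^{q-1}/L(k)$ uniformly for $f-d\in[\gep k,k]$ and $k$ large. Substituting,
\begin{equation*}
 \bE_{d,f}[Y]\,\ge\, c_2\,\mbeta^{q}\,\frac{\tilde L(k)^{(q-1)/2}}{L(k)}\, .
\end{equation*}
Combining with $\mbeta\ge\gb/2$ and the defining relation of \eqref{eq:k}, which gives $\tilde L(k)^{(q-1)/2}/L(k)\ge A^{(q-1)/2}\gb^{-q}$, I get $\bE_{d,f}[Y]\ge c_3 A^{(q-1)/2}$ for a constant $c_3>0$ depending only on $q$.

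\emph{Variance bound.} Expanding, $\bE_{d,f}[Y^{2}]$ is a double sum over ordered $q$-tuples $\ui,\uj$, weighted by $V_k(\ui)V_k(\uj)\bP_{d,f}(\ui\cup\uj\subset\tau)$, which I partition by the overlap $m:=|\ui\cap\uj|\in\{0,1,\dots,q\}$. For $m=0$, a repetition of the Tauberian analysis above (applied separately to the $\ui$- and $\uj$-convolutions once one factorizes the gap integral via the Markov property) shows that each interleaving pattern contributes a term equal to $\bE_{d,f}[Y]^{2}(1+o(1))$ as $k\to\infty$; summing over the finitely many interleaving patterns still gives $\bE_{d,f}[Y]^{2}(1+o(1))$. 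For $m\ge 1$ the joint set $\ui\cup\uj$ has $2q-m$ distinct points, so the corresponding convolution has fewer gap factors, while the weight still contains $2(q-1)$ factors $R_{\frac12}$; a careful case-by-case Tauberian comparison shows each such contribution is $O(\bE_{d,f}[Y]^{2}/\tilde L(k)^{m})$, which is $o(\bE_{d,f}[Y]^{2})$ since $\tilde L(k)\to\infty$ by \eqref{eq:Lassumption} (and $k\to\infty$ as $A\to\infty$). Consequently $\mathrm{Var}_{d,f}(Y) = o(\bE_{d,f}[Y]^{2})$, and Chebyshev yields $\bP_{d,f}(Y\ge \bE_{d,f}[Y]/2)\ge 1-\zeta$ for $A\ge A_{0}(\zeta,q)$, proving the lemma with $a:=c_{3}/2$.

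\emph{Main obstacle.} The hard part is the overlap estimate ($m\ge 1$) in the second-moment bound: one must enumerate the combinatorial patterns by which two ordered $q$-tuples can overlap and interleave, then control each resulting convolution uniformly. The normalization $\tilde\bL(k)^{(q-1)/2}$ in \eqref{eq:useR2} and the choice \eqref{eq:k} of $k$ are tuned precisely so that the $m=0$ term matches $\bE[Y]^{2}$ and each overlap contribution is smaller by a factor of at least $\tilde L(k)$; verifying this balance is the core technical content of the argument.
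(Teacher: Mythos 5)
Your reduction to the object $Y=\hat\bbE_\tau X=\mbeta^q\sum_{\ui}V_k(\ui)\gd_{\ui}$ and your lower bound on its mean, $\bE_{d,f}[Y]\ge c\,\mbeta^q\,\tilde L(k)^{(q-1)/2}/L(k)\ge c'A^{(q-1)/2}$, are correct and consistent with what the paper does. The fatal step is the variance bound: the claim $\mathrm{Var}_{d,f}(Y)=o(\bE_{d,f}[Y]^2)$ is false, so the Chebyshev conclusion collapses. The reason is that for an $\ga=1/2$ renewal the indicators $\gd_i$ carry long-range positive correlations of the same order as the product of the means: while each inner gap-sum $\tilde \bL(N)^{-1}\sum_{j}R_{1/2}(j-i)\gd_j$ does concentrate (this is the paper's Lemma~\ref{th:CE}), the remaining outer factor is essentially the local time $\sum_{i}\gd_i$, whose fluctuations are of the \emph{same order} as its mean: $L(N)N^{-1/2}\sum_{i\le N/2}\gd_i\Rightarrow |Z|/(2\sqrt\pi)$ with $Z\sim\cN(0,1)$ (see \eqref{eq:convlaw}). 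Consequently $Y/\bE_{d,f}[Y]$ converges in law to $|Z|/\bE|Z|$, and $\mathrm{Var}_{d,f}(Y)/\bE_{d,f}[Y]^2$ tends to $\mathrm{Var}(|Z|)/(\bE|Z|)^2=\pi/2-1>0$. In your own bookkeeping this shows up already at overlap $m=0$: two disjoint $q$-tuples are \emph{not} asymptotically decorrelated under $\bP_{d,f}$, so the no-overlap term is $C\,\bE_{d,f}[Y]^2$ with $C>1$ bounded away from $1$, not $\bE_{d,f}[Y]^2(1+o(1))$. (Your statement that ``each interleaving pattern contributes $\bE_{d,f}[Y]^2(1+o(1))$'' and that the sum over patterns is still $\bE_{d,f}[Y]^2(1+o(1))$ is also internally inconsistent.) A Paley--Zygmund variant would only give a probability bounded below by a fixed constant, not $1-\zeta$ for arbitrary $\zeta$.

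The statement is nonetheless true because it does not require concentration around the mean: $a$ is allowed to depend on $\zeta$. The paper's proof exploits exactly this. After reducing to the unconditioned renewal on $\{0,\dots,\lfloor N/2\rfloor\}$ (paying a bounded Radon--Nikodym factor for removing the constraint $N\in\tau$), it proves that the normalized sum $\eta_N$ converges \emph{in law} to a limit $\eta_\infty$ which is a positive multiple of $|Z|$, hence a.s.\ positive though non-degenerate; one then picks $\varrho=\varrho(\zeta)$ small enough that $\bP(\eta_\infty\ge 3\varrho)\ge 1-c\zeta/2$. The convergence in law is obtained by a telescoping argument: each factor $\tilde\bL(N)^{-1}\sum_j R_{1/2}(j)\gd_j$ is replaced by the constant $\const/2\pi$ via an $L^2$ (Chung--Erd\H{o}s type) estimate, and the single surviving random factor is the local time, identified through the stable limit of $\tau_N$. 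To repair your argument you would have to replace the Chebyshev step by this convergence-in-law (or at least a ``uniformly non-degenerate lower tail near $0$'') statement; the second-moment method alone cannot deliver $1-\zeta$ here.
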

\medskip

We apply this lemma by setting $\zeta=\gd/3$ (so $a$ is fixed once
$\gd$ is chosen)
so that,
if we choose $K$ such that $2\exp(K^2)=aA^{(q-1)/2}$
(note that, by choosing $A$ large we make $K$ large
and we automatically satisfy the previous requirements on $K$),
we have 
$\bP_{d,f} \left( \hat \bbE_\tau X < 2
\exp(K^2)\right) \le \gd/3  $, so that, in view of \eqref{eq:3lhat},
we obtain
\begin{equation}
\label{eq:2/3and}
\begin{split}
\frac{
\bbE\left[ \bar g(\go) z_d Z_{d,f} \right]}
{\bP(f-d \in \tau)}
\, &\le \, \frac {2\gd}3\, +\,  \bE_{d,f} \hat \bbP_{\tau} \left[ X-\hat \bbE_\tau X \le  -\exp(K^2) \right]
\\
&\le \, \frac {2\gd}3\, +\,  \frac{4}{a^2 A^{q-1}}
\bE_{d,f} \hat \bbE_{\tau} \left[ \left(X-\hat \bbE_\tau X\right)^2 \right]\, .
\end{split}
\end{equation}

The conclusion now follows 
as soon as we can show that the second moment appearing  in the last term of \eqref{eq:2/3and}
is $o(A^{q-1})$ for $A$ large. But this is precisely 
what is granted by the 
next lemma:

\medskip

\begin{lemma}
\label{th:square}
There exist $A_0>0$  such that
\begin{equation}
\label{eq:square}
 \bE_{d,f} \hat \bbE_{\tau} \left[ \left(
X-\hat \bbE_\tau X \right)^2 \right] \, \le \, A^{(q-1)^2/q} ,
\end{equation}
 for every $\gb \le \gb_0$ and every $A\ge A_0$. 
\end{lemma}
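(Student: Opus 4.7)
The plan is to expand the variance of $X=X_1$ under $\hat\bbP_\tau$ by exploiting the fact that the $\go_n$'s remain independent under the tilted law, then to combine the resulting renewal-function estimates with the defining scaling \eqref{eq:k} of $k$ in order to convert the final bound into a power of $A$.

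\emph{Expansion of the variance.} Since $V_k$ is supported on $q$-tuples with distinct entries, we may identify each $\ui$ with its underlying $q$-element set $I\subseteq B_1$ and write
\[
\hat\bbE_\tau[(X-\hat\bbE_\tau X)^2]=\sum_{I,J}V_k(I)V_k(J)\,\mathrm{cov}_{\hat\bbP_\tau}(\go_I,\go_J).
\]
Setting $R:=I\cap J$, $\mu_a=\mbeta\gd_a$ and $\sigma_a^2=\mathrm{var}_{\hat\bbP_\tau}(\go_a)$, the covariance vanishes when $R=\emptyset$ by independence, and otherwise equals
\[
\prod_{a\in I\triangle J}\mu_a\,\sum_{\emptyset\ne T\subseteq R}\prod_{a\in T}\sigma_a^2\prod_{a\in R\setminus T}\mu_a^2.
\]
Using \eqref{eq:2beused} to bound $\sigma_a^2\le 2$ and $\mu_a\le 2\gb\gd_a$, this is in absolute value at most $\sum_{\emptyset\ne T\subseteq R}2^{|T|}\mbeta^{2(q-|T|)}\prod_{a\in S_T}\gd_a$, with $S_T:=(I\triangle J)\cup(R\setminus T)$ of cardinality $2q-|R|-|T|$.

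\emph{Renewal estimate and reorganization of the sum.} Taking $\bE_{d,f}$ turns $\prod_{a\in S_T}\gd_a$ into $\bP_{d,f}(S_T\subseteq\tau)$; by the Markov property of the renewal,
\[
\bP_{d,f}(S_T\subseteq\tau)=\frac{1}{\bP(f-d\in\tau)}\prod_{j=0}^{|S_T|}\bP(s_{j+1}-s_j\in\tau),
\]
with $\{s_0=d<s_1<\dots<s_{|S_T|+1}=f\}=\{d,f\}\cup S_T$, and each factor on the right is controlled by $R_{1/2}$ via \eqref{eq:Doney-bound}. For each pair $(r,t)$ with $1\le t\le r\le q$, I would then reorganize the double sum over $(I,J)$ with $|R|=r$ and $|T|=t$ by parameterizing the sorted union $I\cup J$ (of size $2q-r$) and the assignment of its entries to $T$, $R\setminus T$, $I\setminus J$ or $J\setminus I$. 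The product $V_k(I)V_k(J)$, by its definition, unpacks into $2(q-1)$ renewal factors $R_{1/2}$ along the gaps of the two separate sortings of $I$ and $J$. Summing each ``free'' position using $\sum_{n\le k}R_{1/2}(n)^2\le C\tilde\bL(k)$ and the monotonicity of $x\mapsto 1/(\sqrt x\bL(x))$ from Remark \ref{rem:L}, and combining with the prefactor $1/(q!\,k\,\tilde\bL(k)^{q-1})$ appearing in $V_k(I)V_k(J)$ and with $1/\bP(f-d\in\tau)\le C\sqrt{k}\,\bL(k)$, one finds that the $(r,t)$-contribution to $\bE_{d,f}\hat\bbE_\tau[(X-\hat\bbE_\tau X)^2]$ is bounded, up to slowly varying corrections and a $q$-dependent constant, by
\[
\Bigl(\frac{\gb^{2q/(q-1)}\,\tilde\bL(k)}{\bL(k)^{2/(q-1)}}\Bigr)^{(q-1)(q-t)/q}.
\]
By the definition \eqref{eq:k} of $k$, the ratio in parentheses is at most $A$; since the worst case is $t=1$, the total variance is bounded by $A^{(q-1)^2/q}$ once $A$ is taken large enough to absorb the slowly varying corrections and the constants, as required.

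\emph{Main obstacle.} The heart of the argument lies in the combinatorial and analytic bookkeeping of the previous step: the $2(q-1)$ gap-factors of $V_k(I)V_k(J)$ come from the \emph{separate} sortings of $I$ and of $J$, which in general do not align with the sorted union $I\cup J$ that governs the gap-factors of $\bP_{d,f}(S_T\subseteq\tau)$. Decoupling these two ``sorting geometries'' will require a triangle-type bound exploiting the monotonicity properties of $R_{1/2}$ (guaranteed by the choice of $\bL$ in Remark \ref{rem:L}) before iteratively applying $\sum R_{1/2}^2\le C\tilde\bL$. The sharpening from the naive $A^{q-1}$ (which would follow from Cauchy--Schwarz) down to $A^{(q-1)^2/q}$ is precisely what makes the bound \eqref{eq:2/3and} in the proof of Lemma \ref{th:df} just usable: each free $\gd$-factor ``costs'' a power of $\bL(k)/\sqrt k$, which via \eqref{eq:k} gets converted into a power of $A^{-1/q}$, exactly accounting for the missing factor $A^{(q-1)/q}$.
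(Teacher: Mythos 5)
Your overall strategy is the same as the paper's: expand the second moment of $X$ under $\hat\bbP_\tau$ using independence of the $\go_n$'s, reduce to a sum of products of renewal gap-factors $R_{1/2}$, collapse those sums via $\sum_{n\le k}R_{1/2}(n)^2\lesssim \tilde\bL(k)$, and convert the result into a power of $A$ through the defining relation \eqref{eq:k}. Your covariance expansion over pairs of subsets $(I,J)$ with the inclusion--exclusion over $T\subseteq I\cap J$ is a valid (if slightly finer) reorganization of the paper's expansion, which instead centers first, writing $\go_n=\hat\go_n+\mbeta\gd_n$ and decomposing by the number $\ell$ of mean-part factors; and your final exponent arithmetic --- the $(q-t)$-term contributing $\bigl(\gb^{2q/(q-1)}\tilde\bL(k)/\bL(k)^{2/(q-1)}\bigr)^{(q-1)(q-t)/q}\le A^{(q-1)(q-t)/q}$ with worst case $A^{(q-1)^2/q}$ --- matches the paper's \eqref{eq:llb} exactly.

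However, there is a genuine gap, and you have named it yourself: the step you call the ``main obstacle'' is not an obstacle to be deferred, it \emph{is} the proof. The difficulty is that $V_k(I)V_k(J)$ contributes $2(q-1)$ factors $R_{1/2}$ along the gaps of the two \emph{separate} sortings of $I$ and $J$, while the renewal estimate for $\bP_{d,f}(S_T\subseteq\tau)$ contributes factors along the gaps of the sorted \emph{union}; these three interleaved chains of gap-factors cannot be summed coordinate-by-coordinate without a systematic scheme, and the count of how many factors $\tilde\bL(k)$ versus how many factors of the final ``three-edge'' diagram $kR_{1/2}(f-d)/L(k)^2$ one collects is precisely what determines the exponent of $A$. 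The paper resolves this by classifying index configurations into diagrams (equivalence classes under order-preserving translations), and then running a trimming procedure: external vertices are removed via \eqref{eq:basicbound}--\eqref{eq:basicbound2} at a cost $C_L\tilde L(k)$ each, and internal vertices are removed one at a time via the three-edge estimate \eqref{eq:ffR}, which trims three edges but rebuilds one, again at cost $C_L\tilde L(k)$; after $q+\ell-2$ removals one is left with a fixed four-vertex diagram worth $C k R_{1/2}(f-d)/L(k)^2$. Without carrying out this bookkeeping (or an equivalent), the claimed bound on the $(r,t)$-contribution is an assertion, not a derivation, so the proposal as written does not establish the lemma.
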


\medskip

\noindent
{\it Proof of Lemma~\ref{th:square}.}
We start by introducing the notation 
$\hat \go_{n} :=
\go_{n} - \mbeta \gd_{n}{\bf 1}_{\{d\le n\le f\}}$ and by observing that
\begin{equation}
\begin{split}
\hat \bbE_{\tau} \left[ \left(
X-\hat \bbE_\tau X \right)^2 \right]
\, &= \, 
\hat \bbE_{\tau} \left[ \left( \sum_{\ui \in B_1^q} V_k({\ui}) 
\prod_{a=1}^q \left( \hat \go_{i_a} + \mbeta \gd_{i_a} 
{\bf 1}_{\{d\le i_a\le f\}}
\right)
- \mbeta ^q \sum_{\ui \in \{d,\ldots,f\}^q} V_k({\ui}) \gd_{\ui}
\right) ^2\right]
\\
&\le C(q)\, \hat \bbE_{\tau} \left[\left(
\sum_{\ell=0}^{q-1} \mbeta ^{\ell} \sum_{\ui \in B_1^{q-\ell}}
\sum_{\uj \in \{d,\ldots,f\}^\ell}  V_k({\ui \,  \uj}) \hat \go_{\ui} \gd_{\uj} \right)^2\right]
\\
& \, \le \, C(q)
\sum_{\ell=0}^{q-1} \mbeta ^{2\ell}
\sum_{\ui \in B_1^{q-\ell}}
\sum_{\uj,\um \in \{d,\ldots,f\}^\ell} 
 V_k({\ui \,  \uj}) V_k({\ui \,  \um}) \gd_{\uj} \gd_{\um} 
\, ,
\end{split}
\end{equation}
where $\ui \, \uj \in B_1^q$ is the concatenation of $\ui$ and $ \uj$
and in the last step we have first used the Cauchy-Schwarz inequality,
the fact that the $\hat \go$ variables are independent and centered and
\eqref{eq:2beused}.  
\begin{rem}
  \rm
\label{rem:nonumero}
Here and in the following, we adopt the convention that
$C(a,b,\ldots)$ is a positive constant (which depends on the
parameters $a,b,\ldots$), whose numerical value may change from line
to line.
\end{rem}
Therefore
\begin{equation}
\label{eq:bfl1}
\bE_{d,f}
\hat \bbE_{\tau} \left[ \left(
X-\hat \bbE_\tau X \right)^2 \right]
\, \le \, C(q)
\sum_{\ell=0}^{q-1} \mbeta ^{2\ell}
\sum_{\ui \in B_1^{q-\ell}}
\sum_{\uj, \um \in \{d, \ldots , f\}^\ell}
 V_k({\ui \,  \uj}) V_k({\ui \,  \um}) \bE_{d,f} \left[\gd_{\uj} \gd_{\um} \right].
\end{equation}
Let us point out immediately that we know how to deal with the $\ell=0$ case:
it is simply $ C(q) \sum_{\ui \in B_1^q} V_k({\ui})^2$  and it is therefore bounded by 
$ 2C(q)$ ({\sl cf.} \eqref{eq:cV}).
By using the notation and the bounds in Remarks \ref{rem:L} and \ref{rem:V},
together with the renewal property, we readily see that 
\begin{equation}
\label{eq:useR}
\bE_{d,f} \left[\gd_{\uj} \gd_{\um} \right] \, \le \, 
\frac{\const _L ^{-(2\ell+1)}}{\bP( f-d \in \tau)}
\prod_{a=1}^{ 2\ell+1}
R_{\frac12} \left( r_a-r_{a-1}\right)\, \le \, 
\frac{\const _L ^{-(2\ell+1)}}{R_{\frac12} \left(f-d \right)}
\prod_{a=1}^{ 2\ell+1}
R_{\frac12} \left( r_a-r_{a-1}\right),
\end{equation}
for $\uj , \um\in \{d, \ldots, f\}^\ell$, $r= \sort(\uj \, \um)$,
$r_0:=d$ and $r_{2\ell+1}:=f$.  A notational simplification may be
therefore achieved by exploiting further Remark~\ref{rem:V}, namely by
using \eqref{eq:U}, so that \eqref{eq:useR} becomes
\begin{equation}
\label{eq:useR.1}
\begin{split}
\bE_{d,f} \left[\gd_{\uj} \gd_{\um} \right] \, &\le \, 
\const _L ^{-(2\ell+1)}
R_{\frac12} \left(f-d \right)^{-1} R_{\frac12}(\min(\uj\, \um) -d) 
U(\uj\, \um) R_{\frac12}(f-\max(\uj\, \um) )
\\
&= \, 
\const _L ^{-(2\ell+1)}
R_{\frac12} \left(f-d \right)^{-1} 
U(d\, \uj\, \um \, f) \, 
.
\end{split}
\end{equation}
By inserting \eqref{eq:useR.1} and \eqref{eq:useR2} into 
\eqref{eq:bfl1} we get to 
\begin{equation}
\label{eq:bfl2}
\begin{split}
&\bE_{d,f} 
\hat \bbE_{\tau} \left[ \left(
X-\hat \bbE_\tau X \right)^2 \right]
\\ 
& \le  C\left(1+\frac1
{k \tilde L(k)^{q-1}R_{\frac 12}(f-d) } 
\sum_{\ell=1}^{q-1} \mbeta ^{2\ell}
\sum_{\ui \in B_1^{q-\ell}}
 \sum_{\uj, \um \in \{d, \ldots , f\}^\ell}
U(\ui\, \uj) U(\ui\, \um) U(d\, \uj\, \um \, f)\right)
\\
& \le C \left(1+
 \frac1
{k \tilde L(k)^{q-1} R_{\frac 12}(f-d)} 
\sum_{\ell=1}^{q-1} \mbeta ^{2\ell} 
\sum_{\ui \in \sort(B_1^{q-\ell})}
\sum_{\uj, \um \in \sort(\{d, \ldots , f\}^\ell)}
U(\ui\, \uj) U(\ui\, \um) U(d\, \uj\, \um\, f)\right),
\end{split}
\end{equation}
where of course $\sort(\{1, \ldots, a\}^n)= \{ \ui \in \{1, \ldots,
a\}^n: \, i_1 \le i_2 \le \ldots \le i_n\}$ and $C=C(q,L(\cdot))$,
with the convention of Remark \ref{rem:nonumero}.

\smallskip

The rest of the proof is devoted to bounding
\begin{equation}
\label{eq:Tell}
T_{q,\ell} \, := \, 
\sum_{\ui \in \sort(B_1^{q-\ell})}
\sum_{\uj, \um \in \sort(\{d, \ldots , f\}^\ell)}
U(\ui\, \uj) U(\ui\, \um) U(d\, \uj\, \um \,f).
\end{equation}
This is relatively heavy, because, while $\ui$, $\uj$ and $\um$ are
ordered, $\ui\, \uj$, $\ui\, \um$ and $\uj\, \um$ are not.  We have
therefore to estimate the contributions given by every mutual
arrangement of $\ui$, $\uj$ and $\um$.  This will be done in a
systematic way with the help of a {\sl diagram representation} (the
diagrams will correspond to groups of configurations $\ui$, $\uj$ and
$\um$ that have the same {\sl mutual order}).  \smallskip

Fix $q$ and $\ell$ and choose $\ui\in \sort(\{1, \ldots, k\}^{q-\ell})$ and  $\uj, \um
\in \sort(\in \{d, \ldots, f\}^{\ell})$.
The construction of the diagram of  $\ui$, $\uj$ and $\um$ is done in steps:

\smallskip

\begin{enumerate}
\item
Mark with $\Box$'s
on the horizontal axis (the dotted line in Figure~\ref{fig:diagram-ex})
  the positions $i_1 \le i_2\le \dots\le i_{q-\ell}$.
Do the  same for $\uj$ (using $\circ$) and $\um$ (using $\bullet$).
As explained in Remark~\ref{rem:superpos} below, we may and do assume 
that symbols do not
 sit on the same position (this amounts to assuming
 strict inequality between all  indexes).
\item 
Consider the set of  $\Box$'s and $\circ$'s, and connect
all nearest neighbors  with a line (the line may be
straight or curved for the sake of visual clarity).   
\item Do the same for the set of  $\Box$'s and $\bullet$'s.
\item Do the same for the set of  $\circ$'s and $\bullet$'s.
\item Consider   the set of $\circ$'s and $\bullet$'s and connect the element that
is closest to $d$ with $d$. Do the analogous action  with the element which
is closest to $f$. The point $d$ is always to the left of $\circ$'s and $\bullet$'s
and the point $f$ is always to the right.
\end{enumerate}

\smallskip

We have now a graph with vertex set $\{d,f, \ui, \uj, \um\}$.
Vertices have a type ($\Box$, $\circ$ and $\bullet$): $d$ and $f$ have
their own type too, graphically this type is $\vert$.  We actually
consider the {\sl richer} graph with vertex set given by the points
and the type of the point.  The edges are the ones built with the
above procedure; note that there may be double edges: we keep them and
call them {\sl twin} edges.  Two indexes configurations are equivalent
if they can be transformed into each other by translating the indexes
without allowing them cross (and, of course, keeping their type; the
vertices $d$ and $f$ are fixed).  This leads to equivalence classes
and a class is denoted by $\cG$: we split the sum in \eqref{eq:Tell}
according to these classes, that is $T_{q,\ell}= \sum_{\cG} T_{q,\ell,
  \cG}$. The bound we are going to find is rather rough: we are going
in fact to bound $\max _{\cG} T_{q,\ell, \cG}$.

\smallskip

\begin{rem}
\label{rem:superpos}
\rm
We have built equivalent classes of non-superposing points only.
However in estimating $T_{q,\ell, \cG}$ we will allow the index summations to include
coinciding indexes so  in the end we include (and over-estimate) the contributions
of all the configurations of indexes. 
\end{rem}

\smallskip

\begin{figure}[hlt]
\begin{center}
\leavevmode
\epsfxsize =14.5 cm
\psfragscanon
\psfrag{0}[c][l]{\small $0$}
\psfrag{k}[c][l]{\small $k$}
\psfrag{d}[c][l]{\small $d$}
\psfrag{f}[c][l]{\small $f$}
\psfrag{1}[c][l]{\small Start}
\psfrag{2}[c][l]{\small Trim step $1$}
\psfrag{3}[c][l]{\small Trim step $2$}
\psfrag{4}[c][l]{\small Trim step $3$}
\psfrag{5}[c][l]{\small Trim step $4$}
\psfrag{i1}[c][l]{\small $i_1$}
\psfrag{i2}[c][l]{\small $i_2$}
\psfrag{i3}[c][l]{\small $i_3$}
\psfrag{i4}[c][l]{\small $i_4$}
\psfrag{i5}[c][l]{\small $i_5$}
\psfrag{j1}[c][l]{\small $j_1$}
\psfrag{m1}[c][l]{\small $m_1$}
\psfrag{j2}[c][l]{\small $j_2$}
\psfrag{m2}[c][l]{\small $m_2$}
\epsfbox{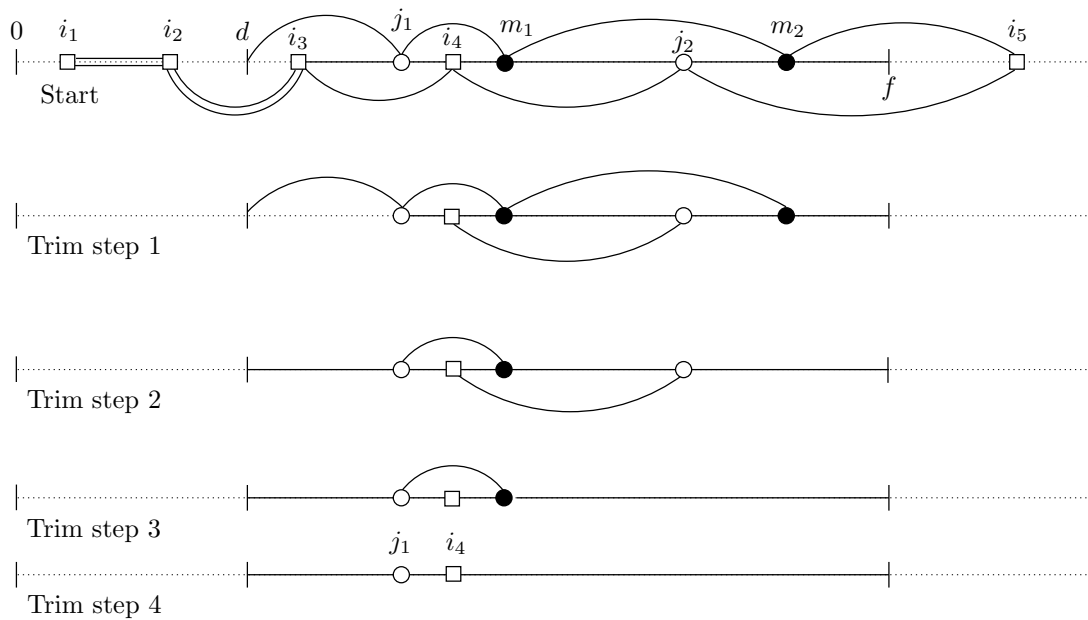}
\caption{\label{fig:diagram-ex}A diagram arising for $q=7$ and $\ell=2$ and the successive trimming procedure explained in the text}
\end{center}
\end{figure}
 
 \medskip
 
 In order to estimate $T_{q,\ell, \cG}$ we proceed to a graph trimming
 procedure that will be then matched to successive estimates on 
  $T_{q,\ell, \cG}$.
 
 The trimming procedure is the following:
 \smallskip

\begin{enumerate}
\item If there are $\Box$ vertices that are left of leftmost element of the 
set of $\circ$ and $\bullet$ vertices (we may call these $\Box$
vertices {\sl external vertices}), we erase them and we trim
the edges linking them. Note that if we do this procedure left to right,
we  erase one vertex and two edges at a time: at each step 
we trim a couple of twin edges, except at the last step in
which the edges are not twin.
We do the same with the $\Box$ vertices that are right 
of the rightmost element of the 
set of $\circ$ and $\bullet$ vertices (if any, of course). The trimming procedure goes 
this time right to left. We call {\sl internal} the vertices that are left. 
\item Now we start (say) right and we erase the rightmost internal
  vertex (in this first step is necessarily a $\circ$ or a $\bullet$,
  later it may be a $\Box$; we do not touch $d$ and $f$). Note that it
  has two edges (linking to vertices on the left) and one edge linking
  it with $f$: we trim these three edges and we add an edge linking
  the rightmost vertex (it can have any type among $\Box$, $\circ$ and
  $\bullet$) that is still present to $f$ with an edge.
\item We repeat step (2) till one is left with only four  vertices (among them, 
only  one may be a $\Box$) and three edges. {\sl Trim step 4} in Figure~\ref{fig:diagram-ex}
is a possible fully trimmed configuration.
\end{enumerate}

\medskip

Let us now explain  the link between the trimming procedure and quantitative estimates
on $T_{q, \ell, \cG}$.
Also this is done by steps corresponding precisely to the three steps
of the trimming procedure:

\medskip

\begin{enumerate}
\item Consider the external $\Box$ vertices connected to the rest of the graph by twin edges, if any.
 We start by the leftmost (if there is at least one on the left: the procedure from the right is absolutely analogous)
and notice that we can sum over the index, that is $i_1$, and use that, thanks to \eqref{eq:Lb} (recall
\eqref{eq:Ltilde} and \eqref{eq:R12}), there exists $C_L$ such that for $0<n \le k$
\begin{equation}
\label{eq:basicbound}
\sum_{i=0}^n (R_{\frac 12}(n-i))^2 \, \le \, C_L \tilde L(k).
\end{equation}
We are of course over-estimating the real sums that are, in most cases,
restricted to small portions of $B_1$. 
This estimate allows {\sl trimming} $T_{q, \ell, \cG}$ in the sense that
it gives the bound $T_{q, \ell, \cG} \le C_L^r \tilde L(k)^r T_{q-r, \ell, \cG^\prime}$, 
where $r$ is the number of twin edges and $\cG^\prime$ is the graph, with $q-r+\ell$ vertices
that is left after this procedure. This step can be repeated also for the last 
external $\Box$'s (there are at most two, one on the left and one on the right). In these cases
we simply use that $R_{\frac 12}(\cdot)$ is decreasing so that if 
$0\le n \le n^\prime $
\begin{equation}
\label{eq:basicbound2}
\sum_{i=0}^n R_{\frac 12}(n-i)R_{\frac 12}(n^\prime -i)  \, \le \, \sum_{i=0}^n (R_{\frac 12}(n-i))^2,
\end{equation}
and then \eqref{eq:basicbound} applies. So this extra trimming yields again $C_L \tilde L(k)$
to the power of half the number of edges trimmed, that is, to the power of the number of the external vertices.
\item We are left with the internal vertices and we start erasing 
the vertex (it is necessarily $\circ$ or $\bullet$ at this stage)
which is most on the right. So we sum over its index and use the bound: there exists a constant $C_L$ such that
for $(0\le )d\le n^\prime \le n \le f(\le k)$
we have 
\begin{multline}
\label{eq:ffR}
\sum_{j=n}^f R_{\frac 12}(j-n)R_{\frac 12}(j-n^\prime)R_{\frac 12}(f-j)\, 
\le \, \sum_{j=0}^{f-n} R_{\frac 12}(j)^2 R_{\frac 12}((f-n)-j)
\\ 
\le \, C_L \tilde L (f-n) R_{\frac 12}(f-n) \, \le \, C_L \tilde L (k) R_{\frac 12}(f-n),
\end{multline}
where in the first inequality we have used the monotonicity of $R_{\frac 12}(\cdot)$,
in the second we have explicitly estimated the sum by using standard results
on regularly varying function and \eqref{eq:Ltilde-prop}. The last inequality is just the monotonicity
of $\tilde L(\cdot)$. This means that this trimming step brings once  again a multiplicative factor  
$C_L \tilde L (k)$: of course this time we have trimmed three edges, but we have also the extra 
factor $R_{\frac 12}(f-n)$ which is precisely the contribution of a longer edge that we rebuild
(see Figure~\ref{fig:diagram-gen}).
\item Keep repeating the previous step (the type of the vertices is not really important), 
trimming each time three edges, but rebuilding one too (so, in total, minus two edges), till
the graph with four vertices and three edges.
\end{enumerate}

\medskip

\begin{figure}[hlt]
\begin{center}
\leavevmode
\epsfxsize =12.5 cm
\psfragscanon
\psfrag{f}[c][l]{\small $f$}
\psfrag{a1}[c][l]{\small $a_1$}
\psfrag{a2}[c][l]{\small $a_2$}
\psfrag{a3}[c][l]{\small $a_3$}
\epsfbox{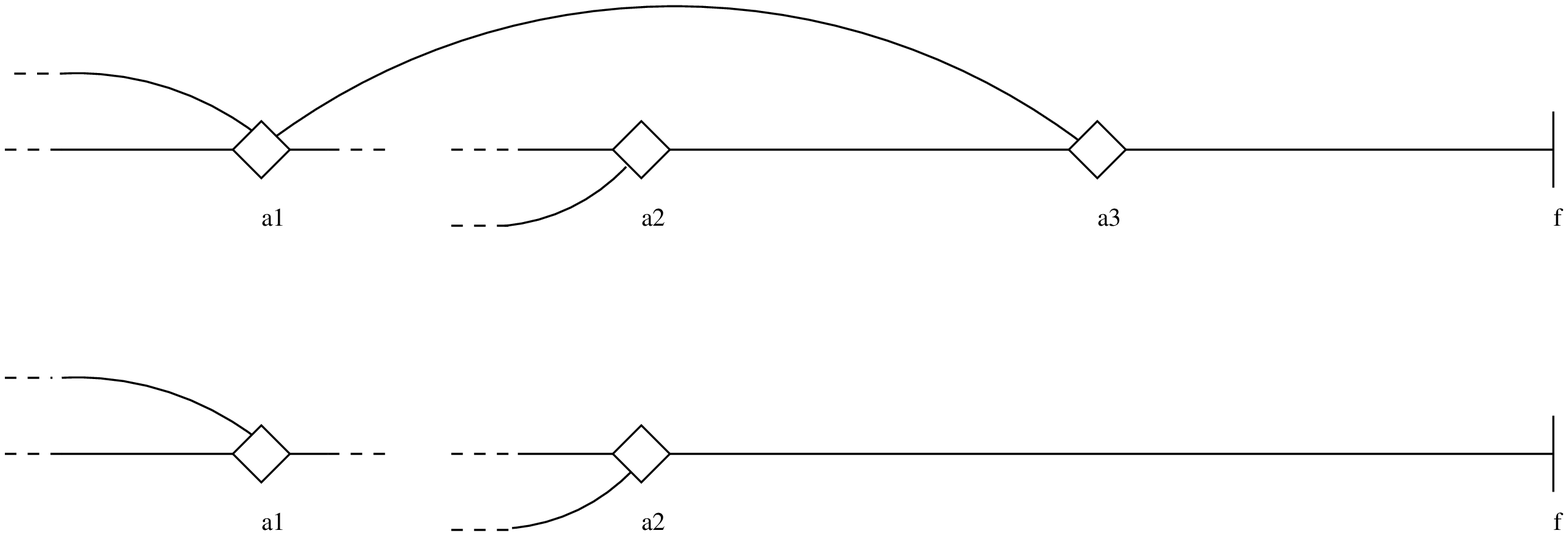}
\caption{\label{fig:diagram-gen} The second step of the trimming
  procedure corresponding to the estimate \eqref{eq:ffR}.  The symbol
  $\diamond$ may represent $\Box$, $\circ$ and $\bullet$: the choice
  is not fully arbitrary, in the sense that for example before
  starting the trimming procedure there is no edge between $f$ (or
  $d$) and a $\Box$.  However the estimate is independent of the type
  of symbols.}
\end{center}
\end{figure}

In order to evaluate the contribution of all the trimming procedure we
just need to count the number of vertices that we have erased: $q+\ell
-2$. We are now left with the contribution given by the last diagram
(four points, three edges: see for example
{\sl trim step 4} in Figure~\ref{fig:diagram-ex}), times of course $(C_L \tilde L(k))^{q+\ell
  -2}$: we bound  the last diagram using
\begin{equation}
\sum_{i=d}^f\sum_{j=i}^f R_{\frac 12}(i-d) R_{\frac 12}(j-i) R_{\frac 12}(f-j)
\, \le \, C_L \frac{\sqrt{f-d}}{L(f-d)^3}
\, \le \, C_{L, \gep} \frac{k\, R_{\frac 12}(f-d )}{L(k)^2}.
\end{equation}
where $C_L$ is once again a constant that depends only on $L(\cdot)$,
while in the last step we have used $k\ge f-d \ge \gep k$ and
\eqref{eq:Doney-bound}.  Going back to \eqref{eq:bfl2} we see that
there exists $C=C(\gep,q,L(\cdot))$ such that (with the convention of
Remark \ref{rem:nonumero})
\begin{multline}
\label{eq:varbfi}
\bE_{d,f} \hat \bbE_{\tau} \left[ \left( X-\hat \bbE_\tau X \right)^2
\right]\, \le \\ C\left(1 +\, \, \max_{\ell=1,2, \ldots, q-1}
  \frac{1}{k \tilde L(k)^{q-1} R_{\frac 12}(f-d)} \frac{\tilde
    L(k)^{q+\ell-2}k\, R_{\frac 12}(f-d
    )}{L(k)^2}\mbeta^{2\ell}\right)
\\
=\, C\left(1 + \, \max_{\ell=1,2, \ldots, q-1} \frac{\tilde L
    (k)^{\ell-1}}{L(k)^2} \mbeta^{2\ell}\right)\, \le \, C\left(1 \, +
  \, \max_{\ell=1,2, \ldots, q-1} \frac{\tilde L
    (k)^{\ell-1}}{L(k)^2} \gb^{2\ell}\,\right),
\end{multline}
where in the last line we have used $\mbeta \le 2 \gb$, for $\gb \le \gb_0$
({\sl cf.} \eqref{eq:2beused}). We now recall \eqref{eq:k} that guarantees that
\begin{equation}
\label{eq:LtL}
\frac{\tilde L (k-1)}{L(k-1) ^{2/(q-1)}} \gb ^{2q/(q-1)} \, < \, A
\ \ \text{ so that } \ \ 
\frac{\tilde L (k)}{L(k) ^{2/(q-1)}} \gb ^{2q/(q-1)} \, \le  \, 2A,
\end{equation}
where the second inequality is a consequence of the slowly varying character
of $L(\cdot) $ and $\tilde L(\cdot)$ and it holds for $k$ sufficiently large.
But this implies
\begin{equation}
\label{eq:llb}
 \frac{\tilde L (k)^{\ell-1}}{L(k)^2}
 \gb^{2\ell}\, \le \, (2A)^{(q-1)\ell/q} \, \left( \tilde L(k) L(k)^2\right)^{-1+(\ell/q)}
 \, , 
\end{equation}  
so that, by \eqref{eq:Ltilde-prop}, 
by choosing $A$ large we can make  
the quantity in \eqref{eq:llb} arbitrarily small (recall that $\ell=1, \ldots, q-1$),
so that going back to \eqref{eq:varbfi}, we see that
\begin{multline}
\bE_{d,f} 
\hat \bbE_{\tau} \left[ \left(
X-\hat \bbE_\tau X \right)^2 \right]\, \le \\
C(\gep,q, L(\cdot))\left(1 + A^{(q-1)^2/q}  
 \max_{\ell=1, \ldots, q-1}\left( \tilde L(k) L(k)^2\right)^{-1+(\ell/q)}\right) \,
\le \, A^{(q-1)^2/q} \, , 
\end{multline}
where in the last step we have used that, by \eqref{eq:Ltilde-prop},
the maximum in the intermediate term can be made arbitrarily small,
by choosing $k$ large (that is, $A$ larger than a constant depending 
on $\gep$, $q$ and $L(\cdot)$).
 This completes the proof of Lemma~\ref{th:square}.
\qed

\section{Some probability estimates (Proof of Lemma~\ref{th:fromCE})}

The proof is done in four steps.

\medskip

\noindent
{\it Step 1: reduction to an asymptotic estimate on a constrained renewal.}
In this step we show  that it is sufficient to establish
 that for every $\zeta>0$
there exists $\varrho>0$ and $N_\zeta\in\N$ such that
\begin{equation}
\label{eq:suff1}
\bP \left(\frac{\bL(N)}{\tilde \bL (N)^{(q-1)/2}}
\sum_{\ui \in \{0, \ldots, N \}^q}   V_N(\ui) \gd_{\ui} 
\, \ge \, \varrho \bigg \vert \, N \in \tau \right)\, \ge \, 1-\zeta,
\end{equation}
for $N \ge N_\zeta$.

Notice in fact  that 
$\hat \bbE_\tau X= \mbeta^q \sum_{\ui} V_k(\ui) \gd_{\ui}$,
where $\ui \in \{d, \ldots, f\}^q$. 
Since $V_k(\ui)$ is invariant under the transformation $\ui=(i_1, \ldots, 
i_q) \mapsto (i_1+n, \ldots, 
i_q+n)$ (any $n \in \bbZ$), we may very well work on $\{0, \ldots, f-d\}$,
that is on an interval $\{0, \ldots, N\}$ ($\gep k \le N \le k$) and $\tau$ is a renewal
with $\tau_0=0$ and conditioned to $N \in \tau$. With this change of
variables, \eqref{eq:fromCE} reads
\begin{equation}
\label{eq:forCE1}
\bP \left(
\mbeta^q \sum_{\ui \in \{0, \ldots, N\}^q}
 V_k(\ui) \gd_{\ui} \, \ge \, a A^{(q-1)/2} \, \bigg \vert N \in \tau
\right) \,
\ge \, 1- \zeta. 
\end{equation}
Now two observations are in order:
\begin{itemize}
\item $V_k(\ui)/V_N(\ui)= (N/k)^{1/2} [\tilde \bL (N)/ \tilde \bL(k)]^{(q-1)/2}$ so that for
$k$ sufficiently large (that is for $A$ larger than a constant depending on
$\gep $ and $L(\cdot)$) we have
\begin{equation}
\frac{V_k(\ui)}{V_N(\ui)} \, \ge \, \frac{\gep^{1/2}}2.
\end{equation} 
\item By \eqref{eq:2beused},
 \eqref{eq:k} and \eqref{eq:Lb} we see that 
 \begin{equation}
\mbeta^q \, \ge \, 2^{-q} A^{(q-1)/2} \, \frac{L(k-1)}{\tilde L(k-1)^{(q-1)/2}}
\, \ge \,  2^{-q} \const _L A^{(q-1)/2} \, \frac{\bL(N)}{\tilde \bL(N)^{(q-1)/2}}
\, .
\end{equation}
\end{itemize}

These two observations show that for $A$ sufficiently large 
\eqref{eq:forCE1} is implied by
\begin{equation}
\label{eq:for CE2}
\bP \left(
 \frac{\bL(N)}{\tilde \bL(N)^{(q-1)/2}} 
 \sum_{\ui \in \{0, \ldots, N\}^q}
 V_N(\ui) \gd_{\ui} \, \ge\, \frac{2^{q}}{\const_L \gep^{1/2}} a
\bigg\vert \, N \in \tau 
\right) \, \ge \, 1-\zeta\, .
\end{equation}
Therefore, at least if 
$A$
is  larger than a suitable constant depending on  $\gep$ and $L(\cdot)$,
it is  sufficient to
prove \eqref{eq:suff1}.

\medskip

\noindent
{\it Step 2: removing the constraint.}
In this step we claim that there exists a positive constant $c$,
that depends only on $L(\cdot)$, such that
if 
\begin{equation}
\label{eq:suff2}
\bP \left(\frac{\bL(N)}{\tilde \bL (N)^{(q-1)/2}}
\sum_{\ui \in \{1, \ldots, \lfloor N/2\rfloor \}^q}   V_N(\ui) \gd_{\ui} 
\, \ge \, \varrho \, \right)\, \ge \, 1- c \zeta,
\end{equation}
then \eqref{eq:suff1} holds. Note first of all that the 
random variable that we are estimating is smaller (since 
$V_N(\cdot) \ge 0$)
than the random variable  in \eqref{eq:suff1}, for every given $\tau$-trajectory.
It is therefore sufficient to bound the 
Radon-Nykodym derivative of the law of $\tau\cap [0, \lfloor N/2
\rfloor]$ without constraint
$N \in \tau$ with respect to the law of the same random set with the constraint.
Such an estimate can be found for example in 
\cite[Lemma A.2]{cf:GLTmarg}. 

\medskip

\noindent{\it Step 3: reduction to a convergence in law statement.}
For $\rho:= 1/(2(q-1))$ we define the subset $S_\rho (N)$ of 
$\sort(\{ 0,1, \ldots, N \}^q)$ (recall that the latter is the set of 
increasingly rearranged $\ui$ vectors) such that 
$i_j \le N ((j-1)\rho +(1/2))$ for $j=1, 2, \ldots, q$. 

The claim of this step is that \eqref{eq:suff2} follows if 
\begin{equation}
\label{eq:suff3}
\eta_N \, :=\, \frac{\bL (N)}{\tilde \bL (N)^{(q-1)/2}}
\sum_{\ui \in S_\rho (N)} V_N(\ui) \gd_{\ui} \stackrel{N \to \infty}\Longrightarrow 
\eta _\infty \ \ \ \text{ with } \ \eta_\infty\, >\, 0 \text{ a.s.}\, ,
\end{equation}
where $\Longrightarrow$ denotes convergence in law. 
 
In order to see why \eqref{eq:suff3} implies \eqref{eq:suff2} it
suffices to observe that replacing $N$ with $\lfloor N/2\rfloor$ in
\eqref{eq:suff2} (except when it already appears as $\lfloor
N/2\rfloor$) introduces an error that can be bounded by a
multiplicative constant (say, 2) for $N$ sufficiently large, so that
it suffices to show that $\bP( \eta_N \ge 2 \varrho) \ge 1-c \zeta$.
But \eqref{eq:suff3} yields $\lim_N\bP( \eta_N \ge 2 \varrho) \ge \bP(
\eta_\infty \ge 3 \varrho)$.  At this point if we choose
$\varrho:=\varrho(\zeta)$ such that $ \bP( \eta_\infty \ge 3 \varrho)=
1-(c\zeta/2)$, we are assured that for $N$ sufficiently large (how
large depends on $\zeta$) $\bP( \eta_N \ge 2 \varrho) \ge 1-c \zeta$
and we are reduced to proving \eqref{eq:suff3}.

\medskip

\noindent
{\it Step 4: proof of the convergence in law statement \eqref{eq:suff3}}.
This step depends on the following  lemma, that we prove just below:

\medskip

\begin{lemma}
\label{th:CE}
For every $\theta_0 \in (0,1)$ we have
\begin{equation}
  \lim_{N \to \infty} \sup_{\theta \in [\theta_0,1]}
  \bE\left[ \left (
      \frac 1{\tilde \bL (N)} \sum_{j=1}^{\lfloor \theta N\rfloor} 
R_{1/2}(j) {\gd_j} \, - \, \frac{\const}{2\pi}
      \right )^2 \right]\, =\, 0\, ,
\end{equation}
with $\const := \lim_{x \to \infty} \bL(x)/ L(x)(\in [1,\const _L^{-1} ])$.
\end{lemma}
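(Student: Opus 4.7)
\noindent\textit{Proof plan.}
Write $M_N(\theta):=\sum_{j=1}^{\lfloor\theta N\rfloor}R_{1/2}(j)\gd_j$. The bias--variance decomposition
\begin{equation*}
\bE\!\left[\left(\frac{M_N(\theta)}{\tilde\bL(N)}-\frac{\const}{2\pi}\right)^{\!2}\right]
\,\le\,
2\,\frac{\mathrm{var}(M_N(\theta))}{\tilde\bL(N)^{2}}
+2\left(\frac{\bE M_N(\theta)}{\tilde\bL(N)}-\frac{\const}{2\pi}\right)^{\!2}
\end{equation*}
reduces the statement to proving, uniformly in $\theta\in[\theta_0,1]$, that (i) $\bE M_N(\theta)\sim(\const/(2\pi))\tilde\bL(N)$ and (ii) $\mathrm{var}(M_N(\theta))=o(\tilde\bL(N)^{2})$. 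Uniformity on compact subsets of $(0,\infty)$ follows automatically from the pointwise asymptotics via the uniform convergence theorem for slowly varying functions \cite{cf:RegVar}, so I focus on pointwise statements.

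For the first moment, the strong renewal theorem \eqref{eq:Doney1/2}, the definition \eqref{eq:R12} of $R_{1/2}$ and $\bL(x)/L(x)\to\const$ yield
\begin{equation*}
R_{1/2}(j)\bP(j\in\tau)\;\stackrel{j\to\infty}{\sim}\;\frac{1}{2\pi\,j\,L(j)\bL(j)}\;\sim\;\frac{\const}{2\pi}\cdot\frac{1}{j\,\bL(j)^{2}}.
\end{equation*}
Since by \eqref{eq:Ltilde} $\tilde\bL$ is the primitive of $1/((1+y)\bL(y)^{2})$, summation together with slow variation of $\tilde\bL$ gives $\bE M_N(\theta)\sim(\const/(2\pi))\tilde\bL(\lfloor\theta N\rfloor)\sim(\const/(2\pi))\tilde\bL(N)$. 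The contribution of the finitely many small $j$ where the asymptotic is not yet accurate is $O(1)$ and therefore negligible because $\tilde\bL(N)\to\infty$ by \eqref{eq:Lassumption}--\eqref{eq:Lmore}.

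For the variance, the renewal identity $\bP(i\in\tau,j\in\tau)=\bP(i\in\tau)\bP(j-i\in\tau)$ for $i<j$ gives
\begin{equation*}
\mathrm{var}(M_N(\theta))=D_N+2\!\!\sum_{1\le i<j\le\lfloor\theta N\rfloor}\!\!R_{1/2}(i)R_{1/2}(j)\bP(i\in\tau)\bigl[\bP(j-i\in\tau)-\bP(j\in\tau)\bigr],
\end{equation*}
with diagonal contribution $D_N\le\sum_j R_{1/2}(j)^{2}\bP(j\in\tau)=O(1)$. I split the off-diagonal sum at $i=j/2$. For $i\le j/2$, slow variation of $L$ gives $L(j-i)\sim L(j)$ and a first-order expansion of \eqref{eq:Doney1/2} yields $|\bP(j-i\in\tau)-\bP(j\in\tau)|\lesssim i/(j^{3/2}L(j))$; for $i>j/2$, monotonicity of $R_{1/2}$ ensures $R_{1/2}(i)\le R_{1/2}(j-i)$, and I bound the difference crudely by $\bP(j-i\in\tau)\lesssim 1/(\sqrt{j-i}L(j-i))$. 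In either regime, setting $k=j-i$ and using the standard estimate $\sum_{k\le n}\bP(k\in\tau)\sim 2\sqrt{n}/L(n)$, the double sum is controlled by a constant multiple of $\sum_{j\le N}j^{-1}L(j)^{-2}\bL(j)^{-2}$. Using $L(j)\bL(j)\sim\bL(j)^{2}/\const$, this equals a constant multiple of $\int_{0}^{N}\tilde\bL'(y)/\bL(y)^{2}\,dy$; the change of variable $u=\tilde\bL(y)$ combined with \eqref{eq:Ltilde-prop} (asserting $1/\bL(y)^{2}=o(\tilde\bL(y))$) then gives $\int_{0}^{\tilde\bL(N)}o(u)\,du=o(\tilde\bL(N)^{2})$, which is the required bound.

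The main obstacle, beyond the bookkeeping of slowly varying factors, is the variance step: the covariance difference $\bP(j-i\in\tau)-\bP(j\in\tau)$ has to be exploited to extract a factor vanishing with respect to the $\tilde\bL(N)^{2}$ scale of $(\bE M_N)^{2}$. This is precisely where property \eqref{eq:Ltilde-prop}, the hallmark of the marginal case, is indispensable.
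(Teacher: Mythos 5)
Your overall architecture coincides with the paper's: bias--variance split, first moment via \eqref{eq:Doney1/2} and the definition \eqref{eq:Ltilde}, and a variance bound that ultimately reduces to $\int_0^N\frac{\dd y}{(1+y)\bL(y)^4}\ll\tilde\bL(N)^2$ (the paper isolates exactly this integral and proves the comparison in Remark~\ref{rem:auxL} from \eqref{eq:Ltilde-prop}). The diagonal term, the $i>j/2$ regime, and the first-moment asymptotics are all fine.

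The gap is in the regime $i\le j/2$, where you claim $\left|\bP(j-i\in\tau)-\bP(j\in\tau)\right|\lesssim i/(j^{3/2}L(j))$ by ``a first-order expansion of \eqref{eq:Doney1/2}''. No such expansion is available. Doney's theorem is an asymptotic equivalence $\bP(n\in\tau)=(1+\epsilon_n)/(2\pi\sqrt nL(n))$ with $\epsilon_n\to0$ at an unspecified rate; it gives no Lipschitz-type control of $n\mapsto\bP(n\in\tau)$. Already the error terms contribute $(\epsilon_{j-i}-\epsilon_j)/(2\pi\sqrt jL(j))$, which for fixed $i$ and large $j$ can vastly exceed $i/(j^{3/2}L(j))$; and even the leading term is problematic, since $L$ is merely slowly varying and $L(j-i)/L(j)-1$ need not be $O(i/j)$ (slow variation controls ratios asymptotically, with no rate). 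Without the factor $i$, the inner sum over $i\le j/2$ of $R_{1/2}(i)\bP(i\in\tau)$ produces $\tilde\bL(j)$ rather than the $j/\bL(j)^2$ you use, and the resulting bound is $O(\tilde\bL(N)^2)$ instead of $o(\tilde\bL(N)^2)$ --- i.e.\ the argument fails exactly at the marginal scale that matters. This is precisely the difficulty the paper's (Chung--Erd\H{o}s style) proof is designed to avoid: after writing $\bE[\gd_i\gd_j]=\bE[\gd_i]\bE[\gd_{j-i}]$ it never compares $\bP(j-i\in\tau)$ with $\bP(j\in\tau)$ pointwise, but instead compares the \emph{sums} $\sum_{j=1}^{n-i}R_{1/2}(i+j)\bE[\gd_j]$ and $\sum_{j=i+1}^{n}R_{1/2}(j)\bE[\gd_j]$, using only the monotonicity of $R_{1/2}$ and the upper bound \eqref{eq:Doney-bound}; after cancellation of the common range of summation only $\sum_{j\le i}$ survives, which yields the factor that you were trying to extract from a nonexistent derivative estimate. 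To repair your proof you would need either to adopt that summation/monotonicity argument or to import a local renewal theorem with rate, which the paper's hypotheses do not provide.
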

\medskip

For $p=1,2, \ldots, q$ we introduce the random variables
\begin{multline}
\eta_{N, p}\, :=\\
\left(\frac{2\pi}{\const}\right)^{p-q} \frac{\bL(N)}{N^{1/2}\tilde \bL (N)^{p-1}}
\sum_{i_1=0}^{ [N/2 ]}\sum_{i_2 =i_1+1}^{[(\rho+(1/2))N] }\ldots 
\sum_{i_p=i_{p-1}+1}^{[((p-1)\rho+(1/2))N] } \gd_{i_1}
\prod_{r=2}^p R_{1/2}\left(i_r-i_{r-1}\right) \gd_{i_r}\, ,
\end{multline}
where the product in the right-hand side has to be read as $1$ if $p=1$ and, in this case, there is only the sum over $i_1$. 
First of all remark that
$\eta_{N, q}= \sqrt{q!}\eta_N$ (recall \eqref{eq:useR2})
 and that $\eta_{N, p-1}$ is obtained 
from $\eta_{N,p}$ by removing the last term in the product, the corresponding
sum and  by multiplying by  $ 2\pi \tilde \bL(N)/\const$.
We now claim that Lemma~\ref{th:CE} implies that for $p=2, 3, \ldots, q$
\begin{equation}
\label{eq:L1andind}
\lim_{N \to \infty} \bE \left[ \left \vert \eta_{N,p}- \eta_{N, p-1}\right\vert \right] \, =\, 0\, , 
\end{equation}
which clearly reduces the problem of proving
$\eta_N \Longrightarrow \eta_\infty$ to
proving $\eta_{N,1} \Longrightarrow \eta_\infty$, and  $\eta_\infty$ has to be 
a positive random variable.
But in fact we have
\begin{equation}
\label{eq:convlaw}
(2\pi /\const)^{q-1} \frac{L(N)}{\bL(N)}\, \eta_{N,1}\, =\, 
\frac{L(N)}{\sqrt{N}}
\sum_{i=0}^{\lfloor N/2\rfloor} \gd_i 
\stackrel{N \to \infty}{\Longrightarrow}
\frac{1}{2\sqrt{\pi}} \vert Z \vert  \ \ \ \ \ \ \ \ ( Z \sim \cN(0,1)).
\end{equation}
The convergence in \eqref{eq:convlaw} is a standard result that
we outline briefly. First of all for every choice of $n, m \in \N$ we
have
\begin{equation}
\label{eq:deltatau}
\left\{ \sum_{i=1}^n \gd_i \, < \, m \right\} \, =\, \left\{ \tau_m > n \right\} ,
\end{equation} 
so that the asymptotic law of the {\sl normalized local time}
of $\tau$ up to $n$, {\sl i.e.} $L(n)n^{-1/2}\sum_{i=1}^n \gd_i$, is directly linked
to the domain of attraction of the random variable $\tau_1$. 
Explicitly, one directly verifies that for $\gl>0$
\begin{equation}
\bE \left[ \left( 1- \exp(-\gl \tau_1)\right)\right] \stackrel{\gl \searrow 0} \sim
2\sqrt{\pi}L(1/\gl) \sqrt{\gl},
\end{equation}
so that, if $a(\cdot)$ is the asymptotic inverse of the regularly varying function $r(\cdot)$, defined by
$r(x):=\sqrt{x}/L(x)$ for  $x>0$, that is $a(r(x))\sim r(a(x))\sim x$
for $x \to \infty$, we have 
\begin{equation}
\lim_{N \to \infty}
\bE \left[ \exp \left( -\gl \tau_N /a(N) \right) \right] \, =\, \exp\left(-2 \sqrt{\pi \gl}\right) \, =\, 
\bE \left[ \exp(-\gl Y)\right],
\end{equation}
where $Y$ is a positive random variable with density $f_Y(y)$ equal to
$y^{-3/2} \exp(-\pi/y)$ (for $y>0$).  On the other hand for $t >0$
by \eqref{eq:deltatau} we have
\begin{equation}
\bP\left( \frac{L(n)}{\sqrt{n}}
\sum_{j=1}^n \gd_j <t\right) \stackrel{n \to \infty}\sim
\bP\left( \tau_{\lfloor t \sqrt{n}/L(n)\rfloor} >n \right).
\end{equation}
Therefore
if we observe that $a(t \sqrt{n}/L(n)) \sim t^2 a( \sqrt{n}/L(n))
\sim t^2 n$, for $n \to \infty$, we directly obtain that
\begin{equation}
\lim_{n \to \infty}
\bP\left( \frac{L(n)}{\sqrt{n}}
\sum_{j=1}^n \gd_j <t\right) \stackrel{n \to \infty}\sim
\bP\left(Y\, >\, \frac 1{t^2} \right).
\end{equation}
By using the (explicit) density  of $Y$, 
one directly verifies that $\bP (Y >  1/t^2 )$ coincides with
$\bP ( \vert Z\vert/\sqrt{2\pi} < t)$ for every $t>0$, that is  
 \eqref{eq:convlaw} is established (recall that in \eqref{eq:convlaw}
 the summation is up to $N/2$).
\medskip

We are therefore left with
proving   \eqref{eq:L1andind}. This  follows by
observing that for $p =3,4, \ldots,q$
\begin{multline}
\label{eq:forL1andind}
\bE \left[ \left \vert \eta_{N,p}- \eta_{N, p-1}\right\vert \right] \, \le \,
\left(\frac{2\pi}{\const}\right)^{p-q} \frac{\bL(N)}{N^{1/2}\tilde \bL
  (N)^{p-2}} \times \\
   \sum_{i_1=0}^{ \lfloor N/2 \rfloor}\sum_{i_2
  =i_1+1}^{\lfloor(\rho+(1/2))N\rfloor }\ldots
\sum_{i_{p-1}=i_{p-2}+1}^{\lfloor ((p-2)\rho+(1/2))N\rfloor } \bE \left[ \gd_{i_1}
  \prod_{r=2}^{p-1} R_{1/2}\left(i_r-i_{r-1}\right) \gd_{i_r}\right]\,
\times
\\
\bE \left[ \left\vert \frac 1{ \tilde\bL (N)}
    \sum_{i_{p}=i_{p-1}+1}^{\lfloor((p-1)\rho+(1/2))N\rfloor }
    R_{1/2}\left(i_p-i_{p-1}\right) \gd_{i_p} \, - \frac
    {\const}{2\pi} \right \vert \, \bigg \vert
  \gd_{i_{p-1}}=1\right]\, ,
\end{multline}
and the same expression holds if $p=2$ but in this case
the external summation is only over $i_1$
and $\prod_{r=2}^{p-1} R_{1/2}\left(i_r-i_{r-1}\right) \gd_{i_r}$ is replaced by $1$.
The bound \eqref{eq:forL1andind} follows from the triangular inequality
and from the renewal property of $\tau$.
Next, note that
\begin{multline}
\label{eq:forL1andind2}
\bE \left[ \left\vert \frac 1{ \tilde \bL (N)}
    \sum_{i_{p}=i_{p-1}+1}^{\lfloor((p-1)\rho+(1/2))N\rfloor }
    R_{1/2}\left(i_p-i_{p-1}\right) \gd_{i_p} \, - \frac
    {\const}{2\pi} \right \vert \, \bigg \vert
  \gd_{i_{p-1}}=1\right]\, =
\\
\bE \left[ \left\vert \frac 1{ \tilde \bL (N)}
    \sum_{i=1}^{\lfloor((p-1)\rho+(1/2))N\rfloor -i_{p-1}} R_{1/2}\left(i\right)
    \gd_{i} \, - \frac {\const}{2\pi} \right \vert \right] \stackrel{N
  \to \infty} \longrightarrow 0\, ,
\end{multline}
uniformly in the choice of $i_{p-1}\in \{ i_{p-2}+1, \ldots, \lfloor
((p-1)\rho+(1/2)N\rfloor \}$.  This is because the summation in
\eqref{eq:forL1andind2} contains at least $[\rho N]$ terms (and no
more than $N$) so that we can apply Lemma~\ref{th:CE}.  The fact that
$\bE \left[ \left \vert \eta_{N,p}- \eta_{N, p-1}\right\vert
\right]=o(1)$ as $N \to \infty$ is therefore a consequence of the
following explicit estimate:
\begin{multline}
  \frac{\bL(N)}{N^{1/2}\tilde \bL (N)^{p-2}} \sum_{i_1=0}^{ \lfloor
    N/2 \rfloor}\sum_{i_2 =i_1+1}^{\lfloor(\rho+(1/2))N\rfloor }\ldots
  \sum_{i_{p-1}=i_{p-2}+1}^{\lfloor ((p-2)\rho+(1/2))N\rfloor } \bE
  \left[ \gd_{i_1} \prod_{r=2}^{p-1} R_{1/2}\left(i_r-i_{r-1}\right)
    \gd_{i_r}\right]\, \le
  \\
  \frac{\bL(N) \const _L ^{-(p-1)}}{N^{1/2}\tilde \bL (N)^{p-2}}
  \sum_{i_1=0}^{ \lfloor N/2 \rfloor}\sum_{i_2
    =i_1+1}^{\lfloor(\rho+(1/2))N\rfloor }\ldots
  \sum_{i_{p-1}=i_{p-2}+1}^{\lfloor((p-2)\rho+(1/2))N\rfloor } R_{1/2}(i_1)
  \prod_{r=2}^{p-1} \left(R_{1/2}\left(i_r-i_{r-1}\right)\right)^2 \\
 \stackrel{N \to \infty}\sim \sqrt2 \const _L ^{-(p-1)}\, ,
\end{multline}
where we have used the definition \eqref{eq:Ltilde} of the slowly varying
function $\tilde \bL(\cdot)$ and the fact that $\int_0^x (y^{1/2}\bL(y))^{-1}
\dd y \stackrel{x \to \infty}\sim 2 x^{1/2}/\bL (x)$.
This completes the proof of Lemma~\ref{th:fromCE}.
\qed

\medskip

\noindent
{\it Proof of Lemma~\ref{th:CE}.}
This is very similar to the proof of Lemma~5.4 in \cite{cf:GLTmarg}
(that, in turn generalizes a result of K.~L.~Chung and P.~Erd\"os
\cite{cf:chungerdos}). We give it in detail in order to clarify the role of the
slowly varying function.

First of all let us remark that 
\begin{equation}
\frac 1{\tilde \bL (N)} \sum_{j=1}^{[\theta N]} R_{1/2}(j) \bE\left[\gd_j\right] \stackrel{N\to \infty}\sim
 \frac{\const \tilde \bL (\theta N)}{2\pi \tilde \bL (N)}  \stackrel{N\to \infty}\sim 
 \frac{\const}{2\pi }\, , 
\end{equation}
where the last asymptotic relation holds uniformly in $\theta$, when $\theta$ lies  in a compact subinterval of $(0, \infty)$. 
The statement is therefore reduced to showing that
the variance of
\begin{equation}
Y_n\, :=\, \sum_{j=1}^{n} R_{1/2}(j) {\gd_j},
\end{equation}
is $o(\tilde L(n)^2)$. 

Let us compute and start by observing that
\begin{equation}
\begin{split}
\text{var}_\bP \left( Y_n \right) \, &=\, \sum_{i,j=1}^n R_{1/2}(i)R_{1/2}(j)\left[
\bE\left[ \gd_i \gd_j\right] - \bE\left[ \gd_i \right]   \bE\left[\gd_j\right]
\right]
\\ 
&=\, 
2\sum_{i=1}^{n-1} \sum_{j=i+1}^n 
R_{1/2}(i)R_{1/2}(j)\left[
\bE\left[ \gd_i \gd_j\right] - \bE\left[ \gd_i \right]   \bE\left[\gd_j\right]
\right]\, 
+\, O(\tilde L(n)) \, \\
&=: \, 2 T_n + O(\tilde L(n)),
\end{split}
\end{equation}
and 
\begin{equation}
\begin{split}
T_n \, &=\, \sum_{i=1}^{n-1} R_{1/2}(i){ \bE\left[ \gd_i \right] }
\left[
\sum_{j=1}^{n-i} R_{1/2}(i+j){ \bE\left[ \gd_j \right] } -
\sum_{j=i+1}^{n} R_{1/2}(j){ \bE\left[ \gd_j \right] } 
\right]
\\
& \le \, 
 \sum_{i=1}^{n-1} R_{1/2}(i){ \bE\left[ \gd_i \right] }
\left[
\sum_{j=1}^{n-i} R_{1/2}(i+j){ \bE\left[ \gd_j \right] } -
\sum_{j=i+1}^{n} R_{1/2}(i+j){ \bE\left[ \gd_j \right] } 
\right]
\\
& \le \, 
 \sum_{i=1}^{n-1} R_{1/2}(i){ \bE\left[ \gd_i \right] }
\sum_{j=1}^{i} R_{1/2}(i+j){ \bE\left[ \gd_j \right] } 
\, \le \, 
 \sum_{i=1}^{n-1} \left(R_{1/2}(i)\right)^2{ \bE\left[ \gd_i \right] }
\sum_{j=1}^{i} { \bE\left[ \gd_j \right] } 
\\ 
& \phantom{movemovemove} \le \, \const _L ^{-2}
 \sum_{i=1}^{n-1} \left(R_{1/2}(i)\right)^3
 \sum_{j=1}^{i} R_{1/2}(j)\stackrel{n \to \infty}\sim
 2\const _L ^{-2} \int_{0}^n \frac1{(1+x) (\bL (x))^4} \dd x\, , 
\end{split}
\end{equation}
where the first three inequalities follow since $R_{1/2}(\cdot)$ is non increasing and the fourth follows from \eqref{eq:Doney-bound}.
The conclusion of the proof follows now from Remark~\ref{rem:auxL}.
\qed

\medskip

\begin{rem}
\label{rem:auxL}
\rm
For $x \to \infty$
\begin{equation}
\int_0^x \frac1{(1+y) (\bL(y))^4} \dd y \, \ll \, \left( \tilde \bL (x) \right)^2,
\end{equation}
with $ \tilde \bL (x)$ defined as in \eqref{eq:Ltilde} with $L(\cdot)$
replaced by $\bL(\cdot)$.
This is a consequence of \eqref{eq:Ltilde-prop} (which of course holds also for $\bL(\cdot)$):
\begin{equation}
\int_0^x \frac1{(1+y) (\bL(y))^4} \dd y \, \ll \, 
\int_0^x \frac1{(1+y) (\bL(y))^2} \tilde \bL (y) \dd y\, \le \, \tilde \bL (x)
\int_0^x \frac1{(1+y) (\bL(y))^2} \dd y \, ,
\end{equation}
and the rightmost term is $\left(  \tilde \bL (x)\right)^2$.
\end{rem}

\section{A general monotonicity result}
\label{sec:monotonicity}

We present now a very general result: we give it in our context but a look at the proof suffices
to see 
that it   holds also under substantially  milder assumptions 
 on the process $\tau$.
\medskip

\begin{proposition}\label{th:monotonicity}
The free energy $\tf(\gb,h)$ is a non-increasing function of $\gb$ on $[0,\infty)$.
Therefore
\begin{itemize}
 \item[(i)] $\gb\mapsto h_c(\gb)$ is a non-decreasing function of $\gb$.
 \item[(ii)] There exists a critical value $\gb_c\in[0,\infty]$ such that $h_c(0)=h_c(\gb)$ if and only if $\gb\le \gb_c$.
\end{itemize}
\end{proposition}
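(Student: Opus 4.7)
The strategy is to prove that for each finite $N$ the function $\Phi_N(\gb) := (1/N)\, \bbE \log Z_{N,\go}$ is non-increasing on $[0,\infty)$; the non-increasing character of $\tf(\gb,h) = \lim_N \Phi_N(\gb)$ then follows since the pointwise limit of non-increasing functions is non-increasing. Since $Z_{N,\go}$ is smooth in $\gb$ and dominated convergence justifies differentiation under the expectation,
\begin{equation*}
N\, \partial_\gb \Phi_N(\gb) \,=\, \sum_{n=1}^N \bbE\bigl[\langle \gd_n\rangle_{N,\go}\,\bigl(\go_n-\psi'(\gb)\bigr)\bigr],
\end{equation*}
where $\psi(\gb):=\log\M(\gb)$ and $\langle\cdot\rangle_{N,\go}$ denotes expectation under $\bP_{N,\go}$. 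I claim each summand is non-positive.

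For the $n$-th summand, split the sum over renewal trajectories entering $\langle\gd_n\rangle_{N,\go}$ according to whether $n\in\tau$: since $\go_n$ enters the Boltzmann weight only through the factor $e^{\gb\go_n\gd_n}$, this yields
\begin{equation*}
\langle\gd_n\rangle_{N,\go} \,=\, \frac{A\, e^{\gb\go_n}}{A\, e^{\gb\go_n}+B},
\end{equation*}
where $A=A(\go_{\neq n}),\, B=B(\go_{\neq n})\ge 0$ are measurable functions of $\go_{\neq n}:=(\go_k)_{k\neq n}$ only. Conditioning on $\go_{\neq n}$ reduces the $n$-th summand to $\bbE_{\go_n}\bigl[\tfrac{Ae^{\gb\go_n}}{Ae^{\gb\go_n}+B}(\go_n-\psi'(\gb))\bigr]$. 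Assuming $A>0$ (otherwise it vanishes) and introducing the tilted law $\bbQ$ with density $\exp(\gb\go-\psi(\gb))$ with respect to $\bbP$ --- under which $\bbE_\bbQ[\go]=\psi'(\gb)$ --- this expectation equals
\begin{equation*}
\M(\gb)\, \mathrm{Cov}_\bbQ\!\Bigl(\go,\,\frac{1}{e^{\gb\go}+B/A}\Bigr).
\end{equation*}
For $\gb\ge 0$ the function $x\mapsto 1/(e^{\gb x}+B/A)$ is non-increasing in $x$, and the classical identity $2\,\mathrm{Cov}_\bbQ(X,F(X))=\bbE_\bbQ[(X-X')(F(X)-F(X'))]\le 0$ (with $X'$ an independent copy, valid for any non-increasing $F$) makes this covariance non-positive. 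The main obstacle of the result lies precisely here: for Gaussian $\go$ a one-line Stein-type integration by parts delivers the sign; this FKG/coupling argument is what allows arbitrary disorder subject only to exponential integrability, in line with the claim of generality.

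Item (i) is immediate from monotonicity of $\tf$ in $\gb$: for $\gb'\le\gb$, $\tf(\gb',h)\ge\tf(\gb,h)$ forces $h_c(\gb')\le h_c(\gb)$. For (ii), set $\gb_c:=\sup\{\gb\ge 0:h_c(\gb)=h_c(0)\}\in[0,\infty]$. By (i) the set $\{\gb:h_c(\gb)=h_c(0)\}$ is an interval of the form $[0,\gb_c]$ or $[0,\gb_c)$ containing $0$, and closedness (hence the full equivalence $h_c(\gb)=h_c(0)\Leftrightarrow\gb\le\gb_c$) follows by a routine continuity argument combining joint continuity of $\tf$ with the monotonicity just proved.
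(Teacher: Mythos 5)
Your proof of the central claim (that $\gb\mapsto\tf(\gb,h)$ is non-increasing) is correct, but it follows a genuinely different route from the paper's. The paper, adapting the Comets--Yoshida argument for directed polymers, writes $\bbE[\partial_\gb\log Z_{N,\go}]$ as a renewal expectation of $\hat\bbE_\tau\bigl[Z_{N,\go}^{-1}\sum_n(\go_n-\mbeta)\gd_n\bigr]$, where $\hat\bbP_\tau$ is the $\tau$-dependent tilted product law on the whole disorder vector, and then applies the multivariate Harris--FKG inequality to the decreasing function $Z_{N,\go}^{-1}$ and the increasing function $\sum_n(\go_n-\mbeta)\gd_n$, whose $\hat\bbP_\tau$-mean vanishes. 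You instead keep the derivative in the form $\sum_n\bbE[\langle\gd_n\rangle_{N,\go}(\go_n-\psi'(\gb))]$, condition on $\go_{\neq n}$, exploit the exact one-variable structure $\langle\gd_n\rangle=Ae^{\gb\go_n}/(Ae^{\gb\go_n}+B)$, and reduce to the one-dimensional Chebyshev correlation inequality $\mathrm{Cov}_\bbQ(\go,G(\go))\le 0$ for non-increasing $G$ under the tilted law $\bbQ$. Both arguments are correlation-inequality arguments valid for arbitrary disorder with exponential moments; yours is somewhat more elementary and self-contained (only the scalar covariance identity with the coupling proof is needed, rather than FKG for product measures), at the price of having to identify the explicit dependence of $\langle\gd_n\rangle$ on $\go_n$, while the paper's version is shorter once Harris--FKG is taken for granted and conditions on $\tau$ rather than on the complementary disorder.

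One caveat on item (ii): your ``routine continuity argument'' does not actually deliver closedness of $S:=\{\gb:h_c(\gb)=h_c(0)\}$ at $\gb_c$. Monotonicity of $\tf(\cdot,h)$ together with joint continuity of $\tf$ is perfectly consistent with $\tf(\gb,h)\searrow 0$ as $\gb\uparrow\gb_c$ for some $h>h_c(0)$, i.e.\ with a jump of $h_c$ at $\gb_c$ and $S=[0,\gb_c)$; passing to the limit in strict inequalities $\tf(\gb_n,h)>0$ only yields $\tf(\gb_c,h)\ge 0$. Be aware, however, that the paper is no more careful here: it dismisses (i) and (ii) as ``trivial consequences'' of the monotonicity of the free energy, so this endpoint subtlety is a shared, and minor, loose end rather than a defect specific to your argument.
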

\medskip

This result is of particular relevance when $\sum_n 1/(n L(n)^2) <\infty$,
that is when for small $\gb$ we have $h_c(\gb)= h_c(0)$
({\sl cf.} \S~\ref{sec:review-rs}): in this case $\gb_c$
is the transition point from the  irrelevant disorder regime to the  relevant one. 
But also in our set-up, in which 
$\sum_n 1/(n L(n)^2) =\infty$, it is  of some use since it implies 
that it is sufficient to prove Theorem~\ref{th:main} for one value of $\gb_0>0$
and the statement holds also for any other value of $\gb_0$
(by accepting, of course,   a worse estimate on the shift of the critical point if one follows the estimates quantitatively, see Remark~\ref{rem:lazy}). 

\begin{proof}
We just need to prove that $\gb\mapsto\tf(\gb,h)$ is a non-increasing function on $[0,\infty)$ as the other points are trivial consequence of this result.
To do so, we prove that $\gb\mapsto \bbE[\log Z_{N,\go}]$ is a non-increasing function of $\gb$, and pass to the limit. 
The proof is the adaptation of an argument  used  in \cite{CY} for directed polymers with bulk disorder  to prove a similar result.

What we will show is 
\begin{equation}
\label{eq:fCY}
 \frac{\partial}{\partial \gb} \bbE\left[  \log Z_{N,\go} \right] \,=\, \bbE\left[ \frac{\partial}{\partial \gb} \log Z_{N,\go} \right]\,\le\, 0.
\end{equation}
The proof of the equality
in \eqref{eq:fCY} is standard and can be easily adapted from \cite[Lemma 3.3]{CY}.
Recall now that $\mbeta:=M'(\gb)/M(\gb)$. We have
\begin{equation}\begin{split}
\bbE\left[ \frac{\partial}{\partial \gb}\log Z_{N,\go}\right]&=\bE \left[\bbE\left[ \frac{1}{Z_{N,\go}}\sum_{n=1}^N(\go_n-\mbeta)\gd_n\exp\left(\sum_{n=1}^N  [\gb\go_n+h-\log M(\gb)]\gd_n\right)\gd_N\right]\right]\\
&=\bE\left[\exp\left(\sum_{n=1}^N h \gd_n\right)\gd_N \hat\bbE_{\tau}\left[Z_{N,\go}^{-1}\sum_{n=1}^N(\go_n-\mbeta)\gd_n\right]\right].
\end{split}\end{equation}
For a fixed trajectory of the renewal, the  probability measure $\hat \bbP_{\tau}$ (recall definition \eqref{eq:ptau}), is a product measure, so that, since
 $Z_{N,\go}^{-1}$ is a decreasing function of $\go$ and 
 $\sum_{n=1}^N(\go_n-\mbeta)\gd_n$ is 
a non-decreasing  function of $\go$, by the Harris--FKG inequality we have
\begin{equation}
\hat\bbE_{\tau}\left[Z_{N,\go}^{-1}\sum_{n=1}^N(\go_n-\mbeta)\gd_n\right]\le \hat\bbE_{\tau}\left[Z_{N,\go}^{-1}\right]\hat\bbE_{\tau}\left[\sum_{n=1}^N(\go_n-\mbeta)\gd_n\right]=0.
\end{equation}

\end{proof}

\section*{Acknowledgments}
This work has been  supported by ANR, grant {\sl POLINTBIO}.
F. L. T. is supported also by the ANR grant LHMSHE.

\end{document}